\title{Register transducers are marble transducers}
\author{Gaëtan Douéneau-Tabot}{IRIF, Université de Paris}{doueneau@irif.fr}{}{}
\author{Emmanuel Filiot}{Université libre de Bruxelles \& F.R.S.-FNRS}{efiliot@ulb.ac.be}{}{}
\author{Paul Gastin}{LSV, ENS Paris-Saclay, CNRS, Universit{\'e}
Paris-Saclay}{paul.gastin@ens-paris-saclay.fr}{https://orcid.org/0000-0002-1313-7722}{}
\authorrunning{G. Douéneau-Tabot and E. Filiot and P. Gastin}
\subjclass{Theory of computation $\rightarrow$ Transducers}
\keywords{streaming string transducer, two-way transducer, marbles, pebbles}
\newtheorem{claim}[theorem]{Claim}
\newtheorem{proposition}[theorem]{\bfseries Proposition}
\newcommand{\dom}{\operatorname{\normalfont{dom}}}
\newcommand{\vide}{\varepsilon}
\newcommand{\lmove}{\triangleleft}
\newcommand{\rmove}{\triangleright}
\newcommand{\lmark}{\vdash}
\newcommand{\rmark}{\dashv}
\newcommand{\ske}{\operatorname{\normalfont \textsf{ske}}}
\newcommand{\beg}{\operatorname{\normalfont \textsf{beg}}}
\newcommand{\fol}{\operatorname{\normalfont \textsf{fol}}}
\newcommand{\regs}{\mf{X}}
\newcommand{\regu}{\mf{U}}
\newcommand{\regp}{\mf{R}}
\newcommand{\oras}{\mf{F}}
\newcommand{\exte}{\mf{f}}
\newcommand{\exteg}{\mf{g}}
\newcommand{\subst}[2]{\mf{S}_{#1}^{#2}}
\newcommand{\trans}{\mc{T}}
\newcommand{\srans}{\mc{S}}
\newcommand{\utrans}{\mc{U}}
\newcommand{\fauto}[1]{\widetilde{#1}}
\newcommand{\auto}{\mc{A}}
\newcommand{\inia}{\alpha}
\newcommand{\fina}{\beta}
\newcommand{\weia}{\mu}
\newcommand{\staa}{\mc{Q}}
\newcommand{\val}{\operatorname{ \normalfont Value}}
\newcommand{\PTIME}{\operatorname{\normalfont{\textsf{PTIME}}}}
\newcommand{\EXP}{\operatorname{\normalfont{\textsf{EXPTIME}}}}
\newcommand{\graph}{\mf{G}}
\newcommand{\mul}{\operatorname{\textsf{mul}}}
\newcommand{\mar}{\operatorname{\normalfont \textsf{marked}}}
\newcommand{\first}{\operatorname{\normalfont first}}
\newcommand{\nnext}{\operatorname{\normalfont next}}
\newcommand{\pow}{\operatorname{\normalfont \textsf{exp}}}
\newcommand{\powo}{\operatorname{\normalfont \textsf{pow}}}
\newcommand{\drop}{\operatorname{\normalfont{\textsf{drop}}}}
\newcommand{\lift}{\operatorname{\normalfont{\textsf{lift}}}}
\newcommand{\sq}{\operatorname{\textsf{square}}}
\newcommand{\mirror}{\operatorname{\textsf{reverse}}}
\newcommand{\SSTF}{{\textsf{SST-F}}}
\newcommand{\NSSTF}{{\textsf{NSST-F}}}
\newcommand{\SST}{{\textsf{SST}}}
\newcommand{\SSTs}{{\textsf{SSTs}}}
\newcommand{\mb}[1]{\mathbb{#1}}
\newcommand{\mc}[1]{\mathcal{#1}}
\newcommand{\mf}[1]{\mathfrak{#1}}
\begin{document}

\maketitle

\begin{abstract} 
  Deterministic two-way transducers define the class of regular functions from
  words to words.  Alur and Cern\'y introduced an equivalent model of 
  transducers with registers called
  copyless streaming string transducers.  In this paper, we drop the "copyless"
  restriction on these machines and show that they are equivalent to two-way
  transducers enhanced with the ability to drop marks, named "marbles",
  on the input.  We relate the maximal number of marbles used with the amount of
  register copies performed by the streaming string transducer.  Finally, we
  show that the class membership problems associated with these models are
  decidable.  Our results can be interpreted in terms of program optimization
  for simple recursive and iterative programs.
\end{abstract}



\section{Introduction}

Regular languages have been a cornerstone of theoretical computer science since the 1950's. They can be described by several equivalent models such as deterministic, non-deterministic, or two-way (the reading head can move in two directions) finite automata \cite{shepherdson1959reduction}.

A natural extension consists in adding an output mechanism to finite automata.
Such machines, called \emph{transducers}, describe functions from words to words
(or relations when non-deterministic).  In this case, the landscape generally
becomes more complex, as noted in 1967 by D. Scott: \flqq{} the functions
computed by the various machines are more important - or at least more basic -
than the sets accepted by these device \frqq{} \cite{scott1967some}.
Furthermore, transducers provide a natural way to model simple programs that
produce outputs.

\subparagraph*{Regular functions and copyless register transducers.} The
particular model of \emph{two-way transducer} consists in a two-way automaton
enhanced with an output function.  It describes the class of \emph{regular
functions} which has been intensively studied for its natural properties:
closure under composition \cite{chytil1977serial}, logical characterization by
monadic second-order transductions \cite{engelfriet2001mso}, decidable
equivalence problem \cite{gurari1982equivalence}, etc.

In \cite{alur2010expressiveness}, the equivalent model of \emph{copyless
streaming string transducer} ($\SST$) is described.  This machine processes its
input in a one-way fashion, while storing pieces of their output in a finite set
of registers: it is at the same time simpler (since it reads the input only
once) and more complex (since it needs registers) than a two-way transducer.
Registers are updated by simple concatenation operations.  However, the content
of a register can never be duplicated ("copyless"), which allows one to
implement the model efficiently for a streaming use.

\subparagraph*{Copyful register transducers.} Regular functions remain quite
limited in terms of expressiveness, since the size of the output can be at most
linear in the input's.  In this paper, we study the class of functions computed
by copyful $\SSTs$, i.e. register transducers that can duplicate their registers.
With this model, it becomes possible to produce outputs that have a polynomial,
or even exponential, size in the input's.  Meanwhile, it preserves many "good
properties" of regular functions, such as decidability of the equivalence
problem \cite{filiot2017copyful}.

\subparagraph*{Marble transducers.} Our first objective is to extend the
aforementioned correspondence between copyless $\SSTs$ and two-way transducers,
by providing a model of transducer without registers that is equivalent to
copyful $\SSTs$.  For this, we define \emph{marble transducers} (introduced for
trees in \cite{engelfriet1999trips}) and show equivalence.  This model consists
in a two-way transducer that can drop several marks ("marbles") on its input,
following a stack discipline.  Indeed, new marbles can only be dropped on the
left of the positions already marked.  Informally, our result shows that copyful
$\SSTs$ correspond to some recursive algorithms (hence the stack).

A very natural way to restrict the power of marble transducers is to bound the number of marks that can be used. We define \emph{$k$-marble transducer} that can use at most $k$ marks. Intuitively, it corresponds to iterative algorithms with "for" loops, such that the maximal depth of nested loops is $k+1$. In particular, $0$-marble transducers are exactly two-way transducers (since they use no marbles). Whereas marble transducers in general can have an exponential execution time, a $k$-marble transducer runs in polynomial time, more precisely $\mc{O}(n^{k+1})$ when $n$ is the input's length. Hence, it produces outputs of size $\mc{O}(n^{k+1})$.

As a second main result, we show that \emph{$k$-marble transducers} are
equivalent to a model of \emph{$k$-layered \SSTs}, i.e. $\SSTs$ with hierarchical
restrictions on their copies.  In particular for $k=0$, we recover the
correspondence between two-way transducers and copyless $\SSTs$.

\subparagraph*{Optimization and class membership problems.} As evoked above, our models of marble transducers have at most an exponential complexity (or "execution time"), but it becomes polynomial if we restrict the number of marks used. In practice, a natural question is that of optimization: can we transform an exponential algorithm in a polynomial equivalent one? Can we reach the smallest possible complexity? Having a tool to optimize programs is of foremost interest since it allows to write naive algorithms without worrying about the complexity. Due to well-known undecidability statements, optimizing any algorithm is hopeless in theory, thus having results for a "regular" kernel is already interesting. 

From a theoretical point of view, the optimization problem is known as
(effective) \emph{class membership problem}.  It instantiates as follows: given
a function computed by a marble transducer, can it be computed by a $k$-marble
transducer?  An easy lower bound is given by the size of the output, since for
instance we cannot produce a string of size $\Omega(n^2)$ with a two-way
transducer.  As shown in our third main result, it is in fact a sufficient
criterion to decide membership: a function from our class is computable with $k$
marbles if and only if it grows in $\mc{O}(n^{k+1})$ (and this property is
decidable).  This result shows the robustness of our $k$-marble model, since
a simple syntactical restriction is sufficient to describe a semantical
property.  Its proof is the most involved of this paper; it uses the
correspondence with $\SSTs$.

Similar optimization results have recently been obtained in
\cite{lhote2020pebble} for the class of \emph{polyregular functions}, defined
using $k$-pebble transducers \cite{bojanczyk2018polyregular} (an extension of
$k$-marble).  Interestingly, their conclusion is very similar to ours, that is:
an output of size $\mc{O}(n^k)$ can always be produced using $k$ nested loops.
We shall discuss in conclusion how our results both refine and extend theirs.
Contrary to us, the equivalence problem is an open problem for their model.
Furthermore, pebble transducers have never been related to a class of streaming
algorithms, contrary to what we show for marbles.

\subparagraph*{Outline.} After recalling in Section \ref{sec:prelim} the basic
definitions of two-way transducers and $\SSTs$, we present in Section
\ref{sec:marbles} the model of marble transducer and show equivalence.  We then
study in Section \ref{sec:bounded} the case of $k$-marble transducers and relate
them to specific $\SSTs$.  Finally, we solve in Section \ref{sec:membership} the
class membership problems associated with these models.  Due to space
constraints, several proofs are only sketched; we chose to focus on the proofs
of the last section since they describe an algorithm for program optimization.

\section{Preliminaries}

\label{sec:prelim}

We denote by $\mb{N}$ the set of nonnegative integers. Capital letters $A$ and $B$ are used to denote alphabets, i.e. finite sets of letters. If $w \in A^*$ is a word, let $|w| \in\mb{N}$ be its length, and for $1 \le m \le |w|$ let $w[m]$ be its $m$-th letter. The empty word is denoted $\vide$. If $1 \le m \le n \le |w|$, let $w[m{:}n] = w[m]w[m+1] \cdots w[n]$. We assume that the reader is familiar with the basics of automata theory, and in particular the notions of one-way and two-way deterministic automata (see e.g. \cite{shepherdson1959reduction}).

\subparagraph*{Two-way transducers.} A deterministic two-way transducer is a deterministic two-way automaton enhanced with the ability to produce outputs along its run.  The class of functions described by these machines is known as "regular functions" \cite{chytil1977serial, engelfriet2001mso}.

\begin{definition} A \emph{(deterministic) two-way transducer} $\trans = (A,B,Q,q_0, \delta, \lambda,F)$ consists of:
\begin{itemize}
\item an input alphabet $A$, an output alphabet $B$;
\item a finite set of states $Q$, with an initial state $q_0 \in Q$  and a set of final states $F \subseteq Q$ ;
\item a (partial) transition function $\delta: Q \times (A\uplus\{  \lmark,\rmark) \rightarrow Q \times \{\lmove, \rmove\}$;
\item a (partial) output function $\lambda: Q \times (A\uplus\{ \lmark,\rmark\} ) \rightarrow B^*$ with same domain as $\delta$.
\end{itemize}
\end{definition}

When given as input a word $w \in A^*$, the two-way transducer disposes of a read-only input tape containing $\lmark w \rmark$. The marks $\lmark$ and $\rmark$ are used to detect the borders of the tape, by convention we denote them as positions $0$ and $|w|+1$ of $w$.

Formally, a \emph{configuration} over  $\lmark w \rmark$ is a tuple $(q,m)$ where $q \in Q$ is the current state and $0 \le m \le |w|+1$ is the position of the reading head. The \emph{transition relation} $\rightarrow$ is defined as follows. Given a configuration $(q,m)$, let $(q',\star):= \delta(q,w[m])$. Then $(q, m) \rightarrow (q', m')$ whenever either $\star = \lmove$ and  $m' = m-1$ (move left), or $\star = \rmove$ and $m' = m+1$ (move right), with $0 \le m' \le |w|+1$. A \emph{run} is a sequence of configurations following $\rightarrow$. Accepting runs are those that begin in $(q_0, 0)$ and end in a configuration of the form $(q, |w|+1)$ with $q \in F$.

The (partial) function $f : A^* \rightarrow B^*$ computed by the machine is defined as follows. If there exists an accepting run on $\lmark w \rmark$, then it is unique and $f(w)$ is the concatenation of all  the $\lambda(q,w[m])$ along the transitions of this run. Otherwise $f(w)$ is undefined.

\begin{example} \label{ex:mirror} Let $\mirror : A^* \rightarrow A^*$ be the function that maps a word $abac$ to its mirror image $caba$. It can be performed by a two-way transducer that first goes to the right symbol $\rmark$, and then reads $w$ from right to left while outputting the letters.
\end{example}

\subparagraph*{Streaming string transducers.} Informally, a \emph{streaming
string transducer} \cite{alur2010expressiveness} is a one-way deterministic
automaton with a finite set $\regs$ of registers that store strings over the
output alphabet $B$.  These registers are modified using \emph{substitutions},
i.e. mappings $\regs \rightarrow (B \uplus \regs)^*$.  We denote by
$\subst{\regs}{B}$ the set of these substitutions.  They can be extended
morphically from $(B \uplus \regs)^*$ to $ (B \uplus \regs)^*$ by preserving the
elements of $B$.  As explained in Example \ref{ex:subst}, they can be composed
by setting $(s_1 \circ s_2)(x) := s_1 (s_2(x))$ for $x \in \regs$.

\begin{example} \label{ex:subst} Let $\regs = \{x,y\}$ and $B = \{b\}$. Consider the substitutions $s_1:= x \mapsto b, y \mapsto bxyb$ and $s_2:= x \mapsto xb, y \mapsto xy$, then $ s_1 \circ s_2 (x) = s_1(xb) = bb$ and $ s_1 \circ s_2 (y) = s_1(xy) = bbxyb$.
\end{example}

\begin{definition}A \emph{streaming string transducer} ($\SST$) $\trans = (A,B, Q,\regs,q_0, \iota, \delta, \lambda,F)$ is:
\begin{itemize}

\item an input alphabet $A$ and an output alphabet $B$;

\item a finite set of states $Q$ with an initial state $q_0 \in Q$;

\item a finite set $\regs$ of registers;

\item an initial function $\iota : \regs \rightarrow B^*$;

\item a (partial) transition function $\delta : Q \times A \rightarrow Q$;

\item a (partial) register update function $\lambda: Q \times A \rightarrow \subst{\regs}{B}$  with same domain as $\delta$;

\item a (partial) output function  $F: Q \rightarrow (\regs \cup B)^*$.

\end{itemize}
\end{definition}

This machine defines a (partial) function $f : A^* \rightarrow B^*$ as follows. Let us fix $w \in A^*$. If there is no accepting run of the \emph{one-way} automaton $(A, Q,q_0, \delta, \dom(F))$ over $w$, then $f(w)$ is undefined. Otherwise, let $q_m:= \delta(q_0,w[1{:}m])$ be the $m$-th state of this run. We define for $0 \le m \le |w|$, $\trans^{w[1{:}m]} \in \subst{\regs}{B}$ ("the values of the registers after reading $w[1{:}m]$") as follows:
\begin{itemize}

\item $\trans^{w[1:0]} (x) = \iota(x)$ for all $x \in \regs$; \label{mark:valu}

\item for $1 \le m \le |w|$, $\trans^{w[1{:}m]} := \trans^{w[1{:}(m-1)]}\circ\lambda(q_m, w[m])$. This formula e.g. means that if $\trans^{w[1{:}(m-1)]}(x) = ab$ and $\lambda(q_m, w[m])(x) = xx$, then $\trans^{w[1{:}m]}(x) = abab$.

\end{itemize}
In this case, we set $f(w) := \trans^{w}(F(q_{|w|})) \in B^*$. In other words, we combine the final values of the registers following the output function.

\begin{example} \label{ex:mirror:sst} 
  The $\mirror$ of Example \ref{ex:mirror} can be computed by an $\SST$ with
  one state and one register $x$.  When seeing a letter $a$, the $\SST$ updates
  $x\mapsto ax$ ($a$ is added in front of $x$).
\end{example}

\begin{example} \label{ex:pow} Consider the function $\pow:a^n \mapsto a^{2^n}$. It is computed by an $\SST$ with one register $x$ initialized to $a$ and updated $x \mapsto xx$ at each transition.
\end{example}

The function $\pow$ of Example \ref{ex:pow} cannot be computed by a
deterministic two-way transducer.  Indeed, a two-way transducer computing a
function $f$ has only $|Q|(|w|+2)$ possible configurations on input $w$,
therefore we must have $|f(w)| = \mc{O}(|w|)$.

In order to make two-way transducers and $\SSTs$ coincide, the solution of
\cite{alur2010expressiveness} is to forbid duplications of registers.  A
substitution $\sigma \in \subst{\regs}{B}$ is said to be \emph{copyless} if each
register $x \in \regs$ appears at most once in the whole set of words
$\{\sigma(x)\mid x \in \regs\}$.  The substitution $s_1$ of Example
\ref{ex:subst} is copyless whereas $s_2$ is not.  An $\SST$ is said to be copyless
whenever it uses only copyless substitutions for the $\lambda(q,a)$; the $\SST$
of Example \ref{ex:mirror:sst} is so.

\begin{theorem}[\cite{alur2010expressiveness, dartois2017reversible}] \label{theo:alur} 
  Two-way transducers and copyless $\SSTs$ describe the same class of functions
  ("regular functions").  The right to left conversion is effective in $\PTIME$,
  and the converse one in $\EXP$.
\end{theorem}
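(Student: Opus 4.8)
The plan is to prove the two inclusions separately, pairing each conversion with its announced complexity. In the statement the left-hand model is the two-way transducer and the right-hand one the copyless $\SST$, so the right-to-left ($\PTIME$) direction is the translation of a copyless $\SST$ into a two-way transducer, and the converse ($\EXP$) direction turns a two-way transducer into a copyless $\SST$.

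For the direction from a copyless $\SST$ $\trans$ into a two-way transducer, I would exploit the structure that copylessness imposes on register updates. First I would recover, by auxiliary two-way passes, the sequence of states $q_0, \dots, q_{|w|}$ of the underlying one-way automaton, so that at any position $m$ the machine can recompute the substitution $\lambda(q_m, w[m])$ applied there. The key observation is that copylessness makes the flow of registers non-duplicating: the content held by a register after reading $w[1{:}(m-1)]$ is copied into at most one register after reading $w[1{:}m]$, so the way the final output $F(q_{|w|})$ is assembled from the fragments emitted by the successive substitutions is described by a forest. The two-way transducer then produces $f(w)$ by a depth-first, left-to-right traversal of this forest: its finite control remembers which register occurrence it is currently expanding while the head moves back and forth to relocate the input position that emitted each required fragment. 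Copylessness guarantees that every fragment is visited exactly once, so the traversal is finite and outputs the letters in the right order; as the construction only manipulates objects of polynomial size, it is effective in $\PTIME$.

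For the converse direction I would use the classical Shepherdson-style abstraction by transition profiles. To each prefix $u = w[1{:}m]$ one associates its profile, a summary recording, for each state in which the head may enter $u$ through its left or right border, the state and border through which the unique (since $\trans$ is deterministic) maximal run confined to $u$ leaves it. The $\SST$ scans $w$ from left to right keeping the current profile in its finite control; as there are at most $|Q|^{\mc{O}(|Q|)}$ profiles, this produces the announced exponential blow-up. For every entry of the profile the $\SST$ stores, in a dedicated register, the output emitted by $\trans$ along the corresponding confined run, and reading one more letter updates the profile and the registers together by composing the stored behaviour of $w[1{:}(m-1)]$ with the one-letter transition on $w[m]$.

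The step I expect to be the main obstacle is checking that these register updates are \emph{copyless}. Each new confined run over $w[1{:}m]$ is built by alternating passes over the fresh letter $w[m]$, which emit new output, with maximal passes through $w[1{:}(m-1)]$, whose output is already stored. I would argue that no confined run of $w[1{:}(m-1)]$ is reused across the whole updated profile: since a deterministic two-way run never repeats a configuration, the full run enters each entry--exit configuration of the prefix at most once, so each register occurs at most once in the combined update. The same non-repetition property makes the stored outputs well defined, the confined runs being finite simple paths. Dually, the delicate point of the $\PTIME$ direction is to organise the traversal so that the head can always relocate the next fragment using only finitely many bits of control; both difficulties are ultimately consequences of the acyclicity enforced by determinism and copylessness.
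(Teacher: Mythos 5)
Your first direction (copyless $\SST$ to two-way transducer, in $\PTIME$) is sound, and it is essentially the construction the paper itself gives in the appendix for the $k=0$ case of Theorem \ref{theo:kmtksst}: a depth-first, recursive expansion of the register flow, where copylessness---each register occurs at most once in the whole set $\{\lambda(q,a)(x) \mid x \in \regs\}$---is exactly what lets the head, upon moving back right to position $m$, recover \emph{which} occurrence it was expanding without keeping any stack. Together with lookaround removal for computing $\delta(q_0,w[1{:}(m-1)])$ on the fly (your ``auxiliary two-way passes'', which the paper also delegates to the literature), this matches the intended $\PTIME$ construction.

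The genuine gap is in the converse direction, in the step you yourself flagged as the main obstacle. Your argument---``a deterministic two-way run never repeats a configuration, so each register occurs at most once in the combined update''---proves less than copylessness requires. Determinism does guarantee that a \emph{single} confined run over $w[1{:}m]$ cannot use the same stored run of $w[1{:}(m-1)]$ twice (that would repeat a configuration and loop forever). But copylessness is a condition on the update as a whole, across \emph{all} profile entries, and distinct entries can share sub-runs: if $\delta(q,w[m]) = (p,\lmove)$ and $\delta(q',w[m]) = (p,\lmove)$ for distinct states $q \neq q'$, the new right-to-right runs entering at $q$ and at $q'$ both enter the prefix in state $p$ and---precisely \emph{because} of determinism---coincide from that point on, so the register storing the output of the prefix run from $p$ occurs in both updated entries. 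Determinism causes merging rather than preventing it; the naive profile construction is therefore copyful, not copyless (it is essentially the paper's Appendix~A construction of an $\SST$ from a marble transducer, where such copies are allowed and even needed). Repairing this is where the cited works spend their effort: either decompose the stored runs at their merge points---by determinism the right-to-right runs form a forest under merging, so one can keep one register per edge of this forest and recover each entry's output as a concatenation along a path, which is copyless---or first make the two-way transducer reversible (co-deterministic as well as deterministic), so that runs from distinct states can never merge and the naive construction becomes copyless; in both cases the exponential cost of the conversion lies in that extra step, not merely in the number of transition profiles.
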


\begin{remark}
For the complexities, the "size" of the machines is that of a reasonable representation. For a two-way transducer, it is roughly the total size of the outputs that label its transitions. For an $\SST$, it is the total size of its substitutions.
\end{remark}

\section{Marble transducers and streaming string transducers}

\label{sec:marbles}

As evoked in the introduction, our first goal is to extend Theorem
\ref{theo:alur} by describing a machine without registers that captures the
expressiveness of $\SSTs$ with copies.  For this purpose, we shall use a variant
of two-way transducers that can drop/lift several marks on their input.
However, the use of marks has to be strongly restricted so that the machine is
not too expressive (see e.g. \cite{ibarra1971characterizations}).  The model we
propose here, named \emph{marble transducer} after \cite{engelfriet1999trips},
can drop marks ("marbles") of different colors and the reading head has to stay 
on the left of marbles:
a stack of marks is stored on the 
input and if the machine wants to move forward from a position where there is a
marble, it first has to remove it.

\begin{definition} \label{def:manymarbles}

A \emph{(deterministic) marble transducer} $\trans = (A,B,Q,C,q_0,\delta,\lambda,F)$ consists of:

\item

\begin{itemize}

\item an input alphabet $A$;

\item a finite set of states $Q$ with an initial state $q_0 \in Q$ and a set $F \subseteq Q$ of final states;

\item a finite set of marble colors $C$;

\item a transition function $\delta: Q \times (A \uplus \{\lmark, \rmark\}) \times (C \uplus \{\varnothing\}) \rightarrow Q \times (\{\lmove, \rmove, \lift\} \uplus \{\drop_c\mid c \in C\})$ such that $\forall q\in Q, a \in A, c \in C$ we have $\delta(q,a,c) \in Q \times \{ \lmove,\lift \}$ (we cannot move right nor drop another marble when we see a marble).

\item an output function $\lambda: Q \times (A \uplus \{\lmark, \rmark\}) \times (C \uplus \{\varnothing\})  \rightarrow B^*$ with same domain as $\delta$.

\end{itemize}

\end{definition}

As for two-way transducers, the symbols $\lmark$ and $\rmark$ are used to denote
the borders of the input.  A \emph{configuration} over $\lmark w \rmark$ is a
tuple $(q,m, \pi)$ where $q \in Q$ is the current state, $m$ is the position of
the reading head, and $\pi = (c_{\ell},m_{\ell}) \cdots (c_1,m_1)$ is the stack
of the positions and colors of the $\ell$ marbles dropped (hence $\ell\geq0$ and
$0 \le m \le m_{\ell} < \cdots < m_1 \le |w|+1$ and $c_i \in C$).  An example of
configuration is depicted in Figure \ref{fig:config} below.

\begin{figure}[h!]

\begin{center}
\begin{tikzpicture}[scale=1]
	\node (in) at (-2.3,0) [right]{\small Input word};
	\fill[fill = blue,even odd rule] (2,0.5) circle (0.2);
	\fill[fill = red,even odd rule] (4,0.5) circle (0.2);
	\fill[fill = blue,even odd rule] (5,0.5) circle (0.2);
	\node[above] (in) at (0,-0.25) []{  $\lmark$};
	\node[above] (in) at (1,-0.25) []{  $b$};
	\node[above] (in) at (2,-0.25) []{  $a$};
	\node[above] (in) at (3,-0.25) []{  $b$};
	\node[above] (in) at (4,-0.25) []{  $b$};
	\node[above] (in) at (5,-0.25) []{  $b$};
	\node[above] (in) at (6,-0.25) []{  $a$};
	\node[above] (in) at (7,-0.25) []{  $\rmark$};
	\draw[](1.5,-0.5) rectangle (2.5,1);
        \draw[-](2,-0.5) to[out= -90, in = 90] (4,-1.25);
        	\node (in) at (2,1.25) []{\small Reading head};
	\node (in) at (3.9,-1.4) []{\small Control state $q$};
       \end{tikzpicture}
\end{center}

\vspace*{-0.3cm}
\caption{\label{fig:config} Configuration $(q, 2, (\textcolor{blue}{\bullet}, 2) (\textcolor{red}{\bullet}, 5) (\textcolor{blue}{\bullet}, 6))$ of a marble transducer over $babbba$. Note that allowed transitions starting from this configuration are either $\lift$ or $\lmove$.}

\end{figure}

The \label{mark:transrel} \emph{transition relation} $\rightarrow$ of $\trans$ is defined as follows. Given a configuration $(q,m, \pi)$, let $k := c$ if $\pi = (c,m) \cdots$ (marble $c$ in position $m$) and $k := \varnothing$ otherwise (no marble in $m$). Let $(q',\star):= \delta(q,w[m], k)$. Then $(q, m, \pi) \rightarrow (q', m', \pi')$ whenever one of the following holds:
\begin{itemize}

\item move left: $\star = \lmove$,  $m' = m-1 \ge 0$ and $\pi = \pi'$;

\item move right: $\star = \rmove$, $m' = m+1 \le |w|+1$ and $\pi = \pi'$ (only when $k = \varnothing$);

\item lift a pebble:  $\star = \lift$, $m = m'$ and $\pi = (c,m) \pi' $  (only when $k \neq \varnothing$);

\item drop a pebble: $\star = \drop_c$, $m = m'$, $\pi' = (c,m) \pi $  (only when $k = \varnothing$).

\end{itemize}
The notion of \emph{run} is defined as usual with $\rightarrow$. Accepting runs are finite runs that begin in $(q_0, 0, \varepsilon)$ and end in a configuration of the form $(q, |w|+1, \varepsilon)$ with $q \in F$.

The partial function $f : A^* \rightarrow B^*$ computed by the machine is defined as follows. If there exists an accepting run on $\lmark w \rmark$, then it is unique and $f(w)$ is the concatenation of all  the $\lambda(q,w[m], k)$ along the transitions of this run. Otherwise $f(w)$ is undefined.

\begin{remark} In case no marbles are used, the machine is simply a two-way transducer.
\end{remark}

As they can have an exponential number of configurations, marble transducers can
produce outputs of exponential size, like $\SSTs$ (see Example \ref{ex:expo:mt}).

\begin{example} \label{ex:expo:mt} 
  The function $\pow : a^n \mapsto a^{2^n}$ of Example \ref{ex:pow} is computed
  by a marble transducer with $C = \{0,1 \}$.  The idea is to use marbles to
  count in binary on the input $a^n$.  We first write $0^n$ on the input, then
  we increment it to $1{0}^{n-1}$, then $010^{n-2}$, $110^{n-2}$, $\dots$, $1^n$
  (there are $2^n$ numbers).  These increments can be done while preserving the
  stack discipline of the marbles: we move right and lift while we see
  $1$'s, when a $0$ is met we replace it by $1$, then we move left dropping
  $0$'s.  Initially and after each increment, we output an $a$ to produce $a^{2^n}$.
\end{example}

We are now ready to state our first generalization of Theorem \ref{theo:alur}. For the complexity, the size of a marble transducer is that of its output labels plus its number of marbles.

\begin{theorem} \label{theo:sstmt}
  Marble transducers and $\SSTs$ describe the same class of functions.  The
  right to left conversion is effective in $\PTIME$, and the converse one in
  $\EXP$.
\end{theorem}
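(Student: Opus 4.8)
The plan is to establish the two inclusions separately, matching the announced complexities. For the \emph{right to left} direction (turning an $\SST$ into a marble transducer, in $\PTIME$), I would have the marble transducer perform a depth-first traversal of the recursive evaluation of the $\SST$'s registers. Recall that $\trans^{w[1:m]}(x) = \trans^{w[1:(m-1)]}(\lambda(q_m,w[m])(x))$, so evaluating register $x$ at position $m$ amounts to outputting the constant blocks of $\lambda(q_m,w[m])(x)$ while recursively evaluating, at position $m-1$, each register occurring in it; the base case $m=0$ outputs $\iota(x)$. I would let the marble transducer realise this recursion with its head on the current position and the marble stack recording the call stack: to descend into a sub-call it drops a marble at the current position $m$ (whose colour records $x$, the index of the pending occurrence, and the state $q_m$), moves left to $m-1$, and processes the sub-call; on return it moves back right onto its marble, lifts it, and resumes. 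Since every recursive edge strictly decreases the position, new marbles are always dropped to the left of the existing ones, so the stack discipline of Definition~\ref{def:manymarbles} holds for free. The only subtlety is recovering $q_m$, which I would compute upon first reaching a fresh frame by a left-to-right scan of the prefix $[0,m]$ — marble-free at that moment since all current marbles sit at ancestor positions $>m$ — and then cache in the marble colour. This yields a marble transducer of polynomial size, computable in $\PTIME$.

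For the \emph{converse} direction (from a marble transducer to an $\SST$, in $\EXP$), I would summarise, for each prefix $w[1:m]$, the behaviour of all sub-computations confined to $[0,m]$. The key observation is that whenever the head enters $[0,m]$ by crossing the boundary $m\,|\,m{+}1$ leftwards, this region is necessarily marble-free, as all marbles then lie at positions $>m$. I would therefore define a profile $h_m\colon Q \to (Q\times B^*)\cup\{\bot\}$, where $h_m(q)$ is the (exit state, produced output) obtained by starting the head at position $m$ in state $q$ with $[0,m]$ empty of marbles and running until it next crosses back out to the right (and $\bot$ if it loops or blocks). Crucially $h_m$ is computable from $h_{m-1}$ and the letter $w[m]$: reaching state $q$ at $m$, the machine runs a deterministic \emph{interface walk} whose configurations are pairs (current state, colour of the marble currently at $m$, if any); each $\lmove$ from $m$ enters $[0,m-1]$ marble-free and is resolved by one call to $h_{m-1}$, each $\drop_c$ or $\lift$ toggles the marble at $m$, and a $\rmove$ from a marble-free $m$ ends the walk. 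Because at most one marble may sit at $m$ at a time and no configuration can repeat without causing a loop, the walk has length at most $|Q|\,(|C|+1)$, hence issues only boundedly many calls to $h_{m-1}$.

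This recursion is exactly what an $\SST$ can maintain while reading left to right. I would store the finite ``shape'' of $h_m$ (the map $Q\to Q\cup\{\bot\}$) in the $\SST$'s state and the output strings $h_m(q)$ in registers $\{X_q\}_{q\in Q}$. Reading $w[m]$, the new shape and the register update are obtained by simulating the interface walk: each $X_q$ is reassigned to the concatenation of the constant blocks emitted by the walk and the registers $X_{q'}$ corresponding to its boundedly many calls to $h_{m-1}$ — a copyful but bounded, hence polynomial-size, substitution. An additional register $Z$ tracks the output of the initial sweep from $(q_0,0)$ up to the first exit past position $m$, updated from its previous value and the interface walk at $m$. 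Finally, the output function $F$ handles the right border: given the final shape it simulates one further interface walk at position $|w|{+}1$ (reading $\rmark$, with descents resolved through the registers $X_q$) and appends the result to $Z$, yielding the accepting output. The number of reachable shapes is at most $(|Q|+1)^{|Q|}$, which is the sole source of the single exponential blow-up; registers and substitutions stay polynomial, so the whole construction runs in $\EXP$.

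The main obstacle is the converse direction, and specifically showing that the profile recursion is both \emph{sound} and \emph{finitely presented}. The cleanest lever is the marble-freeness invariant, which makes the ``right-entry'' behaviour depend only on the prefix and reduces level-$m$ sub-computations to boundedly many level-$(m-1)$ calls; I would need to argue this length bound carefully (a repeated interface configuration certifies non-termination) to guarantee that the register substitutions stay copyful-but-bounded rather than growing with the input, and to match the $\SST$ domain with the marble transducer's exactly through the $\bot$ cases. By contrast, handling the left start via the sweep register $Z$ and the right border $\rmark$ via the output function are comparatively routine once this core recursion is in place.
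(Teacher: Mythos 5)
Your proposal is correct and takes essentially the same route as the paper in both directions: your ``interface walk'' profiles $h_m$ (with the marble-freeness invariant and the shape in the state, outputs $h_m(q)$ in registers $X_q$, and the sweep register $Z$) are exactly the paper's crossing-sequence construction with $\nnext_m$, $\first_m$ and the registers $\lambda_{\nnext_m(q)}$, $\lambda_{\first_m}$, while your $\SST$-to-marble direction is the paper's recursive evaluation of register values with marbles encoding the call stack and a prefix rescan to recover the one-way state. The only cosmetic differences are that the paper computes the walk as a least fixed point (saturation) rather than via your $|Q|(|C|+1)$ cycle-detection bound, and recovers the state $q=\delta(q_0,w[1{:}(m-1)])$ with a dedicated marble $\bullet$ instead of using the ancestor marble as scan boundary.
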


\begin{proof}[Proof sketch.]  Let $\trans = (A,B,Q, C,q_0,\delta, \lambda,F)$ be
a marble transducer.  We simulate it with an $\SST$ by adapting the classical
reduction from two-way automata to one-way automata via \emph{crossing
sequences} \cite{shepherdson1959reduction}.  When in position $m$ of input $w
\in A^*$, the $\SST$ keeps track of the right-to-right runs of the marble
transducer on the prefix $w[1{:}m]$.  This abstraction is updated at each new
letter by considering the transitions it induces.  Due to the presence of
marbles, the same right-to-right run can be executed multiple times, but with
different stack of marbles (thus avoiding looping behaviors).  These multiple
similar executions are handled using copies in the $\SST$ model.  From $\SSTs$ to
marble transducers, we execute a recursive algorithm to compute the contents of
the registers, and implement it with a marble transducer using the marbles to
code the stack of calls (recursive calls are done from right to left, which
corresponds to the orientation of the marble stack).
\end{proof}

\begin{remark} Considering the domains, we note that marble automata
(transducers without the output) recognize exactly regular languages.  Indeed an
$\SST$ is easily seen to have a regular domain, since it is an extended one-way
automaton.  See \cite{engelfriet1999trips} for another proof of this fact.

\end{remark}

\section{Bounded number of marbles}

\label{sec:bounded}

A natural restriction of our marble transducers is to bound the number of marbles that can be simultaneously present in the stack. Indeed, if a machine uses at most $k$ marbles, it has $\mc{O}(|w|^{k+1})$ possible configurations on input $w$. As a consequence, \emph{it performs its computation in polynomial time}, and the function $f$ it computes is such that $|f(w)|= \mc{O}(|w|^{k+1})$. In particular, the exponential behaviors of Example \ref{ex:expo:mt} are no longer possible.

\begin{definition} A \emph{$k$-marble transducer} is a marble transducer such that every accessible configuration (i.e. reachable from the initial configuration) has a stack of at most $k$ marbles.
\end{definition}

This definition is semantical, but it can easily be described in a syntactical way
by storing the (bounded) number of marbles that are currently dropped on the
input.

\begin{remark} $0$-marble transducers are exactly two-way transducers. 
\end{remark}

As special instances of $1$-marble transducers we get programs with $2$ nested
for loops of shape \verb?for i in {1,...,|w|} ( for j in {1,...,i} (...)  )?.
Indeed, the outer index $i$ corresponds to the marble, and the inner $j$ to the
reading head that cannot move on the right of $i$.  However, this interpretation
does not take the two-way moves into account.

\begin{example} \label{ex:mul}Consider the function $\mul: w \# 0^n \mapsto (w\#)^n$ that produces $n$ copies of $w\#$. It can be realized by a $1$-marble transducer that successively drops the marble from first to last $0$, and each time scans and outputs $w\#$.
\end{example}

\begin{remark} For a $k$-marble transducer, it is enough to have one marble color. Indeed, the colors of the marbles dropped form a finite information that can be encoded in the states.
\end{remark}

The correspondence given by Theorem \ref{theo:sstmt} does not take the number of
marbles into account.  In particular, it does not produce a copyless $\SST$ if
we begin with a $0$-marble transducer.  We shall now give a more precise statement
that relates the maximal number of marbles to the number of "copy layers" in the
$\SST$, as defined below.

\begin{definition} \label{def:ksst}

An $\SST$ $(A,B, Q, \regs,q_0, \iota, \delta,\lambda,F)$ is said to be $k$-\emph{layered} if $\regs$ has a partition of the form $\regs_0, \dots, \regs_k$, such that $\forall q \in Q$, $\forall a \in A$, the following are true:

\item

\begin{itemize}

\item $\forall 0 \le i \le k$, only registers from $\regs_0, \dots, \regs_i$ appear in $\{\lambda(q,a)(x) \mid {x \in \regs_i}\}$;

\item $\forall 0 \le i \le k$, each register $y \in \regs_i$ appears at most once in $\{\lambda(q,a)(x) \mid  x \in \regs_i\}$.

\end{itemize}

\end{definition}

Note that $0$-layered $\SSTs$ are exactly copyless $\SSTs$ (only the second
condition is useful).  For $k \ge 1$, Definition \ref{def:ksst} forces each layer
$\regs_i$ to be "copyless in itself", but it can do many copies of deeper
layers ($\regs_j$ for $j<i$).  This update mechanism is depicted in Figure
\ref{fig:ksst} below; it mainly avoids copying twice a register in itself.

\begin{figure}[h!]

\begin{center}
\begin{tikzpicture}[scale=1]
	\node[] (in) at (-2.5,0) [above,right]{\small Input word};
	\node[above] (in) at (0,-0.2) []{ $b$};
	\node[above] (in) at (1,-0.2) []{  $a$};
	\node[above] (in) at (2,-0.2) []{  $b$};
	\node[above] (in) at (3,-0.2) []{  $b$};
	\node[above] (in) at (4,-0.2) []{  $b$};
	\node[above] (in) at (5,-0.2) []{  $a$};
	\draw[->] (0,-0.5) -- (2.8,-0.5);
	\draw[->] (0,-1) -- (2.8,-1);
	\draw[->] (0,-1.5) -- (2.8,-1.5);
	
	\draw[->,gray,dashed,thick] (3.2,-0.5) -- (3.8,-0.5);
	\draw[->,gray,dashed,thick] (3.2,-1) -- (3.8,-1);
	\draw[->,gray,thick] (3.2,-1) -- (3.8,-0.6);
	\draw[->,gray,dashed,thick] (3.2,-1.5) -- (3.8,-1.5);
	\draw[->,gray,thick] (3.2,-1.5) -- (3.8,-1.1);
	\draw[->,gray,thick] (3.2,-1.5) .. controls (4.5,-3) and (5,-0.5) .. (4.2,-0.5);
	
	\node[] (in) at (3,-0.5) []{\footnotesize $\regs_2$};
	\node[] (in) at (3,-1) []{\footnotesize $\regs_1$};
	\node[] (in) at (3,-1.5) []{\footnotesize $\regs_0$};
	\node[gray] (in) at (4,-0.5) []{\footnotesize $\regs_2$};
	\node[gray] (in) at (4,-1) []{\footnotesize $\regs_1$};
	\node[gray] (in) at (4,-1.5) []{\footnotesize $\regs_0$};
       \end{tikzpicture}
\end{center}
\vspace*{-1.4cm}
\caption{\label{fig:ksst} Update of the registers in a $2$-layered $\SST$}
\vspace*{-0.3cm}
\end{figure}
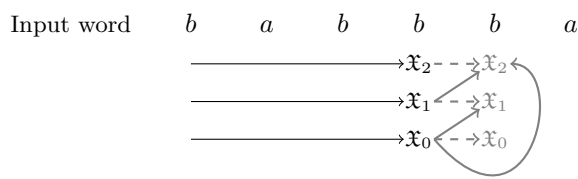

\begin{example} 
  The function $\mul:w \# 0^n \mapsto (w\#)^n$ (Example \ref{ex:mul}) can be
  computed by a 1-layered $\SST$ with $\regs_0 = \{x\}$ and $\regs_1 = \{y\}$ as
  follows.  First, when reading $w\#$, it stores $w\#$ in $x$, while keeping
  $\varepsilon$ in $y$.  Then, each time it sees a $0$, it applies $x \mapsto x,
  y \mapsto xy$.
\end{example}

We now provide a fine-grained correspondence between marbles and registers. Our result indeed extends Theorem \ref{theo:alur}, which corresponds to the case $k=0$.

\begin{theorem} \label{theo:kmtksst} 
  For all $k \ge 0$, $k$-marble transducers and $k$-layered $\SSTs$ describe the
  same class of functions.  The right to left conversion is effective in
  $\PTIME$.
\end{theorem}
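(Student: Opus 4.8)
The plan is to prove both inclusions, spending most of the effort on the direction from $k$-layered $\SSTs$ to $k$-marble transducers (the right-to-left one), since this is the direction for which I claim a $\PTIME$ bound and it already carries the main idea. Note that for $k=0$ this is exactly the $\PTIME$ direction of Theorem~\ref{theo:alur}, from copyless ($=0$-layered) $\SSTs$ to two-way ($=0$-marble) transducers, so the construction below should degenerate to it.

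Fix a $k$-layered $\SST$ $\trans$ with register partition $\regs_0,\dots,\regs_k$. The key observation is that the update of layer $i$ is \emph{copyless inside layer $i$}: freezing the lower layers, the registers of $\regs_i$ evolve as a copyless $\SST$. I would therefore introduce, for each $i$, a copyless $\SST$ $S_i$ over register set $\regs_i$, whose updates are those of $\trans$ restricted to $\regs_i$ but with every occurrence of a lower-layer register $y\in\regs_j$ ($j<i$) replaced by a fresh output letter $[y]$. Applying the polynomial $\SST\to$ two-way direction of Theorem~\ref{theo:alur} to each $S_i$ yields a two-way transducer $T_i$ that, for a chosen register $x$, outputs $\trans^{w[1{:}m]}(x)$ with the lower-layer references left as symbols $[y]$. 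I would rely on the standard property of that construction (which can be arranged) that the letter created by $S_i$ at step $m'$ is emitted by $T_i$ precisely when its head sits on position $m'$: this letter is determined by reading $w[m']$ together with the state $q_{m'}$, which $T_i$ recomputes by a forward scan. Hence, when $T_i$ emits a symbol $[y]$ with $y\in\regs_j$, its head is on some position $m'$ and the value to be substituted there is exactly $\trans^{w[1{:}(m'-1)]}(y)$.

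These transducers are then wired into a single marble transducer implementing the evident recursion. At top level the machine scans $F(q_{|w|})$ and launches $T_i$ for each register of layer $i$, over the whole input with the real right endmarker $\rmark$. Whenever some $T_i$ is about to emit $[y]$ with $y\in\regs_j$, the machine \emph{drops} a marble on the current position $m'$, whose colour records the state in which $T_i$ must be resumed, moves back to the left endmarker, and runs $T_j$ on the prefix $w[1{:}(m'-1)]$, \emph{using the freshly dropped marble as a virtual right endmarker}; when $T_j$ reaches that endmarker in an accepting configuration it instead \emph{lifts} the marble and resumes $T_i$ from the recorded state on position $m'$. The stack discipline of marbles matches the fact that recursive calls always go to the left, and the colour of a marble carries the finite resumption information, so a stack of marbles faithfully encodes the stack of paused two-way computations. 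The crucial accounting is that a marble is spent only when the current layer $i$ strictly decreases to some $j<i$; along any chain of nested calls the layer indices strictly decrease, so at most $k$ marbles are simultaneously present (and none when $k=0$). Since each $T_i$ is produced in $\PTIME$ and there are $k+1$ layers, the whole construction is $\PTIME$.

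For the converse, from $k$-marble transducers to $k$-layered $\SSTs$, I would refine the crossing-sequence simulation behind Theorem~\ref{theo:sstmt}, whose $\SST$ stores in its registers the outputs of the right-to-right runs of the marble transducer on the current prefix. I would stratify these registers by the number $d$ of marbles present during the recorded run, assigning the deepest value $d=k$ (innermost, essentially marble-free explorations) to the lowest layer and $d=0$ to the top layer, i.e.\ $\regs_i$ collects the depth-$(k-i)$ runs. Extending the abstraction by one letter, a run at a fixed depth is a single deterministic traversal that visits each of its configurations once, so its update concatenates the stored same-depth runs copylessly; whereas every drop/explore/lift episode re-invokes runs that are one marble deeper and may duplicate them. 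Under this indexing that is exactly the shape of Definition~\ref{def:ksst}: layer $i$ is copyless in itself and copies only layers $<i$, and the bound of $k$ simultaneous marbles yields precisely $k+1$ layers. The main obstacle of the whole proof lies here: one must set up the crossing-sequence abstraction so that these two invariants hold \emph{simultaneously} at every depth, the copyless-in-depth property being the genuine generalization of the plain two-way$\to$one-way argument that already underlies the $k=0$ case; the remaining work is the routine but delicate bookkeeping of the off-by-one positions of the virtual endmarkers and of the return positions of the two-way subroutines.
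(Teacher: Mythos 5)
Your right-to-left construction (from $k$-layered $\SSTs$ to $k$-marble transducers) is essentially the paper's: the paper also evaluates register contents recursively, exploits copylessness of each layer to resume same-layer calls \emph{without} any mark, and spends a marble only when descending to a strictly lower layer, so that the strict decrease of layer indices along nested calls bounds the stack height by $k$. Your modular packaging through per-layer two-way transducers $T_i$ and marbles used as virtual right endmarkers is reasonable, but note that it silently strengthens Theorem~\ref{theo:alur}: you need the \emph{origin} property (letters created at step $m'$ are emitted while the head sits at position $m'$), which is not part of that statement; the paper avoids this by building the recursive machine explicitly.

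The converse direction contains a genuine gap. You claim that in the depth-stratified crossing-sequence abstraction each layer is copyless in itself, ``exactly the shape of Definition~\ref{def:ksst}''. But copylessness in Definition~\ref{def:ksst} is \emph{global over the layer}: a register $y \in \regs_i$ must occur at most once in the whole set $\{\lambda(q,a)(x) \mid x \in \regs_i\}$. Determinism only gives you the per-register statement: a single tracked run cannot invoke the same same-depth sub-run twice (that would be a loop). It does not prevent two \emph{distinct} tracked runs, started in different states $q_1 \neq q_2$ at position $m+1$ with the same marble context, from both moving left into the same state $q'$ and hence both consuming the register storing the output of the run from $(q',m)$: tracked runs merge, and the abstraction must maintain all of them because it cannot know which one the true accepting run will use. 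So what your stratification actually yields is a $(k,B)$-bounded $\SST$ in the sense of Definition~\ref{def:bounded} (with $B$ roughly $|Q|$), not a $k$-layered one. Converting $(k,B)$-bounded into $k$-layered is precisely Lemma~\ref{lem:kbounded}, whose proof requires the machinery of $\SSTs$ with external functions and unambiguous nondeterministic $\SSTs$ that guess how many copies of each register the final output will need; it cannot be absorbed into ``routine but delicate bookkeeping''. The defect is already visible at $k=0$: your argument would make the plain crossing-sequence construction turn a two-way transducer into a \emph{copyless} $\SST$, but that is the genuinely hard ($\EXP$) direction of Theorem~\ref{theo:alur}. The paper instead routes this direction through Theorem~\ref{theo:sstmt} followed by the growth analysis of Section~\ref{sec:membership} (Lemmas~\ref{lem:boundedcopy} and~\ref{lem:kbounded}).
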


\begin{proof}[Proof sketch.] 
  To convert a $k$-layered $\SST$ in a $k$-marble transducer, we adapt the
  transformation of Theorem \ref{theo:sstmt} in order to use no more than $k$
  marbles.  The idea is to write only the recursive calls that correspond to the
  copy of a register, the others being kept implicitly.  The $\PTIME$ complexity
  is obtained by adapting the construction of \cite{dartois2017reversible}.  For
  the converse implication, we first transform the $k$-marble transducer in an
  $\SST$ using \mbox{Theorem \ref{theo:sstmt}}.  Since the function $f$ computed
  by this $\SST$ is such that $|f(w)| = \mc{O}(|w|^{k+1})$, we use \mbox{Lemmas
  \ref{lem:boundedcopy}} and \ref{lem:kbounded} in order to build a $k$-layered
  $\SST$ for $f$.  A large amount of additional work is required to obtain these
  results, and it is the purpose of Section \ref{sec:membership}.
\end{proof}

\section{Membership problems}

\label{sec:membership}

It is clear that a $k$-marble transducer is a particular case of $(k+1)$-marble transducer, which is a particular case of marble transducer (without restrictions). In other words, the classes of functions they define are included in each other. More precisely, these classes describe a strict hierarchy of increasing expressiveness, since $k$-marble transducers can only describe functions such that $|f(w)| = \mc{O}(|w|^{k+1})$ (see Examples \ref{ex:expo:mt} and \ref{ex:hierarchy} for  separation).

\begin{example} \label{ex:hierarchy} 
  The function $\powo^k: a^{n} \mapsto a^{n^k}$ can be computed with $k$
  marbles, but not less (since $|\powo^k(w)|$ is $\Omega(|w|^{k})$).  Let us
  explain the computation of $\powo^2$ with $1$ marble on input $a^n$.  We first
  drop the marble on position $n-1$, go to $1$, and move forward from $1$ to
  $n-1$ while outputting $aa$ at each transition.  Then we lift the marble and
  drop it on $n-2$, and perform the same outputs from $1$ to $n-2$, etc.  At the
  end of this procedure, we have output $a^{2((n-1)+ \cdots + 1)} = a^{n^2 -
  n}$.  It remains to output $a^n$ by reading the input once.
\end{example}

\begin{remark} Generalizing the construction of Example \ref{ex:hierarchy}, we can show that if $P \in \mb{N}[X]$ is a polynomial of degree $k \ge 0$, then $a^n \mapsto a^{P(n)}$ is computable with $k$ marbles (but not less).

\end{remark}

A natural problem when considering a hierarchy is that of \emph{membership}: given a function in some class, does it belong to a smaller one? The objective of this section is to provide a positive answer by showing Theorem \ref{theo:membership} below.

\begin{theorem} \label{theo:membership} Given a function $f$ described by a marble transducer, it is decidable in $\EXP$ whether $f$ can be computed by a $k$-marble transducer for some $k \ge 0$. In that case, we can compute the least possible $k \ge 0$ and build a $k$-marble transducer for $f$.
\end{theorem}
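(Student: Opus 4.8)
The plan is to reduce the whole problem to a growth analysis of streaming string transducers. First I would apply Theorem \ref{theo:sstmt} to turn the given marble transducer into an equivalent $\SST$ $\trans$ computing $f$ (this is the only step costing $\EXP$). The guiding principle, already visible in Examples \ref{ex:expo:mt} and \ref{ex:hierarchy}, is that being $k$-marble computable is governed solely by the size of the output: I claim that $f$ is computable by some $k$-marble transducer if and only if $f$ has polynomial growth, i.e. $|f(w)| = \mc{O}(|w|^{d})$ for some $d$, and that the least admissible $k$ is then $\max(d-1,0)$, where $d$ is the exact degree. The easy direction is the lower bound: a $k$-marble transducer produces outputs of size $\mc{O}(|w|^{k+1})$, so if $f$ is $k$-marble computable then $f$ is polynomial of degree at most $k+1$; conversely, if $|f(w)| = \Theta(|w|^{d})$ then reconciling $\Theta(|w|^{d})$ with $\mc{O}(|w|^{k+1})$ forces $k \ge d-1$. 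In particular, if $f$ grows exponentially (as $\pow$ does) then no $k$ can work.

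Second, I would decide whether $\trans$ has polynomial or exponential growth and, in the polynomial case, compute the exact degree $d$. For this I abstract each register update $\lambda(q,a)$ by its incidence matrix over $\mb{N}$, whose $(x,y)$ entry counts the occurrences of register $y$ in $\lambda(q,a)(x)$; the lengths of the register contents along a run are then controlled by products of these matrices, and the worst-case output length $n \mapsto \max_{|w|=n}|f(w)|$ is either bounded by a polynomial or grows exponentially. Which case occurs, and the exact degree $d$ in the polynomial case, is decidable by examining the strongly connected components of a product construction and detecting \emph{duplicating cycles} (a register whose content is strictly copied when traversing the cycle) together with their nesting depth; this is the standard polynomial-versus-exponential dichotomy for such copy structures and can be carried out in time polynomial in $|\trans|$. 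If the growth is exponential I answer that no $k$ exists; otherwise I set $k := \max(d-1,0)$, which by the previous paragraph is the least conceivable value.

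Third, and this is the heart of the matter, I must show that this $k$ is actually achievable, i.e. that an $\SST$ of growth degree $d=k+1$ can always be converted into a $k$-layered $\SST$; applying the $\PTIME$ direction of Theorem \ref{theo:kmtksst} then yields the desired $k$-marble transducer. My approach is to first trim $\trans$ (keeping only registers and transitions that feed an accepting run), then compute for each register $x$ the growth degree $d(x)$ of its content by the same matrix analysis, and tentatively place $x$ in layer $\regs_{\max(d(x)-1,0)}$. The condition that the update of a layer-$i$ register mentions only registers of layer $\le i$ holds after trimming, because content flows monotonically towards the output: if $y$ occurs in $\lambda(q,a)(x)$ on a useful transition then $d(x) \ge d(y)$, hence the layer of $y$ is at most that of $x$. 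The delicate remaining condition is that each layer be \emph{copyless in itself}. Since $\trans$ has polynomial growth, no useful cycle can duplicate the content of a register into itself, for such a cycle iterated on longer and longer inputs would force exponential growth; restricted to a single layer the copy dependency is therefore acyclic on useful transitions, so each layer-$i$ register can spawn only a bounded number $B$ of simultaneously living layer-$i$ copies, where $B$ depends only on $\trans$. I would then eliminate every intra-layer duplication by a bounded unfolding: replace each layer-$i$ register by $B$ fresh copies and route the boundedly many in-layer copies to distinct fresh registers, so that no layer-$i$ register is ever used twice inside layer $i$, while each fresh copy still references only layers $\le i$. Establishing the uniform bound $B$ and verifying that the unfolding preserves both the computed function and the layer discipline is the main obstacle and the technical core of the argument.

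Finally, composing these steps keeps the whole procedure within $\EXP$: the conversion of Theorem \ref{theo:sstmt} is the only exponential step, the growth analysis and the layered construction are polynomial in the size of the resulting $\SST$, and the conversion back to a $k$-marble transducer is in $\PTIME$ by Theorem \ref{theo:kmtksst}. This simultaneously yields the decision procedure, the least value of $k$, and an explicit $k$-marble transducer for $f$.
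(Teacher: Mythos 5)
Your first two steps coincide with the paper's: convert the marble transducer to an $\SST$ (Theorem \ref{theo:sstmt}, the single $\EXP$ step), then decide exponential-versus-polynomial growth and extract the exact degree by abstracting updates into $\mb{N}$-matrices --- this is exactly the paper's flow automaton, whose heavy cycles give exponential growth and whose chains of barbells give the polynomial degree (Lemma \ref{lem:graph}); your characterization of the optimal $k$ as $\max(d-1,0)$ is Proposition \ref{prop:growth}. The gaps are both in your third step. The first is your layer assignment $x \in \regs_{\max(d(x)-1,0)}$: it puts registers with bounded content in the same layer as registers with linear content, and then the intra-layer bound $B$ you claim simply does not exist. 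Concretely, take a one-state $\SST$ over input alphabet $\{c\}$ with $\iota(x)=a$, $\iota(y)=\vide$, updates $x \mapsto x$, $y \mapsto yx$, and output $y$. It computes $c^n \mapsto a^n$, with $d(x)=0$ and $d(y)=1$, so both registers land in your layer $0$; yet the substitution composed along $c^n$ maps $y \mapsto yx^n$, so $x$ is copied $n$ times inside layer $0$ --- unboundedly. Note that the obstruction here is not a ``duplicating cycle'' but a barbell ($x$ cycles, $x$ flows into $y$, $y$ cycles), which your acyclicity argument does not exclude. This is precisely why the paper's Lemma \ref{lem:boundedcopy} does not keep the bounded registers as registers at all: it hardcodes their finitely many possible contents into the states, leaving only the $k+1$ unbounded classes as layers, which are then genuinely $(k,B)$-bounded.

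The second gap is the decisive one. Even starting from a correct $(k,B)$-bounded $\SST$, your ``bounded unfolding'' --- maintain $B$ fresh copies of each layer-$i$ register and route each intra-layer copy to a distinct fresh register --- is exactly the naive construction the paper refutes at the opening of its proof of Lemma \ref{lem:kbounded}: if $x$ is copied into both $y$ and $z$, then keeping $B$ copies of $y$ and $B$ copies of $z$ would require $2B$ copies of $x$, which a copyless machine cannot create out of the $B$ it holds. The number of copies of $x$ one must actually maintain is the number of occurrences of $x$ in the substitution composed over the whole \emph{remaining} input; it is at most $B$ by boundedness, but it depends on input not yet read, so no deterministic forward unfolding can allocate the copies correctly. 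The paper's proof needs real additional machinery here: it guesses these future multiplicities with a copyless \emph{non-deterministic but unambiguous} $\SSTF$ (following \cite{dartois2016aperiodic}), then eliminates the non-determinism by a second construction that stores composed substitutions as a bounded skeleton plus register-held words (following \cite{alur2012regular}), all wrapped in an induction on layers using external functions to stand for the lower layers. So what you defer as ``verifying that the unfolding preserves the computed function'' is not a technicality: the unfolding itself is wrong, and its correct replacement is the heart of the paper's argument.
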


Proposition \ref{prop:growth} below is the key element for the proof, and is also interesting in itself. Indeed, it states that \emph{a polynomial-growth function can always be computed in polynomial time!} In other words, $f$ described by a marble transducer is computable with $k$ marbles if and only if $|f(w)| = \mc{O}(|w|^{k+1})$. Given $f : A^* \rightarrow B^*$, let $|f| : A^* \rightarrow \mb{N}, w \mapsto |f(w)|$.

\begin{definition} \label{def:growth}Let $g : A^* \rightarrow \mb{N}$, we say that $g$ has:
\begin{itemize}
\item \emph{exponential growth}, if $g(w) = \mc{O}(2^{\mc{O}(|w|)})$ and there exists an infinite set $L \subseteq A^*$ such that $g(w)= 2^{\Omega(|w|)}$ when $w \in L$;
\item \emph{$k$-polynomial growth} for $k \ge 0$, if $g(w) = \mc{O}(|w|^k)$ and there exists an infinite set $L \subseteq A^*$ such that $g(w) = \Omega(|w|^k)$ when $w \in L$;
\end{itemize}

\end{definition}

\begin{proposition} \label{prop:growth} Let $f : A^* \rightarrow B^*$ be a total function computed by a marble transducer. Then exactly one of the following is true:
\begin{itemize}
\item $|f|$ has exponential growth, and $f$ is not computable with $k$ marbles for any $k \ge 0$;
\item $|f|$ has $(k+1)$-polynomial growth for some $k \ge 0$, and $f$ is computable with $k$ marbles and $k$ is the least possible number of marbles;
\item $|f|$ has $0$-polynomial growth (i.e. a finite image), and $f$ is computable with $0$ marbles.
\end{itemize}
Moreover, these three properties are decidable in $\EXP$.
\end{proposition}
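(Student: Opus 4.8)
The plan is to carry the whole argument through the $\SST$ model. First I would apply Theorem~\ref{theo:sstmt} to turn the marble transducer into an $\SST$ $\trans=(A,B,Q,\regs,q_0,\iota,\delta,\lambda,F)$ computing $f$, at a cost of $\EXP$. Since $f$ is total, every input admits a unique accepting run and $|f(w)|$ is obtained by combining the final register lengths through $F$. The argument then splits into lower and upper bounds. For the \emph{lower bounds}, I use that a $k$-marble transducer runs in time $\mc{O}(|w|^{k+1})$ and hence outputs words of length $\mc{O}(|w|^{k+1})$: consequently, exponential growth of $|f|$ rules out every marble transducer, and $(k+1)$-polynomial growth forbids using fewer than $k$ marbles. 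Together with the fact that the growth type is unique, this already gives the mutual exclusivity of the three cases and the minimality of $k$; it remains to show that each growth bound is also \emph{achievable}.

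Second, I would make the growth of $|f|$ explicit and establish the trichotomy. To each pair $(q,a)$ in the domain of $\delta$ associate the nonnegative integer \emph{copy matrix} $M_{q,a}$ over $\regs\times\regs$, whose $(x,y)$-entry counts the occurrences of $y$ in $\lambda(q,a)(x)$. Writing $v_m$ for the vector of register lengths after reading $w[1{:}m]$, one has an affine recurrence $v_m=M_{q_m,w[m]}\,v_{m-1}+b_m$, so up to additive constants $|f(w)|$ is controlled by the entries of the products $M_{q_{|w|},w[|w|]}\cdots M_{q_1,w[1]}$. I would analyse these products on the \emph{register dependency graph} whose vertices are the pairs $(q,x)$ and whose weighted edges record, for each input letter, the multiplicities in the matrices $M_{q,a}$. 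The decisive structural fact is that such nonnegative integer matrix products selected along a deterministic (total) device grow either polynomially, with a \emph{computable integer degree} $d$, or exponentially, with no intermediate rate; this is read off the strongly connected components of the graph. An SCC carrying a cycle that copies some register into itself with total multiplicity $\ge 2$ forces exponential growth; if no such heavy SCC exists, the growth is polynomial and $d$ equals the maximal length of a chain of productive SCCs — those along which register lengths strictly accumulate, none copying a register into itself more than once — in the condensation. Inspecting the graph is polynomial in the $\SST$, hence $\EXP$ in the original machine, which yields the claimed decidability. Calibrating the examples (a copyless $\SST$ yields $d=1$, while $\mul$ and $\powo^2$ yield $d=2$) confirms the translation $k=d-1$ between the polynomial degree and the marble count, and $d=0$ is exactly the finite-image case handled by $0$ marbles.

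Third, for the achievability (the only remaining implication) I would fix a function of $(k+1)$-polynomial growth, i.e.\ $d=k+1$, and build a $k$-layered $\SST$ for $f$; applying Theorem~\ref{theo:kmtksst} then produces a $k$-marble transducer. The construction proceeds in two steps matching the SCC structure above: first a \emph{copy-bounding} step (Lemma~\ref{lem:boundedcopy}), which turns a polynomially-growing $\SST$ into an equivalent one in which every register value is reused only a bounded number of times along any run; then a \emph{layering} step (Lemma~\ref{lem:kbounded}), which distributes these bounded copies over $k+1$ classes $\regs_0,\dots,\regs_k$ so that each class is copyless in itself and only references deeper classes, exactly as required by Definition~\ref{def:ksst}. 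The number of layers produced is governed by the maximal chain of productive SCCs, i.e.\ by $d$, so the resulting machine is $k$-layered with $k=d-1$.

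Finally, I expect the main obstacle to be precisely this achievability direction, and within it the two structural lemmas. A general copyful $\SST$ may realise polynomial growth through copies that do not visibly respect any hierarchy, so the hard content is a normalization theorem: proving that polynomially-bounded copying can always be reorganised into a fixed number of copyless-in-themselves layers without altering the computed function. I would attack it by taking a product of the $\SST$ with the condensation of its dependency graph, using the absence of heavy SCCs to bound reuse (for Lemma~\ref{lem:boundedcopy}) and the chain structure of productive SCCs to assign layer indices (for Lemma~\ref{lem:kbounded}); controlling the blow-up of this product is what keeps the overall procedure within $\EXP$.
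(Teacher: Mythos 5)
Your overall architecture is the paper's: convert the marble transducer to an $\SST$ (Theorem~\ref{theo:sstmt}), obtain lower bounds by counting configurations of $k$-marble transducers, and obtain achievability by chaining Lemmas~\ref{lem:boundedcopy} and~\ref{lem:kbounded} with the left-to-right direction of Theorem~\ref{theo:kmtksst}; your state-register pairs $(q,x)$ are exactly the paper's simple-$\SST$ construction (Lemma~\ref{lem:simple}) followed by its flow automaton. The genuine gap is in the one step you re-derive yourself: the claim that the growth degree can be ``read off the strongly connected components'' of the dependency graph, with $d$ the maximal chain of productive SCCs in the condensation. This is false, because polynomial accumulation requires that the loop at the source register, the transfer, and the loop at the target register be realizable \emph{simultaneously, by the same input word}. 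Concretely, take registers $x,y,c$ with $\iota(c)=a$, $\iota(x)=\iota(y)=\varepsilon$, updates on letter $a$: $x \mapsto cx$, $y \mapsto y$, $c \mapsto c$; on letter $b$: $x \mapsto \varepsilon$, $y \mapsto xy$, $c \mapsto c$; output $F=y$. The condensation is the chain $\{c\} \rightarrow \{x\} \rightarrow \{y\}$, where $\{x\}$ accumulates length (on $a$) and $\{y\}$ accumulates length (on $b$, receiving $x$), and no register is copied into itself more than once; your criterion predicts $d=2$, hence quadratic growth and two layers. But the letter $b$ that transfers $x$ into $y$ also erases $x$, so on every input $|f(w)| \le |w|$: the growth is linear and $f$ needs $0$ marbles. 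Note that the genuinely quadratic variant (same updates except $x \mapsto x$ on $b$) has the \emph{same} condensation with the same productive SCCs, so no criterion phrased on the condensation alone can separate the two. The paper avoids this trap via Proposition~\ref{prop:simon}: a \emph{barbell} from $q$ to $q'$ demands a single word $v$ with $\weia(v)(q,q)\ge 1$, $\weia(v)(q,q')\ge 1$ and $\weia(v)(q',q')\ge 1$, detected by intersecting three regular languages (Remark~\ref{rem:complexbarb}), and the degree is the maximal height in the DAG $\graph$ built from barbells (Lemma~\ref{lem:graph}), not from edges of the flow graph.

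The same caveat threatens your exponential test: unless ``total multiplicity $\ge 2$'' already means the matrix entry $\weia(v)(q,q)\ge 2$ (summing over all $v$-labelled cycles), it misses exponential growth produced by two \emph{distinct} unit-weight cycles sharing the same label, e.g.\ on $a$: $p_1 \mapsto q$, $p_2 \mapsto q$, $q \mapsto \varepsilon$, and on $b$: $q \mapsto p_1p_2$, $p_1,p_2 \mapsto \varepsilon$, which doubles $|q|$ on every factor $ab$ although every single cycle has weight $1$. As a consequence of these two defects, your decision procedure can misclassify the growth (reporting quadratic on a linear function, or polynomial on an exponential one), so the $\EXP$ decidability claim is not established; and since your layering step assigns layers along the same faulty chains, it may produce strictly more layers than the true degree, so the second bullet (computability with exactly $k$ marbles, $k$ least possible) fails as well. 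The repair is precisely the paper's Lemma~\ref{lem:graph}: detect heavy cycles and barbells with the same-word requirement, and take the partition $S_0,\dots,S_{k+1}$ by barbell heights as input to Lemmas~\ref{lem:boundedcopy} and~\ref{lem:kbounded}.
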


\begin{remark} If $f$ has a finite image, it is a trivial "step function": $\dom(f)$ is a finite union $ \bigcup_{i} L_i$ of regular languages such that $f$ is constant on each $L_i$.

\end{remark}

The rest of this section is devoted to the proof of these results.  By Theorem
\ref{theo:sstmt}, we first convert our marble transducer in an $\SST$ (in
$\EXP$), and only reason about $\SSTs$ in the sequel.  In fact, considering
$\SSTs$ is crucial: contrary to a marble transducer, an $\SST$ performs only one
pass on its input, which makes it possible to apply pumping-like arguments for
understanding the asymptotic growth of the outputs.  Some proof techniques used
below are inspired from \cite{filiot2017copyful} which only considers deciding
$1$-polynomial growth of $\SSTs$.

\subsection{Simplification of $\SST$}

An $\SST$ is said to be \emph{total} whenever its transition, update and output functions are total. We shall assume that our machine is so. Indeed, it can be completed like a finite automaton, by outputting $\varepsilon$ when out of the domain. Furthermore, this operation does not modify the asymptotic growth of the computed function.

We say that an $\SST$ is \emph{simple} if it is total, it has a single state (i.e. $Q = \{q_0\}$), and its substitutions and output do not use letters (i.e. $\lambda : Q \times A \rightarrow \subst{\regs}{\varnothing}$ and $F:Q \rightarrow \regs^*$). 
To simplify the notations, we write $(A,B, \regs, \iota, \lambda, F)$ for a
simple $\SST$, where $\lambda: A \rightarrow \subst{\regs}{\varnothing}$ and $F
\in \regs^*$.  Indeed, states and the transition function are useless.

\begin{lemma} \label{lem:simple} Given a total $\SST$, we can build an equivalent simple $\SST$ in $\PTIME$.
\end{lemma}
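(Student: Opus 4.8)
The plan is to transform a given total $\SST$ into a simple one through a sequence of three elementary transformations, each preserving the computed function and running in polynomial time. First I would eliminate the use of output letters from $B$ inside the substitutions and in the final output word. The idea is to introduce, for each letter $b \in B$, a fresh constant register $x_b$ that is initialized to $b$ by the initial function $\iota$ and is never modified afterwards (so every $\lambda(q,a)$ maps $x_b \mapsto x_b$). Then any occurrence of a letter $b$ in a word $\lambda(q,a)(x)$ or in $F(q)$ can be replaced by the register $x_b$. This yields substitutions in $\subst{\regs'}{\varnothing}$ and an output function valued in $(\regs')^*$, at the cost of $|B|$ new registers, which is polynomial.

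Next I would remove the dependence on states, folding the finite control into the registers so that a single state $q_0$ remains. The natural approach is the standard product-style construction: keep the set of states $Q$ inside the machine's bookkeeping but make the register set be $Q \times \regs$ (or rather maintain, for each state $q$, a copy of the registers recording what their contents would be if the run were currently in state $q$). At each letter $a$ one updates every copy according to the appropriate transition $\delta(\cdot,a)$ and substitution $\lambda(\cdot,a)$, and remembers the actual current state as a separate finite piece of information. Concretely, since the underlying transition function $\delta$ is deterministic, one can instead carry the current state in a bounded component and rename registers so that the indexing by state becomes unnecessary; the cleanest route is to observe that a total $\SST$ can be presented with its state tracked by a bounded counter that is simulated inside a single-state machine whose register updates are guarded by that counter encoded via $\iota$ and fixed substitutions. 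I would pick whichever of these equivalent encodings keeps the blow-up polynomial, namely $|Q|\cdot|\regs|$ registers.

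The last step is to handle totality and the final output word. Totality is already assumed (the lemma takes a total $\SST$ as input), and as the text notes any $\SST$ can first be completed by outputting $\varepsilon$ outside its domain without changing the asymptotic growth; so I would simply carry this completeness through the previous two constructions. The final output function $F : Q \to (\regs \cup B)^*$ becomes, after letter-elimination, a map into $(\regs')^*$, and after state-elimination we only need its value at the single remaining state, recovered from the tracked current state. Composing the three transformations gives a simple $\SST$ $(A,B,\regs'',\iota'',\lambda'',F'')$ with $\lambda'' : A \to \subst{\regs''}{\varnothing}$ and $F'' \in (\regs'')^*$, computing the same function.

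The main obstacle I expect is the state-elimination step: doing it while keeping a single state \emph{and} respecting the determinism of the original transition function requires some care, because a substitution on the single state must correctly emulate, for each letter, the branching that the state previously encoded. The subtlety is that register updates in a simple $\SST$ cannot be conditioned on a state, so all state-dependent behavior has to be pushed into the register contents themselves; guaranteeing that the emulation stays faithful (and that exactly the right final copy is selected by $F''$) is where the argument needs attention, whereas the letter-elimination and completion steps are routine.
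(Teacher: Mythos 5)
Your letter-elimination step and your choice of register set $Q \times \regs$ both match the paper, but there is a genuine gap exactly at the point you yourself flag as ``where the argument needs attention'': you never give a mechanism by which a single-state machine applies the correct state-dependent update, and the two substitutes you offer are not available in the model. A simple $\SST$ has one state, and the substitution it applies on reading $a$ is a fixed element of $\subst{\regs}{\varnothing}$ depending on $a$ alone; so there is no ``separate finite piece of information'' in which to remember the actual current state, and there are no updates ``guarded by a counter'' --- any information that must survive a transition has to live inside the registers, and register contents cannot be branched on. As written, your state-elimination step is a restatement of the problem, not a construction.

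The missing idea is an invariant plus an absorbing concatenation. Maintain the invariant that register $(q,x)$ holds the value of $x$ when $q$ is the current state of $\trans$, and holds $\vide$ for every other $q$ (initially, $(q_0,x)$ gets $\iota(x)$ and all other copies get $\vide$). Then the update can be written uniformly, with no knowledge of the state: $\lambda'(a)(q,x) := \prod_{p \mid \delta(p,a) = q} \mu_p(\lambda(p,a)(x))$, where $\mu_p$ renames each $y \in \regs$ to $(p,y)$. Along an actual run, at most one factor of this product is nonempty --- the one indexed by the true predecessor state --- since all registers indexed by other states contain $\vide$; wrong branches are thus absorbed rather than selected against, and the invariant propagates. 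The same device resolves your worry about the output: set $F' := \prod_{q \in \dom(F)} \mu_q(F(q))$, and only the copy for the true final state contributes. Note this is also precisely why letter-elimination must precede state-elimination (you chose that order, but without stating the reason): if $\lambda(p,a)(x)$ or $F(q)$ still contained letters of $B$, the factors indexed by wrong states $p$ would emit those letters and corrupt the result. Finally, the construction is inherently copyful (a register $(p,y)$ may be referenced in several factors), which is fine here; the paper remarks that the lemma does not preserve copylessness or $k$-layeredness.
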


\begin{proof}[Proof.] 
  Let $\trans = (A,B,Q, \regs, q_0, \iota, \delta, \lambda, F)$ be the original
  $\SST$.  We can assume that it uses no letters in the substitutions by storing
  them in constant registers (using the initial function).  To remove the
  states, we let $\regs' := Q \times \regs$ be our new register set.  In our new
  machine, register $(q,x)$ will contain the value of $x$ if $q$ is the current
  state of $\trans$, and $\varepsilon$ otherwise.  The update function $\lambda'
  : A \rightarrow \subst{\regs'}{B}$ and output $F' \in \regs'^*$ are defined as
  follows:
  $$
  \lambda'(a)(q,x) = \prod_{p \mid \delta(p,a) = q} \mu_{p}(\lambda(p,a)(x)) \text{ and } F' =  \prod_{q \in \dom(F)} \mu_{p}(F(q))
  $$
  where $\mu_p$ replaces $y \in \regs$ by $(p,y) \in \regs'$. During a run, at most one term of the concatenation defining $\lambda'(a)(q,x)$ is nonempty, the one which corresponds to the true state $p$ of $\trans$. 
\end{proof}

\begin{remark} However, this construction does not preserve copylessness nor $k$-layeredness.
\end{remark}

\subsection{Asymptotic behavior of $\mb{N}$-automata and $\SSTs$}

Given a simple $\SST$, we first build an $\mb{N}$-automaton that "computes" the
size of the words stored in the registers along a run of the $\SST$.  As we
shall see, the growth of functions produced by $\mb{N}$-automata exactly matches
the case disjunction of Proposition \ref{prop:growth}.

\begin{definition} \label{def:naut} An $\mb{N}$-automaton $\auto = (A,\staa,\inia,\weia,\fina)$ consists in:

\item

\begin{itemize}

\item an input alphabet $A$;

\item a finite set $\staa$ of states;

\item an initial row vector $\inia \in \mb{N}^{\staa}$ and a final column vector $\fina \in \mb{N}^{\staa}$;

\item a monoid morphism $\weia : A^* \rightarrow \mb{N}^{\staa \times \staa}$ (weight function).

\end{itemize}

\end{definition}

The automaton $\auto$ computes the total function $ A^* \rightarrow \mb{N}, w \mapsto \inia \weia(w) \fina$. We say that it is \emph{trim} if $\forall q \in \staa$, $\exists u,v \in A^*$ such that $(\inia \weia(u))(q) \ge 1$ and $(\weia(v)\fina)(q) \ge 1$.

Let $ \trans =(A,B, \regs, \iota, \lambda, F)$ be a simple $\SST$, we define its
\emph{flow automaton} $\fauto{\trans} := (A, \regs, \inia,\weia,\fina)$ as the
$\mb{N}$-automaton with input $A$, states $\regs$, and:
\begin{itemize}
\item for all $x \in \regs$, $\inia(x) = |\iota(x)|$ (number of letters initialized in $x$);

\item for all $x \in \regs$, $\fina(x)$ is the number of occurrences of $x$ in $F$;

\item for all $a \in A, x,x' \in \regs$, $\weia(a)(x,x')$ is the number of occurrences of $x$ in $\lambda(a)(x')$.
\end{itemize}

Recall that $\trans^{w}(x)$ is "the value of $x$ after reading $w$ in $\trans$"; the flow automaton indeed computes the size of these values. We get the following by induction.

\begin{claim} \label{claim:nauto} For all $w \in A^*$ and $x \in \regs$, we have $(\inia \weia(w))(x) = |\trans^{w}(x)|$. In particular, if $f$ is the function computed by $\trans$, then $\fauto{\trans}$ computes $|f|$.
\end{claim}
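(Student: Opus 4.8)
\textit{Proof plan.}
The plan is to prove the identity by induction, after rewriting the register
contents of the simple $\trans$ as an iterated composition of the per-letter
substitutions.  First I would unfold the recurrence defining $\trans^{w}$: writing
$w = a_1 \cdots a_n$ and $s_a := \lambda(a) \in \subst{\regs}{\varnothing}$, the
definition $\trans^{w[1{:}m]} = \trans^{w[1{:}(m-1)]} \circ s_{w[m]}$ telescopes
into $\trans^{w} = \iota \circ s_{a_1} \circ \cdots \circ s_{a_n}$, where $\iota$
is read as the substitution sending each register to its initial value in $B^*$.
I thus set $\hat{s}_{w}(x) := s_{a_1}(\cdots(s_{a_n}(x))\cdots) \in \regs^*$, so
that $\trans^{w}(x) = \iota(\hat{s}_{w}(x))$, with $\iota$ applied morphically.

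The heart of the argument is the combinatorial lemma that, for all $w$ and all
$x,y \in \regs$, the matrix entry $\weia(w)(y,x)$ equals the number of occurrences
of the register $y$ in the word $\hat{s}_{w}(x)$.  I would prove this by induction
on $|w|$.  The base case $w = \vide$ compares the identity matrix with
$\hat{s}_{\vide} = \mathrm{id}$, and the single-letter case is exactly the
definition of $\weia(a)(y,x)$ as the number of occurrences of $y$ in
$\lambda(a)(x)$.  For the inductive step on $w = a w'$, I use
$\hat{s}_{w}(x) = s_a(\hat{s}_{w'}(x))$: each occurrence of a register $z$ in
$\hat{s}_{w'}(x)$ is replaced morphically by $s_a(z)$, contributing
$\weia(a)(y,z)$ occurrences of $y$; summing over $z$ gives precisely
$\sum_{z} \weia(a)(y,z)\,\weia(w')(z,x)$, which equals $\weia(w)(y,x)$ because
$\weia$ is a monoid morphism, i.e. $\weia(a w') = \weia(a)\weia(w')$.

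With the lemma established, I finish by measuring lengths.  Applying $\iota$
morphically and using $\inia(y) = |\iota(y)|$ gives
$|\trans^{w}(x)| = |\iota(\hat{s}_{w}(x))| = \sum_{y \in \regs} \weia(w)(y,x)\,\inia(y) = (\inia\weia(w))(x)$,
which is the claimed equality.  For the ``in particular'' statement I expand
$f(w) = \trans^{w}(F)$ with $F \in \regs^*$ and $\fina(x)$ equal to the number of
occurrences of $x$ in $F$, obtaining
$|f(w)| = \sum_{x} \fina(x)\,|\trans^{w}(x)| = \inia\,\weia(w)\,\fina$, which is
exactly the value computed by $\fauto{\trans}$.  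The one delicate point, and the
only place I expect trouble, is pinning down the conventions so that everything
aligns: the $\SST$ accumulates substitutions \emph{on the right} as it reads,
while $\weia(w) = \weia(a_1)\cdots\weia(a_n)$ is a left-to-right matrix product, so
I must fix the index convention (first argument $y$ = register being counted
in the image, second argument $x$ = source register) and verify that substitution
composition matches matrix multiplication in this orientation.  Once the
conventions are fixed the computation is routine, but a transposition error here
would silently break the morphism alignment.
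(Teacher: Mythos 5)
Your proposal is correct and matches the paper's approach: the paper dispatches this claim with the single remark that it follows ``by induction,'' and your argument is precisely a careful instantiation of that induction, with the conventions (composition order of substitutions versus matrix multiplication order, and the index orientation of $\weia$) checked correctly. The only cosmetic difference is that you prove the slightly stronger entry-wise statement that $\weia(w)(y,x)$ counts occurrences of $y$ in the iterated substitution applied to $x$ before contracting with $\inia$ and $\fina$, whereas one can also induct directly on the claimed identity by peeling off the last letter; both routes are routine and equivalent in substance.
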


Without loss of generality, we can assume that $\fauto{\trans}$ is trim. Indeed, if $x \in \regs$ is such that $(\inia \weia(u))(x) = 0$ for all $u \in A^*$, then $x$ always has  value $\varepsilon$ and can be erased everywhere in $\trans$. Similarly, if $(\weia(v)\fina)(x) = 0$ for all $v \in A^*$, $x$ is never used in the output.

Let us now study in detail the asymptotic behavior of $\mb{N}$-automata.

\begin{lemma} \label{lem:graph} Let $\auto = (A, \staa, \inia, \weia,\fina)$ be a trim $\mb{N}$-automaton that computes a function $g:A^* \rightarrow \mb{N}$. Then one of the following holds, and it can be decided in $\PTIME$:

\item 

\begin{itemize}

\item \label{po:bound} $g$ has an exponential growth;

\item \label{po:partition} $g$ has $k$-polynomial growth for some $k \ge 0$ and $\staa = \biguplus_{0 \le i \le k} S_i $ is such that:
\begin{itemize}

\item $\forall q,q' \in \staa$, if $\exists w \in A^*$ such that 
$\weia(w)(q,q') \ge 1$ then $q \in S_i, q'\in S_{j}$ for some $i \le j$;

\item $\exists B \ge 0$ such that $\forall 0 \le i \le k$, $\forall q,q' \in S_i, \forall w \in A^*$, $\weia(w)(q,q') \le B$;

\item $\forall q \in S_i, (\inia \weia(w))(q) = \mc{O}(|w|^i)$.

\end{itemize}

\end{itemize}

Furthermore, $k$ and $S_0, \dots, S_k$ are computable in $\PTIME$.

\end{lemma}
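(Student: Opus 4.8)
The plan is to analyze the asymptotic behavior of a trim $\mb{N}$-automaton $\auto = (A,\staa,\inia,\weia,\fina)$ by studying the structure of its transition monoid and the strongly connected components (SCCs) of its underlying graph. The central object is the \emph{idempotent} behavior under pumping: since $\weia$ is a morphism into $\mb{N}^{\staa \times \staa}$, for any word $u$ the matrices $\weia(u^{n!})$ stabilize into idempotent-like behavior (by finiteness of the transition monoid up to the $0/{\ge}1$ pattern of entries), and iterating such $u$ lets us detect whether entries $\weia(u^n)(q,q')$ stay bounded or grow. I first introduce the graph $\graph$ on vertex set $\staa$ with an edge $q \to q'$ whenever $\weia(a)(q,q') \ge 1$ for some $a \in A$, so that paths in $\graph$ correspond to words with nonzero weight. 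Trimness guarantees every state lies on some accepting path, so every vertex is relevant.

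Next I would distinguish two causes of growth. Inside a single SCC, growth comes from a state having \emph{two distinct cycles through it} (or equivalently a cycle on which some entry $\weia(w)(q,q)$ exceeds $1$): this produces exponential growth, because pumping the corresponding loop multiplies the count. The precise criterion is the existence of $q$ and $w$ with $\weia(w)(q,q) \ge 2$, detectable in $\PTIME$ by a Bellman–Ford / matrix-saturation computation over the Boolean-plus-bounded semiring tracking whether entries exceed $1$. If no such "heavy cycle" exists, then within each SCC all entries $\weia(w)(q,q')$ stay bounded by some uniform $B$ (this is the second bullet of the polynomial case), and growth can only accumulate \emph{across} SCCs: each time a run passes through a chain of distinct SCCs, the number of surviving paths can multiply by a factor that is linear in the length spent. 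I would then define $S_i$ by a \emph{layering of the SCC-DAG}: assign to each SCC a level equal to the maximal number of "productive" SCC-transitions (those contributing unbounded multiplicity) on any path reaching it, and let $S_i$ collect the states whose SCC has level $i$. The first bullet (edges only go from $S_i$ to $S_j$ with $i \le j$) follows since levels are non-decreasing along edges of the DAG, and $k$ is the maximal level.

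The polynomial-degree bound $(\inia\weia(w))(q) = \mc{O}(|w|^i)$ for $q \in S_i$ I would prove by induction on the level $i$, splitting any word $w$ at the points where the run transitions between SCCs: within an SCC the weight is bounded by $B$, and each of the at most $i$ inter-layer passages contributes a factor at most linear in the portion of $|w|$ consumed, so a standard counting of ordered $i$-tuples of breakpoints along $w$ yields the $\mc{O}(|w|^i)$ bound. Matching lower bounds ($g(w) = \Omega(|w|^k)$ on an infinite $L$) come from exhibiting, via trimness, an explicit family of inputs that realizes the maximal chain of $k$ productive layers with each segment pumped to length $\Theta(|w|)$. Decidability in $\PTIME$ reduces to SCC decomposition, the heavy-cycle test, and computing longest paths in the SCC-DAG, all polynomial.

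\textbf{The main obstacle} I anticipate is making the dichotomy between bounded-within-SCC and exponential rigorous and checkable in $\PTIME$: one must show that the \emph{absence} of a heavy cycle ($\weia(w)(q,q) \le 1$ for all $w$ and all $q$ in one SCC) genuinely forces a \emph{uniform} bound $B$ on \emph{all} entries $\weia(w)(q,q')$ within that SCC over \emph{all} words $w$, not merely per word. This requires a pumping argument showing that any unbounded entry would force, by the pigeonhole principle on the finite set of $0/{\ge}1$ matrix patterns, a repeated pattern enclosing a multiplying loop — hence a heavy cycle — contradiction. Getting the quantifiers right (uniform $B$ independent of $w$) and extracting it effectively in polynomial time is the delicate technical core; the cross-layer counting for the degree bound is comparatively routine once this uniform bound is in hand.
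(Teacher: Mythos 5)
Your overall architecture matches the paper's (heavy cycles give exponential growth; otherwise a DAG-based layering with layers of bounded multiplicity, an inductive splitting argument for the $\mc{O}(|w|^i)$ upper bound, and pumping for the lower bound), but the core of your construction --- the definition of the layers $S_i$ --- has a genuine gap. You layer the SCC-DAG by counting ``productive'' SCC-transitions, where productive means ``contributing unbounded multiplicity''. No per-edge notion of productivity can work, because the pattern responsible for polynomial growth is non-local: it is the \emph{barbell}, i.e.\ a \emph{single} word $v$ and states $q \neq q'$ with $\weia(v)(q,q) \ge 1$, $\weia(v)(q,q') \ge 1$ and $\weia(v)(q',q') \ge 1$. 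Two concrete failures. First, take states $q, r, q'$ over the one-letter alphabet $\{a\}$ with $\weia(a)(q,q)=\weia(a)(q,r)=\weia(a)(r,q')=\weia(a)(q',q')=1$ and all other entries $0$. Then $\weia(a^n)(q,q')=n-1$, so the true growth is linear ($k=1$); yet the SCCs are $\{q\},\{r\},\{q'\}$ and multiplicity across each individual SCC-edge stays bounded ($\weia(w)(q,r)\le 1$ and $\weia(w)(r,q')\le 1$ for every $w$), so no edge is productive under any local reading, your layering collapses everything to level $0$, and your second bullet (uniform bound inside $S_0$) is then violated. The barbell here goes from $q$ to $q'$ with $v=aa$ and spans two SCC-edges at once, so it cannot be attributed to either edge. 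Second, if instead ``productive'' is read topologically as ``an edge between two SCCs that both carry cycles'', take $\weia(a)(q,q)=1$, $\weia(b)(q,q')=1$, $\weia(a)(q',q')=1$: every word with nonzero weight from $q$ to $q'$ factors uniquely as $a^m b a^n$, so $g$ is bounded, yet this reading outputs $k=1$ and the $\Omega(|w|)$ lower bound required by ``$k$-polynomial growth'' is false. The single-word requirement in the barbell is exactly what makes the two loops and the connector pumpable \emph{simultaneously}, which your lower-bound sketch silently assumes.

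The paper avoids both problems by building its DAG $\graph$ directly from barbells: there is an edge $(q_1,q_2)$ iff $q_1$ reaches some $q$, there is a barbell from $q$ to $q'$, and $q'$ reaches $q_2$; absence of heavy cycles makes this graph acyclic, the layers $S_i$ are its height classes, and the $\Omega(|w|^k)$ witness is $v_1^{\ell}u_1 v_2^{\ell}\cdots u_{k-1}v_k^{\ell}$ pumped along a maximal chain of $k$ barbells. Incidentally, the step you single out as the main obstacle --- that absence of the bad patterns forces a uniform bound $B$ --- is real but is not where the difficulty of this lemma lies: it is the hard direction of the known pattern theorem of Simon and Sakarovitch--de Souza, which the paper simply cites (Proposition~\ref{prop:simon}) and applies to each induced sub-automaton, after checking that a barbell cannot lie inside a single height class. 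If you replace ``productive SCC-transition'' by ``barbell'' and count maximal barbell chains instead of SCC hops, your argument becomes essentially the paper's proof.
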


\begin{remark} An upper bound $B$ can be described explicitly, see e.g. \cite{sakarovitch2008decidability}.

\end{remark}

\begin{proof}[Proof sktech.] Very similar results are obtained in \cite{weber1991degree} for computing ambiguity of finite automata (which corresponds to $\mb{N}$-automata with weights in $\{0,1\}$ only).  However, in order to keep the paper self-contained, we give a detailed proof  in Appendix.

Mainly, we look for the presence of the two patterns from \cite{sakarovitch2008decidability} in the weights of $\auto$:
\begin{itemize}

\item \emph{heavy cycles} ($\exists q \in \staa, v \in A^*$ such that $\weia(v)(q,q) \ge 2$), that creates exponential growth ;

\item  \emph{barbells} ($\exists q \neq q', v \in A^{+}$, such that $\weia(v)(q,q)  \ge 1$, $\weia(v)(q,q')  \ge 1$ and $\weia(v)(q',q')  \ge 1$) such that a chain of $k$ barbells induces $k$-polynomial growth.
\end{itemize}
\vspace*{-1\baselineskip}
\end{proof}

As a consequence, if $f$ is computed by an $\SST$, then $|f|$ has either exponential growth or $k$-polynomial growth for some $k \ge 0$. Furthermore, we can decide it in $\PTIME$.

It remains to show that if $|f|$ has a $(k+1)$-polynomial growth, then $f$ is computable by a $k$-layered $\SST$. For this, we shall use the partition of Lemma \ref{lem:graph} that splits the simple $\SST$ (via the states of its flow automaton) in a somehow $k$-layered way. However, the layers obtained are not directly copyless, and another transformation is necessary.

\subsection{Construction of $k$-layered $\SST$ in the polynomial case}

If $|f|$ has $(k+1)$-polynomial growth, then Lemma \ref{lem:graph} partitions
the simple $\SST$ in $k+2$ layers.  Our first concern is to get $k+1$ layers
only, since we want a $k$-layered $\SST$.  In the next definition,
$\lambda(p,w)$ denotes the substitution applied when reading $w \in A^*$ from
$p\in Q$, that is $\lambda(p,w[1]) \circ \cdots
\lambda(\delta(p,w[1{:}(i-1)]),w[i]) \circ \cdots \circ
\lambda(\delta(p,w[1{:}(|w|-1)]),w[|w|])$.

\begin{definition} \label{def:bounded} 
  We say that an $\SST$ $(A, B, Q, \regs, q_0, \iota, \delta, \lambda, F)$ is
  \emph{$(k,B)$-bounded} if there exists a partition $\regs_0,\regs_1, \dots,
  \regs_k$ of $\regs $ such that $\forall q \in Q, a \in A, w \in A^*$:
  \begin{itemize}
    \item $\forall 0 \le i \le k$, only registers from $\regs_0, \dots, \regs_i$ appear in $\{\lambda(q,a)(x) \mid {x \in \regs_i}\}$;
    \item $\forall 0 \le i \le k$, each $y \in \regs_i$ appears at most
    $B$ times in $\{\lambda(q,w)(x) \mid x \in \regs_i\}$.
  \end{itemize}
\end{definition}

For $k=0$, Definition \ref{def:bounded} means that at most $B$ copies of $y$ are "useful" during a run.  The $(0,B)$-bounded $\SSTs$ are known as \emph{$B$-bounded (copy) $\SSTs$}
in \cite{dartois2016aperiodic} (however, contrary to what is said in
\cite{dartois2016aperiodic,filiot2017copyful}, it is not the same definition as
the "bounded copy" of \cite{alur2012regular}). For some $k \ge 1$, a $(k,B)$-bounded $\SST$ is similar to a $k$-layered $\SST$, except that each layer is no longer "copyless in itself" but "$B$-bounded in itself". In particular, we note that $(k,1)$-bounded $\SSTs$ exactly correspond to $k$-layered $\SSTs$.

\begin{lemma} \label{lem:boundedcopy} 
  Given a simple $\SST$ that computes a function $f$ such that $|f|$ has
  $(k+1)$-polynomial growth, we can build an equivalent $(k,B)$-bounded $\SST$ for
  some $B \ge 0$.
\end{lemma}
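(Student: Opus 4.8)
The plan is to run Lemma \ref{lem:graph} on the flow automaton and then delete its lowest, constant-growth layer by folding the contents of that layer into the finite control. By Claim \ref{claim:nauto}, $\fauto{\trans}$ computes $|f|$, hence it has $(k+1)$-polynomial growth, and we may assume it is trim. Applying Lemma \ref{lem:graph} produces a partition $\regs = S_0 \uplus S_1 \uplus \cdots \uplus S_{k+1}$ into $k+2$ layers together with a bound $B$ such that: \textbf{(a)} whenever a register $z$ occurs in $\lambda(a)(x)$, the layer index of $z$ is at most that of $x$ (this is the first item of Lemma \ref{lem:graph} applied to $w=a$, since then $\weia(a)(z,x)\ge 1$); \textbf{(b)} $\weia(w)(z,x)\le B$ for all $w$ and all $z,x$ in a common layer $S_i$; and \textbf{(c)} every $x\in S_i$ satisfies $|\trans^w(x)| = (\inia\weia(w))(x) = \mc{O}(|w|^i)$. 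Read with these $k+2$ layers, the $\SST$ is already bounded in the sense of Definition \ref{def:bounded}, except that it carries one layer too many; the whole difficulty is to suppress $S_0$.

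The key observation is that $S_0$ is self-contained and holds only short words. By \textbf{(a)} applied to $x\in S_0$, every register occurring in $\lambda(a)(x)$ again lies in $S_0$, so the restriction of the substitutions to $S_0$ forms a sub-$\SST$ evolving independently of all other registers. By \textbf{(c)} with $i=0$ there is a constant $N$ with $|\trans^w(x)|\le N$ for all $w$ and all $x\in S_0$; as $B$ is finite, there are only finitely many possible tuples $(\trans^w(x))_{x\in S_0}\in (B^{\le N})^{S_0}$, and reading a letter updates such a tuple deterministically. Hence the contents of the $S_0$-registers can be computed on the fly and stored in a finite set of states (this is effective, since $N$ can be bounded explicitly).

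I would therefore build a new $\SST$ $\trans'$ with register set $\regs' := \regs\setminus S_0$, layers $\regs_i := S_{i+1}$ for $0\le i\le k$, and a state component recording the current $S_0$-tuple. When reading $a$ from a state encoding the tuple $(u_z)_{z\in S_0}$ of $S_0$-values reached so far, I define $\lambda'(a)(x)$ for $x\in\regs'$ by taking $\lambda(a)(x)$ and replacing each occurrence of a register $z\in S_0$ by the literal word $u_z\in B^*$, leaving the $\regs'$-registers untouched; simultaneously the state moves to the new $S_0$-tuple via the $S_0$-sub-$\SST$. The output $F$ is treated the same way using the final tuple. Since $\SST$ updates substitute the register values reached \emph{before} the current transition, and the state indeed records those previous $S_0$-values, a routine induction on prefixes gives $\trans'^w(x)=\trans^w(x)$ for all $x\in\regs'$, so $\trans'$ computes $f$. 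It remains to check Definition \ref{def:bounded}: the layering condition is inherited from \textbf{(a)}, because a surviving register occurring in $\lambda'(a)(x)$ with $x\in\regs_i=S_{i+1}$ lies in some $S_j$ with $1\le j\le i+1$, i.e. in $\regs_0\cup\cdots\cup\regs_i$; and since inlining $S_0$-registers as letters can only remove occurrences of surviving registers, \textbf{(b)} still bounds $\weia(w)(z,x)\le B$ for $z,x\in S_{i+1}$, so each surviving register occurs at most $B\,|\regs|$ times across the updates of its own layer. Thus $\trans'$ is $(k,\,B\,|\regs|)$-bounded and equivalent to $\trans$.

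The hard part is the middle step: recognising that the constant-growth layer $S_0$ is update-closed (condition \textbf{(a)}) and has only boundedly many possible contents (condition \textbf{(c)}), so that it can be erased by pushing its values into the finite state without either breaking the layering direction or inflating the per-layer copy counts of the remaining layers.
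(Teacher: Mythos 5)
Your proposal is correct and follows essentially the same route as the paper: apply Lemma \ref{lem:graph} to the flow automaton, observe that the layer $S_0$ is update-closed and stores only words of bounded length, hardcode its contents into the finite control while inlining its values into the substitutions, and keep $S_1,\dots,S_{k+1}$ as the $k+1$ bounded layers. Your treatment is in fact slightly more careful than the paper's on one point: you correctly note that the per-pair bound $\weia(w)(z,x)\le B$ yields a per-layer occurrence bound of $B\,|\regs|$ (the paper asserts $(k,B)$-boundedness directly), which is harmless since the lemma only claims existence of some bound.
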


\begin{proof}[Proof sketch.] Let $S_0, \dots, S_{k+1}$ be the partition of the registers given by Lemma \ref{lem:graph}. During a run, note that the registers in $S_0$ can only store strings of a bounded size. The idea is to remove $S_0$ and hardcode the content of each $x \in S_0$ in  a finite set of states. The transition function is defined following their former updates. The new update function is defined by replacing the mention of $x \in S_0$ by its explicit content (given by the current state).
\end{proof}

\begin{remark} As for weighted automata above, an upper bound $B$ can effectively be computed.
\end{remark}

It is known that a $(0,B)$-bounded $\SST$ can be transformed in a copyless $\SST$. The proof requires rather complex constructions, that we generalize for a $(k,B)$-bounded $\SST$.

\begin{lemma} \label{lem:kbounded} Given a $(k,B)$-bounded $\SST$, we can build an equivalent $k$-layered $\SST$.
\end{lemma}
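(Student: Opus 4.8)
The plan is to reduce the statement to the case $k=0$ and then apply (a generalization of) the known transformation of bounded-copy $\SSTs$ into copyless ones. Recall from Definition~\ref{def:bounded} that a $(k,B)$-bounded $\SST$ comes with a partition $\regs_0, \dots, \regs_k$ whose updates are \emph{triangular} (only $\regs_0, \dots, \regs_i$ occur in the updates of the registers of $\regs_i$) and such that each layer is \emph{$B$-bounded in itself}. We want to turn it into an $\SST$ where each layer is \emph{copyless in itself}, i.e. a $(k,1)$-bounded $\SST$, which by the remark following Definition~\ref{def:bounded} is exactly a $k$-layered $\SST$ in the sense of Definition~\ref{def:ksst}.

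The key observation is that, from the viewpoint of a fixed layer $\regs_i$, the strictly deeper layers $\regs_0, \dots, \regs_{i-1}$ behave like additional terminal symbols. Indeed, the triangular condition ensures that $\regs_i$ never contributes to the updates of the deeper layers, and the definition of $k$-layered $\SSTs$ allows deeper layers to be duplicated arbitrarily many times. Hence, treating $B \uplus \regs_0 \uplus \dots \uplus \regs_{i-1}$ as an extended output alphabet, making layer $i$ copyless-in-itself is precisely an instance of the case $k=0$: transforming a $(0,B)$-bounded $\SST$, over this extended alphabet and with one output function per register of $\regs_i$, into a copyless one.

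For this single-layer core I would invoke and generalize the construction of \cite{dartois2016aperiodic} that turns a $(0,B)$-bounded $\SST$ into a copyless one. Its heart is to represent the value of each register in a \emph{distributed} form over boundedly many physical copies, the bound being provided by the $B$-boundedness hypothesis, and to simulate each copyful update by a copyless substitution on these copies while carrying the relevant routing information in the finite state. The delicate point, handled exactly as for $k=0$, is that this has to be carried out in a single left-to-right pass, i.e. without knowing in advance how many copies will eventually be useful. Once layer $i$ has been reorganised, the value of each former register $x \in \regs_i$ is recovered as a fixed combination of the new copies; this combination is then substituted for $x$ in the updates of the higher layers, which is harmless since those layers are allowed to duplicate $\regs_i$ freely.

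Applying this transformation to every layer and propagating these renamings upward yields a single $\SST$ that is copyless in each layer and keeps the triangular structure, hence a $k$-layered $\SST$. The main obstacle is the single-layer core itself, which is intricate precisely because it is one-way; by contrast, the reduction to single layers and the bottom-up combination are conceptually straightforward, thanks to the triangularity and to the fact that deeper layers may be copied without restriction. Note that, as for Lemma~\ref{lem:simple}, enlarging the register set by a bounded factor does not affect the layered structure.
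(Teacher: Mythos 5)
Your overall plan---handle one layer at a time, view the strictly deeper layers as alphabet-like material, and generalize the $(0,B)$-bounded-to-copyless construction of \cite{dartois2016aperiodic}---is the same as the paper's. The gap is the pivotal claim that making layer $i$ copyless-in-itself ``is precisely an instance of the case $k=0$'' over the extended output alphabet $B \uplus \regs_0 \uplus \dots \uplus \regs_{i-1}$. An occurrence of a deeper register $x$ in an update of layer $i$ is \emph{not} an inert letter: by $\SST$ semantics it denotes the value of $x$ at the moment the update is applied, and that value changes along the run. A black-box $k=0$ transformation only guarantees that the transformed layer computes the same \emph{final string} over the extended alphabet; it is free to write the symbol $x$ into its registers at different input positions than the original machine did, and once $x$ is re-interpreted as a register reference, these shifted occurrences evaluate to different values. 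So the single-layer statement you need is strictly stronger than functional equivalence over an extended alphabet: it must also preserve the positions at which deeper-layer values are grabbed. A second, related failure: the bounded-to-copyless constructions work precisely by \emph{discarding} the values of former registers that do not flow into the machine's own final output (keeping exactly the number of copies guessed to be useful, then disambiguating). Under your scheme, the higher layers must read the values of former layer-$i$ registers at \emph{intermediate} positions---possibly exactly the values the transformed machine has discarded---and requiring it to maintain all of them at all times is impossible copylessly, since that is the original copyful computation.

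This is exactly why the paper introduces $\SSTs$ with external functions ($\SSTF$), in which a function name occurring in the substitution applied at position $m$ is, \emph{by definition of the semantics}, evaluated on the prefix $w[1{:}m]$: the timing becomes part of the model, so the correctly strengthened single-layer core is Lemma \ref{lem:mainSSTF} (a $B$-bounded $\SSTF$ has an equivalent copyless $\SSTF$ using the \emph{same} external functions), whose proof (nondeterministic copy-guessing following \cite{dartois2016aperiodic}, then removal of ambiguity following \cite{alur2012regular}) is explicitly checked to place the calls at the same positions, i.e.\ to preserve their ``origin semantics''. Moreover, the paper's induction is organized top-down rather than by substituting new representations upward: the \emph{top} layer becomes a copyless $\SSTF$ whose oracles are the original values $w \mapsto \trans^w(x)$ for $x$ in deeper layers---available at every prefix because the recursively constructed $(k-1)$-layered $\SST$ is total---and a final product construction assembles the layers. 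Your sketch needs this strengthened lemma, or an equivalent origin-preservation argument, to go through; as written, the reduction to the $k=0$ case is unsound.
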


\begin{proof}[Proof sketch.] The proof is done by induction on $k \ge 0$. Indeed a $(k,B)$-bounded (resp. $k$-layered) $\SST$ is somehow a $B$-bounded (resp. copyless) $\SST$, that can also "call" registers from the lower layers. Hence we only need to focus on transforming \emph{one} layer. The difficulty is to take into account copies coming from the lower layers. This is done by introducing an intermediate model of \emph{$\SST$ with external functions} ($\SSTF$), which corresponds to an $\SST$ with a set of functions $\oras$ that can be called in an oracle-like style.

Thus our proof roughly consists in showing that a $B$-bounded $\SSTF$ (in the sense of Definition \ref{def:bounded}) can be transformed in a copyless $\SSTF$. This is done in two steps. First, we transform the $B$-bounded $\SSTF$ in a copyless \emph{non-deterministic} $\SSTF$, following the ideas of \cite{dartois2016aperiodic} for $\SST$. Non-deterministic transducers usually compute relations between words, but we in fact obtain an \emph{unambiguous} machine (i.e. that has at most one accepting run on each input), hence describing a function. Second, we show that a copyless unambiguous non-deterministic $\SSTF$ can be converted in a copyless $\SSTF$. This transformation relies on the techniques of \cite{alur2012regular} (developped for $\SST$ over infinite words).
\end{proof}

\begin{proof}[Proof of Theorem \ref{theo:membership}.] 
We first transform an $\SST$ into a simple $\SST$ (Lemma \ref{lem:simple}) and build its flow automaton. Using this machine, one can decide what is the growth of $|f|$ (Lemma \ref{lem:graph}). If $|f|$ has $(k+1)$-polynomial growth, we then build a $(k,B)$-bounded $\SST$ that computes it (Lemma \ref{lem:boundedcopy}) and finally a $k$-layered $\SST$ (Lemma \ref{lem:kbounded}).
\end{proof}

\section{Conclusion and outlook}

We presented in this paper a new correspondence between $\SSTs$ and marble
transducers.  Showing that two models are equivalent is always interesting in
itself, but our result also provides a deeper understanding of their behaviors.
Indeed, it relates recursive and iterative programs (marbles) to streaming
algorithms ($\SSTs$), which are suitable for program optimization problems.
Since the equivalence problem is decidable for $\SSTs$ \cite{filiot2017copyful},
we also obtain for free that it is the case for marble transducers (which was not
previously known).

Note that our model is not closed under composition. It is the case for obvious asymptotic growth reasons, since marble transducers can compute one exponential ($\pow : a^n \mapsto a^{2^n}$) but not two of them ($\pow \circ \pow$). More surprisingly, there exist polynomial-size compositions that cannot be expressed by our transducers, as shown below.

\begin{claim} $\mul : w \#0^n \mapsto (w\#)^n$ is computable by an $\SST$, but not $0^n\#w \mapsto (w\#)^n$.
\end{claim}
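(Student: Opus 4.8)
The plan is to treat the two halves of the claim separately. That $\mul : w\#0^n \mapsto (w\#)^n$ is $\SST$-computable is essentially already in hand: the $1$-layered $\SST$ described just after Figure~\ref{fig:ksst} (store $w\#$ in a register $x$ while reading the prefix, then apply $x\mapsto x,\ y\mapsto xy$ on each $0$) produces $(w\#)^n$ in $y$; alternatively it follows from Example~\ref{ex:mul} together with Theorem~\ref{theo:sstmt}. So I would merely recall this, and spend all the effort on the impossibility of the \emph{reversed} function $g : 0^n\#w \mapsto (w\#)^n$.

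Suppose for contradiction that a (total) $\SST$ $\trans$ with register set $\regs$, $R:=|\regs|$, computes $g$. The structural engine of the proof is that, for a fixed $n$, the situation after the prefix $0^n\#$ is completely determined: write $\nu_n := \trans^{0^n\#}$ for the register valuation and let $q^\ast$ be the state reached. Reading a suffix $w = w_1\cdots w_\ell$ then composes the substitutions $\Sigma_w := \lambda(q^\ast,w_1)\circ\cdots\circ\lambda(\cdot,w_\ell)$, so that the output factors as
\[
  (w\#)^n \;=\; \nu_n\bigl(P_w\bigr), \qquad\text{where } P_w := \Sigma_w\bigl(F(\delta(q^\ast,w))\bigr) \in (\regs \cup B)^*.
\]
The key quantitative point is that the \emph{template} $P_w$ is short: since each substitution multiplies register-occurrences by at most a machine constant $C$, we have $|P_w| \le (\max_q |F(q)|)\,C^{\ell} =: K_\ell$, a bound \emph{independent of $n$}. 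Hence $(w\#)^n$ is a concatenation of at most $K_\ell$ register values $\nu_n(y)$ interleaved with at most $K_\ell$ literals, all register values drawn from the fixed pool $\{\nu_n(y)\mid y\in\regs\}$, which does not depend on $w$.

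From here I would run a periodicity-and-pigeonhole argument. Fix $\ell$ with $2^\ell > R$ and consider the $2^\ell$ suffixes $w\in\{a,b\}^\ell$. Since $|(w\#)^n| = n(\ell+1)$ is split among at most $K_\ell$ register pieces, some register value $\nu_n(y(w))$ occurring in $P_w$ has length $\ge (n(\ell+1)-K_\ell)/K_\ell$, and by construction it is a \emph{contiguous factor} of $(w\#)^n$. As $K_\ell$ is independent of $n$, choosing $n$ large (for this fixed $\ell$) forces $|\nu_n(y(w))| \ge 2(\ell+1)$ for every $w$ at once. Now $(w\#)^n$ has period $w\#$, so a factor of length $\ge 2(\ell+1)$ itself has period $\ell+1$ and determines $w\#$ up to cyclic rotation; and since $w\#$ has a single $\#$ sitting at its last position, it is the unique word of its rotation class ending in $\#$. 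Therefore $y(w)=y(w')$ would make $\nu_n(y(w))=\nu_n(y(w'))$ a long common factor of $(w\#)^n$ and $(w'\#)^n$, forcing $w\#=w'\#$, i.e.\ $w=w'$. Thus $w\mapsto y(w)$ is injective from a set of size $2^\ell$ into $\regs$, contradicting $2^\ell>R$.

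The main obstacle, and the reason a crude size/counting argument fails, is exactly the exponential blow-up $K_\ell$ of the template: one cannot bound the number of distinct producible outputs by the number of templates, since templates may be exponentially long in $\ell$. The periodicity argument sidesteps this by extracting from each output a \emph{single} long register factor that already encodes $w$, reducing everything to a pigeonhole on $|\regs|$ alone. I would double-check the two delicate points: that the register-occurrence count of $P_w$ is bounded by $K_\ell$ uniformly in $n$ (so that the dominant factor grows unboundedly with $n$ for fixed $\ell$), and that a factor of length $\ge 2(\ell+1)$ of $(w\#)^n$ genuinely recovers the rotation class of $w\#$.
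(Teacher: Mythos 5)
Your proposal is correct, but there is nothing in the paper to compare it against: the claim is stated in the conclusion \emph{without proof}, accompanied only by the informal remark that marbles and $\SSTs$ impose an orientation on the input. Your argument supplies the missing formalization, and it does so along exactly the lines that remark suggests. The positive half is indeed the $1$-layered $\SST$ of the example following Figure~\ref{fig:ksst}. For the negative half, your three ingredients are sound: (i) under the paper's composition convention $(s_1\circ s_2)(x)=s_1(s_2(x))$, the output on $0^n\#w$ factors as $\nu_n(P_w)$ with $\nu_n=\trans^{0^n\#}$ applied outermost, and the template $P_w=\Sigma_w(F(\delta(q^\ast,w)))$ has size at most $K_\ell=(\max_q|F(q)|)\,C^\ell$, independent of $n$; (ii) hence for $n$ large enough (depending only on $\ell$ and the machine) some register value $\nu_n(y(w))$ is a contiguous factor of $(w\#)^n$ of length at least $2(\ell+1)$, and such a factor spans two consecutive occurrences of $\#$, so it contains $\#w\#$ and determines $w$ (your rotation-class phrasing is an equivalent way to see this, since $w\#$ is the unique word in its rotation class ending in $\#$); (iii) since $\nu_n$ is computed from the prefix alone and does not depend on the suffix, $y(w)=y(w')$ forces $\nu_n(y(w))=\nu_n(y(w'))$ as strings and hence $w=w'$, so $w\mapsto y(w)$ is injective and $2^\ell>|\regs|$ gives the contradiction. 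Two small points to clean up in a final write-up. First, the state $q^\ast=\delta(q_0,0^n\#)$ depends on $n$; this is harmless because the bound $K_\ell$ is uniform over states and your pigeonhole runs at a single fixed $n$, but the notation should acknowledge it. Second, your pigeonhole needs the alphabet of $w$ to contain at least two letters; if $w$ ranged over a unary alphabet $\{a\}$ the same machinery still works, but one must instead compare suffixes of $|\regs|+1$ distinct lengths $m_1<\dots<m_{|\regs|+1}$, using that a long register value that is a factor of $(a^m\#)^n$ likewise determines $m$ --- so the claim is not weakened, only your specific counting step needs adjusting in that degenerate case.
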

This result mainly comes because marbles and $\SSTs$ give an \emph{orientation} on the input: we stack marbles "on the right", and the $\SST$ is a streaming process "from left to right".

\subparagraph*{Marbles and pebbles.} A non-oriented generalization of $k$-marbles, named \emph{$k$-pebble transducers} \cite{bojanczyk2018polyregular}, has recently been investigated in detail. In this case, the reading head is allowed to move on the right of a mark without lifting it, while keeping a stack discipline. The typical example of function computable with $1$ pebble is $\sq : A \rightarrow A \uplus \{\overline{a}\mid a \in A\}, abc \mapsto \overline{a}bc a \overline{b}c ab \overline{c}$, which associates to $w$ the concatenation with $|w|$ copies of itself, the $i$-th copy having its $i$-th letter overlined. This function cannot be computed by a marble transducer.

In \cite{lhote2020pebble}, the membership problem is solved for the classes of $k$-pebble transducers. Despite their similarity, neither their result (Theorem \ref{theo:nathan} below) nor our \mbox{Proposition \ref{prop:growth}} imply each other, and the proof techniques are significantly different. Indeed we consider different classes of functions, ours being designed for streaming implementations, but not theirs. The relationship between marbles and pebbles is depicted in Figure \ref{fig:conclu}.

\begin{theorem}[\cite{lhote2020pebble}] \label{theo:nathan} A function $f$ described by a $k$-pebble transducer is computable by an $\ell$-pebble transducer if and only if $|f(w)| = \mc{O}(|w|^{\ell+1})$ (and this property is decidable).
\end{theorem}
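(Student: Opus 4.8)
The statement is an equivalence together with a decidability claim, so I would prove it in three parts. The necessary condition is the routine complexity bound already used for marbles: an $\ell$-pebble transducer has only $\mc{O}(|w|^{\ell+1})$ configurations on input $w$ (the positions of the at most $\ell$ pebbles and of the head, together with a finite control), and since an accepting run visits each configuration at most once, its output satisfies $|f(w)| = \mc{O}(|w|^{\ell+1})$. All the content lies in the converse. Here I would emphasize that, unlike for marbles, the proof cannot be routed through an $\SST$: pebble transducers are genuinely two-way and non-oriented and admit no streaming normal form, so the growth analysis has to be carried out directly on the nested two-way runs. This is exactly why, as noted above, the techniques differ from those behind Proposition \ref{prop:growth}.

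For the converse, I would view the run of a $k$-pebble transducer as a tree of nested two-way sub-runs: fixing the outer pebbles at positions $(p_1,\ldots,p_j)$ leaves a $(k-j)$-pebble sub-computation, and the global output is the concatenation of the transition labels in the order they are visited. Each configuration contributes a label of bounded length and there are $\mc{O}(|w|^{k+1})$ configurations, so $|f(w)|$ is, up to a bounded factor, the number of \emph{productive} configurations. I would build from this an $\mb{N}$-valued abstraction counting the output length contributed by each tuple of pebble and head positions --- the pebble analogue of the flow automaton of Claim \ref{claim:nauto} --- and analyze its asymptotics by a pumping / factorization-forest argument generalizing the barbell analysis of Lemma \ref{lem:graph} to nested two-way runs. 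The degree of $|f|$ is then governed by the maximal number of position dimensions that can be pumped simultaneously and independently: a chain of $d$ such dimensions forces $|f(w)| = \Omega(|w|^{d})$. Consequently the hypothesis $|f(w)| = \mc{O}(|w|^{\ell+1})$ bounds the number of genuinely free nesting levels by $\ell+1$.

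It then remains to turn this bound into an actual $\ell$-pebble transducer, and I expect this to be the main obstacle. The plan is a pebble-elimination step: whenever a nesting level cannot be pumped independently of the others --- equivalently, whenever the position of one pebble is determined, up to bounded ambiguity, by the positions of the remaining ones --- that pebble carries only bounded information and can be absorbed into the finite control (or merged with an adjacent level) without changing the computed function. Iterating removes $k-\ell$ pebbles and produces an $\ell$-pebble transducer. Decidability is then a byproduct: the pumping analysis is effective, so the exact polynomial degree of $|f|$ is computable by searching for the longest chain of independently pumpable dimensions (there is no exponential case, since a pebble transducer already runs in polynomial time), and one simply compares it with $\ell+1$. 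The delicate point is the elimination construction itself: in the non-streaming two-way setting one cannot invoke the register- and layer-based normal forms used for marbles, so showing that a level of bounded contribution can be simulated inside the crossing-sequence information of the surrounding levels is precisely where the argument becomes hard and where it departs from the $\SST$-based proof of Proposition \ref{prop:growth}.
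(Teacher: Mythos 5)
First, a point of context: this paper does not prove Theorem~\ref{theo:nathan} at all. It is imported verbatim from \cite{lhote2020pebble} in the conclusion, purely for comparison with Proposition~\ref{prop:growth}, and the authors explicitly state that neither result implies the other and that the proof techniques are significantly different. So there is no proof in the paper to compare you against; your attempt can only be measured against Lhote's argument and on its own merits. Your opening observation --- that the $\SST$ route used for marbles is unavailable for pebbles, so the analysis must be carried out on the nested two-way runs directly --- is correct and matches the paper's own remarks.

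On its own merits, however, the proposal has a genuine gap, which you half-concede. The forward direction (configuration counting) is fine. The entire content is the converse, and there your argument rests on a dichotomy: either a nesting level is independently pumpable (forcing $\Omega(|w|^d)$ growth), or \emph{the position of one pebble is determined, up to bounded ambiguity, by the positions of the remaining ones}, in which case that pebble \emph{carries only bounded information and can be absorbed into the finite control}. This dichotomy is false as stated. Consider the identity function computed wastefully by a $1$-pebble transducer that sweeps its pebble over all $n$ positions and, for each pebble position $i$, runs an inner pass outputting just $w[i]$: the growth is linear, so the theorem says the pebble is eliminable, yet its position ranges freely over the whole input and is determined by nothing --- what is bounded is the \emph{production} of the inner level per pebble position, not the ambiguity of the pebble's location. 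Eliminating such a pebble therefore cannot consist in storing a bounded quantity in the state; it requires restructuring the computation so that the outer levels produce on the fly what the inner level was contributing (in the example, flattening two nested sweeps into one). That restructuring is precisely the hard step, and in \cite{lhote2020pebble} it is carried out by a recursive construction over factorization forests (Simon's theorem), with the pumping analysis and the transducer construction intertwined. Since you explicitly defer this step as ``the main obstacle'' and ``where the argument becomes hard,'' what you have is a plausible strategy outline with an incorrect elimination criterion at its core, not a proof.
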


Contrary to us, they do not obtain tight asymptotic bounds. Furthermore, they only consider machines with a bounded number of marks, i.e. no exponential growths.

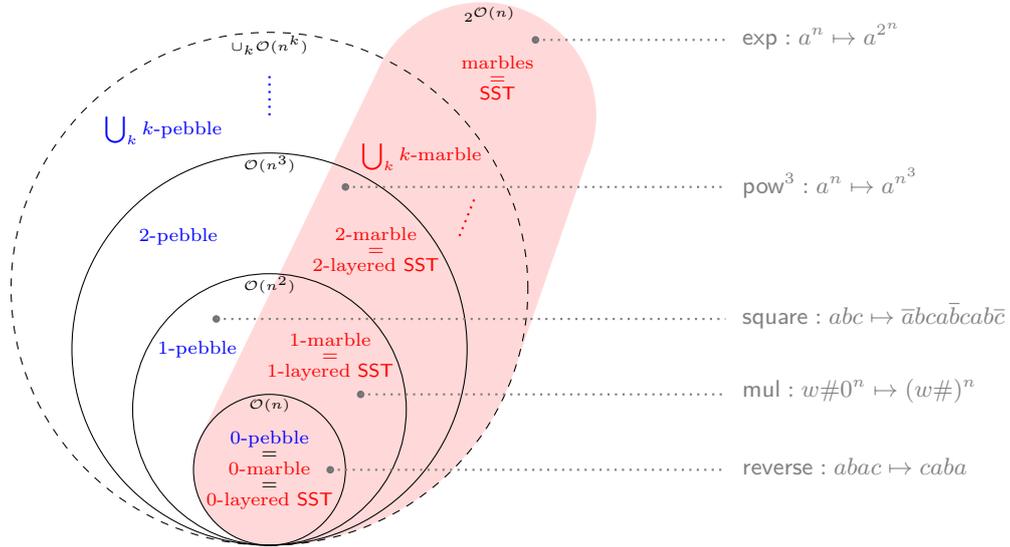
\begin{figure}[h!]

    \def\acircle{(0,0) circle (1)}
    \def\bcircle{(0,0.8) circle (1.8)}
    \def\ccircle{(0,1.6) circle (2.6)}
    \def\mcircle{(6.32,-1.5) circle (7.5)}
    \def\icircle{(0,2.4) circle (3.4)}
    
    	\begin{center}
	\hspace*{-0.4cm}
        \begin{tikzpicture}{scale=0.9}

            \fill[red!15] \acircle;     
            \fill[red!15] {(2.8,4.7) circle (1.5)};

            \begin{scope}
            \clip {(0,-1) rectangle (8,2)};
            \fill[red!15] \icircle;
            \end{scope}
            
            \fill[red!15] (-0.94,0.37) -- (1.404,5.25) -- (4.246,4.3) -- (3.25,1.4) -- cycle;

            \draw \acircle node {$\substack{\textcolor{blue}{0\text{-pebble}} \\ =\\ \textcolor{red}{0\text{-marble}} \\ = \\ \textcolor{red}{0\text{-layered } \SST}}$};
            \draw \bcircle ;
            \draw \ccircle;
            \draw[dashed] \icircle;
            
            \draw[blue,dotted,thick] (0,4.7) -- (0,5.25);
             \draw[red,dotted,thick] (2.5,3.1) -- (2.7,3.6);
            
            \node at (0.8,1.5) {\textcolor{red}{$\substack{1\text{-marble} \\ = \\ 1\text{-layered } \SST}$}};
            \node at (-0.95,1.6) {$\substack{\textcolor{blue}{1\text{-pebble}}}$};
            
             \node at (0,0.86) {\tiny $\mc{O}(n)$};
             \node at (0,2.45) {\tiny $\mc{O}(n^2)$};
             \node at (0,4.05) {\tiny $\mc{O}(n^3)$};
             \node at (0,5.63) {\tiny $\cup_k \mc{O}(n^k)$};
            
            \node at (1.4,2.9) {\textcolor{red}{$\substack{2\text{-marble} \\ = \\ 2\text{-layered } \SST}$}};
            \node at (-1.2,3.1) {$\substack{\textcolor{blue}{2\text{-pebble}}}$};
            \node at (-1.4,4.5) {$\substack{\textcolor{blue}{\bigcup_{k} k\text{-pebble}}}$};
            \node at (2,4.15) {$\substack{\textcolor{red}{\bigcup_{k} k\text{-marble}}}$};

            \node at (3,5.2) {\textcolor{red}{$\substack{\text{marbles} \\ = \\ \SST}$}};
             \node at (2.9,6.04) {\tiny $2^{\mc{O}(n)}$};

             \fill[gray!110] {(0.8,0) circle (0.05)};
             \draw[gray,dotted,thick] (0.8,0) -- (6,0);
             \node[right] at (6.1,0.05) {\small \textcolor{gray}{$\mirror: abac \mapsto caba$}};
             
             \fill[gray!110] {(1.2,1) circle (0.05)};
             \draw[gray,dotted,thick] (1.2,1) -- (6,1);
             \node[right] at (6.1,1.05) {\small \textcolor{gray}{$\mul: w\#0^n \mapsto (w\#)^n$}};

             \fill[gray!110] {(-0.7,2) circle (0.05)};
             \draw[gray,dotted,thick] (-0.7,2) -- (6,2);
             \node[right] at (6.1,2.05) {\small \textcolor{gray}{$\sq: abc \mapsto \overline{a}bc a \overline{b}c ab \overline{c}$}};

            \fill[gray!110] {(1,3.75) circle (0.05)};
             \draw[gray,dotted,thick] (1,3.75) -- (6,3.75);
             \node[right] at (6.1,3.8) {\small \textcolor{gray}{$\powo^3: a^n \mapsto a^{n^3}$}};
             
             \fill[gray!110] {(3.5,5.7) circle (0.05)};
             \draw[gray,dotted,thick] (3.5,5.7) -- (6,5.7);
             \node[right] at (6.1,5.75) {\small \textcolor{gray}{$\pow: a^n \mapsto a^{2^n}$}};

        \end{tikzpicture}
        \end{center}
        
    \caption{\label{fig:conclu} Classes of functions studied in this paper (\textcolor{red}{red}) and in \cite{lhote2020pebble}{} (\textcolor{blue}{blue}).}
\end{figure}

\subparagraph*{Future work.} Our work opens the way to a finer study of the classes of functions described by marble and pebble transducers. The \emph{membership problem from $k$-pebble to $k$-marble} is worth being studied to complete the decidability picture. It reformulates as follows: given a function computed by a pebble transducer, can we implement it in streaming way? The answer seems to rely on combinatorial properties of the output. Another perspective is to define a \emph{logical description} of our transducers, which allows to specify their behavior in a non-operational fashion. No formalism is known for marble transducers, but it is known since \cite{engelfriet2001mso} that two-way transducers correspond to \emph{monadic-second-order transductions}. On the other hand, \cite{droste2019aperiodic} studies in detail a \emph{weighted logics} which describes the functions computed by weighted automata. Using proof techniques which are similar to ours, they relate the asymptotic growth of the function to the logical connectors required to describe it.


\newpage

\bibliographystyle{alpha}
\bibliography{marbles}

\newpage



\appendix

\section{Proof of Theorem \ref{theo:sstmt}}

\subsection{From marble transducers to $\SSTs$}

We transform a marble transducer into an $\SST$.  The main idea is to keep track
of the right-to-right behaviors (the "crossing sequence") of the prefix read so
far.  We adapt the classical transformation of two-way automata to one-way
automata \cite{shepherdson1959reduction}.

Consider a marble transducer $\trans = (A,B,Q,C,q_0, \delta, \lambda,F)$ on input $\lmark w \rmark$. We denote by $\rightarrow$ its transition relation (see page \pageref{mark:transrel}). When reaching a position $m$ of an input $w$, the $\SST$ will keep track the following information (see Figure \ref{fig:cross}):
\begin{itemize}
  \item the state $\first_m \in Q \uplus \{\bot\}$ that is "the state of
  $\trans$ the first time it reaches position $m+1$".  More formally, $\first_m$
  is the state such that $(0,q_0, \vide) \rightarrow^* (m+1,\first_m,
  \vide)$ for the first time, and we use $\bot$ if it does not exist.
  Since ${\first_m}$ is a bounded information, it is coded in the state of the
  $\SST$.  We also store in a register the concatenation $\lambda_{\first_m}$ of
  the outputs $\lambda$ along this run $(0,q_0, \vide) \rightarrow^*
  (m+1,\first_m, \vide)$;

  \item a function $\nnext_m(q) \colon Q \rightarrow Q \uplus \{\bot\}$, which
  gives for each $q \in Q$, the state such that $(m,q, \vide)
  \rightarrow^* (m+1,\nnext_m(q), \vide)$ for the first time (with $\bot$
  if it does not exist).  Note that for any stack $\pi$ with no marbles dropped
  on $w[1{:}m]$, $\nnext_m(q)$ is also the state such that $(m,q, \pi)
  \rightarrow^* (m+1,\nnext_m(q), \pi)$ for the first time (and this is a
  "similar" run, with the same transitions and the same output).  This bounded
  information is coded in the state of the $\SST$.  In a register, we store the
  concatenation $\lambda_{\nnext_m(q)}$ of the outputs along the run.

\end{itemize}

\begin{remark} The marble stack is necessarily empty at the first visit of a position.
\end{remark}

\begin{figure}[h!]

\begin{center}
\begin{tikzpicture}[scale=1]
	\node (in) at (-3,0) [above,right]{ Input word};
	
	\draw[fill = red!20,dashed](4.5,0.5) rectangle (5.5,-2.5);
		
	\node (in) at (0,-0.5) []{\footnotesize  $q_0$};
	\draw[->,dotted,thick](0.2,-0.5) to (5.8,-0.5);
	\node (in) at (5.8,-0.5) [right]{\footnotesize $ \first_m$};

	\node (in) at (5,-1) []{\footnotesize  $q_{1}$};
	\node (in) at (5.8,-1.5) [right]{\footnotesize  $\nnext_m(q_1)$};
	\draw[->,dotted,thick] (4.8,-1) .. controls (1,-1) and (1,-1.5) .. (5.8,-1.5) ;
	
	\node (in) at (5,-2) []{\footnotesize  $q_{2}$};
	\node (in) at (5.8,-2) [right]{\footnotesize  $ \nnext_m(q_2) =  \bot$};
	\draw[->,dotted,thick](4.8,-2) to (3,-2);
	\node (in) at (2.3,-2) []{\footnotesize  deadlock};	

	\node[above] (in) at (0,-0.2) []{  $\lmark$};
	\node[above] (in) at (1,-0.2) []{  $b$};
	\node[above] (in) at (2,-0.2) []{  $a$};
	\node[above] (in) at (3,-0.2) []{  $b$};
	\node[above] (in) at (4,-0.2) []{  $b$};
	\node[above] (in) at (5,-0.2) []{  $b$};
	\node[above] (in) at (6,-0.2) []{  $a$};
	\node[above] (in) at (7,-0.2) []{  $\rmark$};
       \end{tikzpicture}
\end{center}
       \caption{\label{fig:cross} Crossing sequences in a marble transducer}
\end{figure}
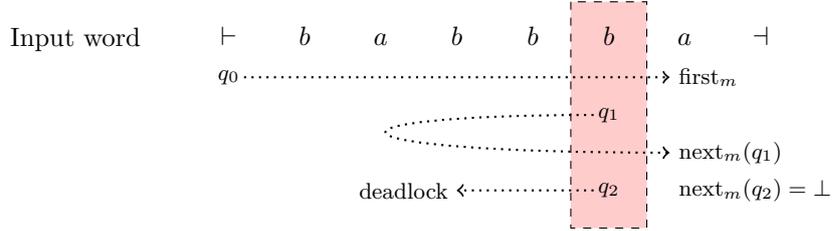

\begin{remark} A $\bot$ can be used for two possible reasons: either the marble transducer is blocked before coming to $m+1$, or it enters an infinite loop on the prefix $\lmark w[1{:}m]$.
\end{remark}

\subparagraph*{Updates of the $\SST$.} We have to show that the $\SST$ can update
this abstraction of the behavior of $\trans$.  Assume that $\nnext_m(q)$ and
$\lambda_{\nnext_m(q)}$ are known for all $q \in Q$, we want to determine
$\nnext_{m+1}$ (the case of $\first_m$ is very similar).  Let $a = w[m+1]$ and
$f = \nnext_{m} \colon Q \rightarrow Q \uplus \{\bot\}$, we consider the set of
functions $Q \rightarrow Q \uplus \{\bot\}$ ordered on the images by the flat
ordering on $Q$ and $\bot \le q$ for all $q \in Q$.  We define the functions
$(g_c)_{c \in C \uplus \{\varnothing\}}$ as the the least fixed point of the
following equations:

$$ 
g_{\varnothing}(q) = \left\{
	\begin{array}{ll}
		q' & \text{if } \delta(q,a, \varnothing) = (q', \rmove) \\
		g_{\varnothing}(f(q')) & \text{if } \delta(q,a, \varnothing) = (q', \lmove)\\
		g_{c}(q') & \text{if } \delta(q,a, \varnothing) = (q', \drop_c)\\
	\end{array}
	\right.
$$

$$ 
g_{c}(q) = \left\{
	\begin{array}{ll}
		g_{\varnothing}(q') & \text{if } \delta(q,a, c) = (q', \lift)\\
		g_{c}(f(q')) & \text{if } \delta(q,a, c) = (q', \lmove)\\
	\end{array}
	\right.
$$

As described in the example below, these equations describe how the former
$\nnext_m$ is "stitched" with the moves performed on $w[m]$, in order to compute
$\nnext_{m+1}$.  The fixpoint can be computed by a saturation algorithm in
$\PTIME$ with respect to $|Q|$.

\begin{example} \label{ex:sew} 
  We suppose that that $\delta(q,a,\varnothing) = (q_1, \drop_c)$,
  $\delta(q_1,a,c) = (q_2, \lmove)$, $f(q_2) \neq \bot$, $\delta(f(q_2),a,c) =
  (q_3, \lift)$, $\delta(q_3,a,\varnothing) = (q_2, \lmove)$ and
  $\delta(f(q_2),a,\varnothing) = (q', \rmove)$.  Then, we have
  $q'=g_{\varnothing}(q)=g_c(q_1)=g_c(f(q_2))=g_{\varnothing}(q_3)=g_{\varnothing}(f(q_2))$.
  A possible run from $(q,m+1, \vide) $ to the first visit of position $m+2$ is
  depicted in Figure \ref{fig:sew}.
\end{example}

\begin{figure}[h!]

\begin{center}
\begin{tikzpicture}[scale=1]

	\node (q) at (0,-0.5) []{\footnotesize  $(q,\varnothing)$};
	\node (q1) at (0,-1.5) []{\footnotesize  $(q_1,c)$};
	\draw[->,thick](q) to (q1);
	
	\node (q2) at (-2,-1.5) {\footnotesize $(q_2,\varnothing)$};
	\draw[->,thick](q1) to (q2);
	\node (fq2) at (0,-2.5) {\footnotesize $(f(q_2),c)$};
	\draw[->,dotted,thick] (q2) .. controls (-5,-1.75) and (-5,-2.25) .. (fq2) ;
	
	\node (q3) at (0,-3.5) {\footnotesize $(q_3,\varnothing)$};
	\draw[->,thick](fq2) to (q3);	
	
	\node (q2b) at (-2,-3.5) {\footnotesize $(q_2,\varnothing)$};
	\draw[->,thick](q3) to (q2b);
	\node (fq2b) at (0,-4.5) {\footnotesize $(f(q_2),\varnothing)$};
	\draw[->,dotted,thick] (q2b) .. controls (-5,-3.75) and (-5,-4.25) .. (fq2b) ;
	
	\node (g) at (2.5,-4.5) {\footnotesize $(g_{\varnothing}(q),\varnothing)$};
	\draw[->,thick](fq2b) to (g);
	
	\node[above] (in) at (-3.5,-0.2) []{\footnotesize  $\cdots$};	
	\node[above] (in) at (-2,-0.2) []{\footnotesize  $w[m]$};
	\node[above] (in) at (0,-0.2) []{\footnotesize  $w[m+1]$};
	\node[above] (in) at (1.8,-0.2) []{\footnotesize  $\cdots$};

       \end{tikzpicture}
\end{center}
       \caption{\label{fig:sew} Example of run starting from configuration $(q,m+1,\vide)$, where $f = \nnext_m$}
\end{figure}

\begin{claim} \label{claim:sew} Let $q \in Q$ and $c \in C$. Then:
\begin{itemize}
\item if $g_c(q) = q'\in Q$, then $(m+1,q, (c,m+1)) \rightarrow^* (m+2,q', \vide)$ first visit of  $m+2$;
\item if $g_c(q) = \bot \in Q$, then the run starting in $(m+1,q, (c,m+1))$ never visits position $m+2$.
\end{itemize}
The same holds when replacing $c$ with $\varnothing$, in words $\nnext_q = g_{\varnothing}$.
\end{claim}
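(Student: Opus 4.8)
The plan is to characterize the fixpoint values $g_c(q)$ and $g_\varnothing(q)$ directly in terms of the runs of $\trans$ that start at position $m+1$ while reading $a = w[m+1]$, thereby proving both bullets and the identity $\nnext_{m+1} = g_\varnothing$ at once. The single tool I need is the defining property of $f = \nnext_m$ recalled above: whenever the head sits at $m+1$ with a stack $\pi$ carrying marbles only at positions $> m$ (in particular $\pi = \vide$ or $\pi = (c,m+1)$), a leftward move to $m$ is followed by a uniquely determined sub-run that returns to $m+1$ \emph{for the first time} in state $f(\cdot)$ with $\pi$ unchanged, the head never leaving positions $\le m$ in between; and if $f(\cdot) = \bot$ it never returns. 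I also invoke the stack discipline: at $m+1$ there is at most one marble (the one dropped locally), so while reading $a$ the only available moves are exactly the clauses in the two systems of equations, and any marbles lying at positions $> m+1$ are untouched until $m+2$ is first reached.

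First I would prove \emph{soundness}: if $g_\varnothing(q) = q'\in Q$ (resp. $g_c(q)=q'$) then the run from $(m+1,q,\vide)$ (resp. $(m+1,q,(c,m+1))$) reaches $(m+2,q',\vide)$ at a first visit of $m+2$. Since the $g$'s are a least fixed point over the flat domain, every non-$\bot$ value admits a finite derivation through the equations bottoming out in the $\rmove$ base clause, and I would induct on the height of this derivation. The base case is a single move right. In the inductive cases I match each equation clause to one transition of $\trans$ followed by a recursive call: a $\lmove$ clause by moving to $m$ and applying the $\nnext_m$-property to come back to $m+1$ in state $f(\cdot)$ (non-$\bot$, else the left-hand value would be $\bot$), a $\drop_c$ clause by dropping a marble and recursing at status $c$, and a $\lift$ clause by lifting it and recursing at status $\varnothing$. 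Since concatenating first-visit sub-runs yields a first-visit run (the head stays $\le m+1$ until the final step), the assembled run reaches $m+2$ for the first time.

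Second I would prove the \emph{converse}, which simultaneously handles the $\bot$ clauses: if the run from $(m+1,q,(c,m+1))$ (resp. $(m+1,q,\vide)$) ever reaches $m+2$, then it does so in state $g_c(q)$ (resp. $g_\varnothing(q)$), which is therefore not $\bot$. I would induct on the number $N$ of steps up to the first visit of $m+2$, by case analysis on the first transition out of $m+1$. If it is $\rmove$ then $N=1$ and the base equation applies. Otherwise ($\lmove$, $\drop_c$, or $\lift$) the head must return to $m+1$ before it can reach $m+2$, since positions change by $\pm1$; I isolate the first such return — for $\lmove$ this is the configuration $(m+1,f(\cdot),\pi)$ supplied by the $\nnext_m$-property, which is moreover forced to exist — extract a strictly shorter first-visit sub-run, apply the induction hypothesis, and read off the matching equation clause. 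Combining the two directions closes the argument: if $g_c(q)=q'$ the run reaches $m+2$ in state $q'$ by soundness, while if $g_c(q)=\bot$ the run cannot reach $m+2$, for otherwise the converse would assign $g_c(q)$ a genuine state. Finally, as $\nnext_{m+1}(q)$ is by definition the state at the first visit of $m+2$ from $(m+1,q,\vide)$, this gives $\nnext_{m+1}=g_\varnothing$.

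The main obstacle I anticipate is the disciplined bookkeeping of the marble stack across left excursions: checking that the marbles at positions $> m+1$ stay invariant and irrelevant until $m+2$ is first reached, that the locally dropped marble is the unique marble ever present at $m+1$ (so the equation case split is exhaustive), and that the first-visit property survives concatenation of sub-runs. This must be carried out in tandem with the least-fixpoint semantics, so that infinite looping of $\trans$ strictly below $m+2$ corresponds exactly to the value $\bot$, which is precisely the delicate interface between the operational behavior and the recursive definition of the $g$'s.
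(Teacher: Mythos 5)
Your proposal is correct and follows the paper's approach: the paper in fact never proves Claim \ref{claim:sew} formally, asserting it right after the illustrative Example \ref{ex:sew}, the intended justification being exactly the ``stitching'' of the moves on $w[m+1]$ with the $\nnext_m$-excursions that you carry out. Your two inductions --- on the height of the least-fixpoint derivation for soundness, and on the length of the run up to the first visit of $m+2$ for the converse (whose contrapositive settles the $\bot$ case, covering both blocked and looping runs) --- supply the rigor the paper leaves implicit, and you correctly identify the delicate points: strictness of the $g$'s at $\bot$, the stack discipline forcing an empty stack at the first visit of $m+2$, and preservation of the first-visit property under concatenation of sub-runs.
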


The transitions of the $\SST$ are defined by hardcoding the solution of the
equations.  As noted in Example \ref{ex:sew} above, the computation of the
fixpoint also provides a description of the run $(m+1,q, \vide) \rightarrow^*
(m+2,\nnext_{m+1}(q), \vide)$.  Using this description, we can construct a
substitution that describes $\lambda_{\nnext_{m+1}(q)}$ in terms of
$\lambda_{\nnext_{m}(q)}$.

\begin{example} Following Example \ref{ex:sew}, we have:
  $$
  \lambda_{\nnext_{m+1}(q)} = \lambda(q,a,\varnothing)  \lambda(q_1,a,c) \lambda_{\nnext_{m}(q_2)} \lambda(f(q_2),a,c)  \lambda(q_3,a,\varnothing ) \lambda_{\nnext_{m}(q_2)}  \lambda(f(q_2),a,\varnothing). 
  $$
  The values $\lambda(q,a,\varnothing)$ are constants which will be hardcoded in
  the substitutions.  Note that the substitution described above uses two copies
  of the register $\lambda_{\nnext_{m}(q_2)}$.  Indeed, the run depicted in
  Figure \ref{fig:sew} uses twice the same path starting from $q_2$ in position
  $m$.  Such a situation cannot occur with a two-way transducer, since it would
  induce a loop (it is not the case here because of the marble $c$).
\end{example}

\subparagraph*{Output function.} When the whole word is read, we can recombine
all pieces of information in order to obtain the output of the marble transducer
(when it accepts).  The construction is similar to that of the update, by
stitching the different pieces of the run.

\subparagraph*{Complexity of the construction.} Due to the use of functions $\nnext_{m} \colon Q \rightarrow Q \uplus \{\bot\}$, the $\SST$ has a number of states and transitions which is exponential in $|Q|$. Given two states, the existence of a transition between them and the computation of its substitution can be done in $\PTIME$, hence the whole construction can be performed in $\EXP$.

\subsection{From $\SSTs$ to marble transducers}

\label{subs:proof:sstmt}

Consider an $\SST$ $\trans = (A,B,Q,\regs, q_0, \iota, \delta, \lambda, F)$
computing a function $f$.  We assume that $\delta$, $\lambda$ and $F$ are total
functions.  Indeed, we can complete them and treat the domain of $f$ separately
(it is a regular language).  The main idea is to execute a simple recursive
algorithm for $f$, then we show that it can be implemented with a marble transducer.

\subparagraph*{Recursive algorithm.} Given a word $w$, $0 \le m \le |w|$ and $x
\in \regs$, Algorithm \ref{algo:sstmt} computes $\trans^{w[1{:}m]}(x)$ (that is
"the value stored in $x$ after $\trans$ has read $w[1{:}m]$", see page
\pageref{mark:valu}).  For this, it finds the substitution $x \mapsto \alpha$
that was applied at $m$, and then makes recursive calls to compute the values of
the registers appearing in $\alpha$, at position $m-1$.  The claim below follows
after an easy induction.

\begin{algorithm}[h!]
\SetKw{KwVar}{Variables:}
\SetKwProg{Fn}{Function}{}{}
\SetKw{In}{in}
\SetKw{Out}{Output}

 \Fn{$\val (x,m,w)$}{
 
 	\tcc{$x \in \regs$ register to be computed, $0 \le m  \le |w|$ current position}
 
	\eIf{$m = 0$}{
		
		\KwRet{$\iota(x)$};
		\tcc{Initialization of the registers}}{

		$q \leftarrow \delta(q_0,w[1{:}(m-1)])$;
		\tcc{State $q$ before reading $w[m]$}
				
		$\alpha \leftarrow \lambda(q,w[m])(x)$;
		\tcc{Current substitution $x \mapsto \alpha$}

		$v \leftarrow \vide$; 	\tcc{Will store the value of $x$}

		\For{$i$ \In $\{1, \dots, |\alpha|\}$}{
		
			\eIf{$\alpha[i] \in B$}{
			
			$v \leftarrow v \cdot \alpha[i]$;
			\tcc{Letter $\alpha[i] \in B$ added to $x$}
			
			}{
			
			$v \leftarrow v \cdot \val(\alpha[i],m-1,w)$;
			
			\tcc{Compute recursively the value of $\alpha[i] \in \regs $ at $m-1$}
			
			}
			
		}
		
		\KwRet{$v$};
		\tcc{Value of $x$ is output}

	}
	
}
	
 \caption{\label{algo:sstmt} Computing the value of $x \in \regs$ at position $m$ of $w$}
\end{algorithm}

\begin{claim} If $x \in \regs$ and $0 \le m \le |w|$, $\val(x,m,w)$ computes $\trans^{w[1:m]}(x)$.
\end{claim}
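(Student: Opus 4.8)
The plan is to prove the claim by induction on $m \ge 0$, with the statement quantified over all registers $x \in \regs$ simultaneously so that the recursive calls can be handled by the induction hypothesis. Termination is immediate and can be noted in passing: every recursive call $\val(\alpha[i], m-1, w)$ strictly decreases the second argument, so the recursion bottoms out after at most $m$ levels. First I would dispose of the base case $m = 0$: here $\val(x, 0, w)$ takes the first branch of the conditional and returns $\iota(x)$, which is exactly $\trans^{w[1:0]}(x)$ by the definition of the register values.

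For the inductive step, fix $m \ge 1$ and assume $\val(y, m-1, w) = \trans^{w[1:(m-1)]}(y)$ for every $y \in \regs$. I would first unfold the definition of the register values: writing $q = \delta(q_0, w[1:(m-1)])$ for the state reached just before reading $w[m]$ (this is precisely the state computed on the corresponding line of the algorithm), the recursive definition of the register contents gives $\trans^{w[1:m]} = \trans^{w[1:(m-1)]} \circ \lambda(q, w[m])$, and hence $\trans^{w[1:m]}(x) = \trans^{w[1:(m-1)]}(\alpha)$, where $\alpha := \lambda(q, w[m])(x)$ is exactly the word $\alpha$ assigned in the algorithm.

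The key observation is then that the substitution $\trans^{w[1:(m-1)]}$ acts morphically on $(B \uplus \regs)^*$ and fixes the output letters of $B$. Consequently $\trans^{w[1:(m-1)]}(\alpha)$ equals the concatenation, over the positions $i = 1, \dots, |\alpha|$, of $\alpha[i]$ whenever $\alpha[i] \in B$ and of $\trans^{w[1:(m-1)]}(\alpha[i])$ whenever $\alpha[i] \in \regs$. This is precisely the word accumulated in the variable $v$ by the for-loop of the algorithm: its letter branch appends $\alpha[i]$, while its register branch appends $\val(\alpha[i], m-1, w)$, which equals $\trans^{w[1:(m-1)]}(\alpha[i])$ by the induction hypothesis. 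Matching the two expressions factor by factor yields $\val(x, m, w) = \trans^{w[1:m]}(x)$ and completes the induction.

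I expect no genuine obstacle in this argument, as it is a routine structural induction that mirrors the recursive algorithm against the (equally recursive) definition of $\trans^{w[1:m]}$. The only point requiring mild care is to align the morphic extension of the substitution $\trans^{w[1:(m-1)]}$ with the iteration of the for-loop, and to keep the state indexing consistent by using $q = \delta(q_0, w[1:(m-1)])$, the state immediately before reading $w[m]$, rather than the state after reading it.
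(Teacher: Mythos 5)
Your proof is correct and is essentially the paper's own argument made explicit: the paper merely asserts that the claim ``follows after an easy induction,'' and your induction on $m$ — matching the morphic action of the substitution $\trans^{w[1{:}(m-1)]}$ (which fixes letters of $B$) against the for-loop of Algorithm \ref{algo:sstmt} — is precisely that induction. Your attention to the state indexing, taking $q = \delta(q_0,w[1{:}(m-1)])$ as in the algorithm rather than the state after reading $w[m]$, is also the correct reading of the SST semantics.
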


\subparagraph*{Implementation by a marble transducer.} We show how Algorithm \ref{algo:sstmt} can be implemented with a marble transducer. First, let us explain how to compute the state $q = \delta(q_0,w[1{:}(m-1)])$ each time we need it. We drop a special marble $\bullet$ in the current position $m-1$. Then, we move to the left symbol $\lmark$. Finally, we simulate the transitions of $\trans$ from position $1$ to position $m-1$ (that is recovered thanks to marble $\bullet$) and finally we lift the marble $\bullet$.

We now deal with the recursive execution of the algorithm. The main idea is to use the marbles in order to write explicitly the recursivity stack of $\val$ on the word. Given $\alpha \in (B \uplus \regs)^*$ we define $\mar(\alpha) \subseteq (B \uplus \regs \uplus \{\overline{x}\mid x \in \regs\})^*$ to be the set of copies of $\alpha$ in which exactly one register is overlined. 

\begin{example}

If $\regs = \{x,y\}$ and $B = \{b\}$, $\mar(xbybx) = \{\overline{x}bybx, xb\overline{y}bx\, xbyb\overline{x} \}$.

\end{example}

The marble transducer has marble colors $C := \{\bullet\} \biguplus_{q,a,x}
\mar(\lambda(q,a)(x)) $.  When computing the value of $x$ at position $m$, it
will move on the prefix $\lmark w[1{:}m]$ to output $\val(x,m,w)$.  How?  First,
it gets $q$ as shown before, then $\alpha$.  Then, it performs the (hardcoded)
"for" loop reading $\alpha$.  When it sees a letter $\alpha[i] \in B$, it
outputs it.  When it sees a register $\alpha[i] \in \regs$, it drops the marble
$\alpha[1{:}i-1] \overline{\alpha[i]} \alpha[i+1{:}|\alpha|]$ on the current
position $m$ and moves left to compute $\alpha[i]$.  Once $\alpha[i]$ at $m-1$
is recursively computed, the transducer moves right.  Thanks to the marble
there, it remembers that it was computing index $i$ of $x \mapsto \alpha$ and
pursues the loop.

The case when $m = 0$ is detected by reading the letter $\lmark$: here the
transducer does not go left but outputs $\iota(x)$ instead.  To compute the
final output, it starts from $\rmark$, computes $F(q_{|w|})$ using $\bullet$ as
shown above, and applies the previous backward algorithm.

Note that the stack policy is respected here, because to compute the value of a
register at position $m$, the marble transducer only needs to visit positions on
the left of $m$.

\subparagraph*{Complexity of the construction.} The marble transducer can be
constructed in $\PTIME$ from $\trans$ (with respect to $|Q|$ and $\sum_{q,a,x} |\lambda(q,a)(x)|$). Indeed, its set of marbles has size $\sum_{q,a,x} |\mar(\lambda(q,a)(x))| + 1$. When executing the recursive algorithm, we only need to store $q \in Q$, $\alpha \in \subst{\regs}{B}$ and the position $1 \le i \le |\alpha|$ in the state of the marble transducer. Hence the set of states and transitions can clearly be described in $\PTIME$. The part of the machine designed to compute $q$ using $\bullet$ is also easy to describe.

\section{Proof of Theorem \ref{theo:kmtksst}}

We only show that given a $k$-layered $\SST$ $\trans = (A,B,Q,\regs, q_0, \iota, \delta, \lambda, F)$, we can build an equivalent $k$-marble transducer. For this, we mimic the proof of Theorem \ref{theo:sstmt} (see \mbox{Subsection \ref{subs:proof:sstmt}}), but two difficulties arise:
\begin{itemize}
\item we used $\sum_{q,a,x} |\mar(\lambda(q,a)(x))|$ marbles to store the recursivity stack of Algorithm \ref{algo:sstmt}, we shall reduce this number to $k$;

\item we used an extra marble $\bullet$ to compute the state $q = \delta(q_0,
w[1{:}m])$ in Algorithm \ref{algo:sstmt}, in fact we can compute $q$ without
using any marble.
\end{itemize}

\subparagraph*{Removing the extra marble $\bullet$.} We first deal with this
second issue.  The problem is actually the following: given a two-way transducer
whose head is in some position $1 \le m \le |w|$ of an input $\lmark w \rmark$,
can it compute $q = \delta(q_0, w[1{:}(m-1)])$ by moving on the prefix $\lmark
w[1{:}m]$, and finally come back to position $m$?  The answer is yes: there
exists a tricky way to perform such a computation with a finite memory and
without marbles.  We shall not give the construction here, since it is well
known in the literature under the name of "lookaround removal for two-way
transducers", see e.g. \cite{chytil1977serial}.  More recently in
\cite{dartois2017reversible}, it is shown how to perform this construction by
adding only a polynomial number of states, and in $\PTIME$.

\subparagraph*{Using no marbles for $k=0$.} We first suppose that $k=0$, that is
we have a copyless $\SST$ and have to avoid using marbles, that is build a two-way
transducer.  In that case, our procedure is similar to that of
\cite{dartois2016aperiodic,dartois2017reversible}.  More precisely, we build the
same transducer as in the proof of Theorem \ref{theo:sstmt}, except that it does
not drop a marble before doing a recursive call: we execute the recursive
algorithm without recursivity stack!  To compute the content of $x$ at $m$, the
machine performs the (hardcoded) "for" loop reading $\alpha$, two cases occur:
\begin{itemize}
\item $\alpha[i] \in B$, the two-way transducer outputs it;

\item $\alpha[i] \in \regs$ moves $\lmove$ \emph{without dropping a marble}.
Using a recursive procedure, it outputs the value of $\alpha[i]$ at $m-1$.
Meanwhile, it maintains in its finite memory the current register it is working
on (this information can be updated), hence it finally knows that $\alpha[i]$
was just output.  Then it moves $\rmove$ and since the $\SST$ is copyless,
$\alpha[i] \in \regs$ occurs a most once in the whole set $\{\lambda(q,w[m])(x)
\mid x \in \regs\}$.  Therefore the machine can recover that it was computing
index $i$ of $x \mapsto \alpha$ and pursue the loop.
\end{itemize}

\subparagraph*{Using $k$ marbles with $k > 0$.} We no longer assume that $k=0$.
Intuitively, for computing recursively the content of a register $x \in \regs_k$
from a substitution $x \mapsto \alpha$, a marble transducer behaves as in the
former construction for registers of $\alpha$ that belong to layer $k$ (since
the $\SST$ is "copyless" wrt.\ registers of a same level), and we only need
to drop marks to compute the contents of registers from layers $\ell <k$.
Only $k$ marbles are needed.

\begin{lemma} 
 $\forall 0 \le j \le k$, we can build a $j$-marble transducer $\trans_j$, that has (among others) states labelled by $\bigcup_{\ell \le j} \regs_{\ell}$, such that the following is true. Let $x \in \bigcup_{\ell \le j} \regs_\ell$, $w \in A^*$ and $1 \le m \le |w|$. Then the run $(x,m,\vide) \rightarrow^* (y,m+1, \vide)$ of $\trans_j$ on $\lmark w \rmark$, ending at the first visit of position $m+1$, always exists and the output produced along this run is $\trans^{w[1:m]}(x)$.
 
\end{lemma}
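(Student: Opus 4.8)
The plan is to proceed by induction on $j$, building $\trans_j$ from $\trans_{j-1}$ so that it handles the registers of the top layer $\regs_j$ \emph{without any marble}, and delegates the computation of lower-layer registers to $\trans_{j-1}$ at the cost of a \emph{single} extra marble. The construction refines the marble transducer of Subsection \ref{subs:proof:sstmt}, the point being to drop a marble only when the recursion of Algorithm \ref{algo:sstmt} crosses a layer boundary downwards. Throughout, I will rely on the two structural consequences of Definition \ref{def:ksst}: first, the update of a register $x \in \regs_i$ only mentions registers of $\regs_0, \dots, \regs_i$, so that the layer index is \emph{non-increasing} along the recursion; second, each $y \in \regs_i$ occurs at most once in $\{\lambda(q,a)(x) \mid x \in \regs_i\}$, which makes layer $i$ ``copyless in itself''.

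As a preliminary (independent of $j$), computing the state $q = \delta(q_0, w[1{:}(m-1)])$ needed to read off the substitution $\alpha = \lambda(q,w[m])(x)$ is done by scanning the prefix $\lmark w[1{:}m]$ and coming back to $m$, using finitely many states and no marble (lookaround removal, see \cite{chytil1977serial, dartois2017reversible}). For the base case $j=0$, the transducer $\trans_0$ is exactly the copyless two-way transducer: since $\regs_0$ is copyless and its updates only mention $\regs_0$, the whole recursion stays inside $\regs_0$, no marble is ever dropped, and each return to position $m$ is resolved by reading the ending state (labelled by the just-computed register) and exploiting the uniqueness of its occurrence in $\{\lambda(q,w[m])(x) \mid x \in \regs_0\}$; the invariant of the lemma then holds.

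For the inductive step, $\trans_j$ computes a register $x \in \regs_j$ at position $m$ by running the hardcoded ``for'' loop over $\alpha$. A letter is output directly; an occurrence $\alpha[i] \in \regs_j$ is treated exactly as in the copyless case (move left, recursively compute $\alpha[i]$ at $m-1$ inside $\trans_j$, move right, and recover the pair $(x,i)$ since the ending state is labelled by $\alpha[i]$, which occurs uniquely among the layer-$j$ updates $\{\lambda(q,w[m])(x')\mid x'\in\regs_j\}$); an occurrence $\alpha[i] \in \regs_\ell$ with $\ell < j$ is handled by dropping one \emph{return} marble on position $m$ that records $(x,i)$, moving left, and then behaving like $\trans_{j-1}$ started in $(\alpha[i], m-1, \vide)$. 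When that sub-run first reaches position $m$, the return marble is seen, $\trans_j$ lifts it, recovers $(x,i)$ and resumes the loop at index $i+1$. By the non-increasing layer property a lower-layer call never re-enters $\regs_j$, so the return marbles are never nested: at any instant at most one of them is present, together with the at most $j-1$ marbles used by $\trans_{j-1}$, which gives the bound of $j$ marbles. The stack discipline is respected because the return marble sits at position $m$ while all marbles of $\trans_{j-1}$ lie strictly to its left.

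The main obstacle is the correctness of this delegation: running $\trans_{j-1}$ from $(\alpha[i], m-1, \vide)$ but in the presence of the return marble at position $m$ must produce the very same output $\trans^{w[1{:}(m-1)]}(\alpha[i])$ as the marble-free run promised by the induction hypothesis. This is the ``similar run'' phenomenon already exploited in the proof of Theorem \ref{theo:sstmt}: by induction the run of $\trans_{j-1}$ visits only positions $\le m-1$ until it reaches $m$ for the first time, so the return marble at $m$ is invisible to it until that final step, at which moment $\trans_j$ intercepts it. Formalising this requires phrasing the induction hypothesis robustly with respect to marbles lying to the right of the working prefix, and checking that both the output and the ending configuration are unchanged. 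Once this is settled, an easy induction following Algorithm \ref{algo:sstmt} shows that the run $(x,m,\vide) \rightarrow^* (y,m+1,\vide)$ outputs exactly $\trans^{w[1{:}m]}(x)$, the base of the recursion at position $0$ being handled by emitting $\iota(\alpha[i])$ on the symbol $\lmark$.
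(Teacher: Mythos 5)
Your proof follows essentially the same route as the paper's: induction on the layer index $j$, lookaround removal to compute $q$ without an auxiliary marble, same-layer occurrences $\alpha[i] \in \regs_j$ handled marble-free using copylessness within the layer to recover the return point, and a single marble dropped at the current position for each descent to a lower layer, delegated to $\trans_{j-1}$, which yields the bound of $j$ marbles. The only deviations are cosmetic: you record the pending pair $(x,i)$ in the marble color, whereas the paper keeps it in the finite-state memory and uses the marble color $j$ only as a layer-switch flag, and you explicitly flag the ``similar run'' robustness issue (insensitivity of the delegated sub-run to marbles sitting to the right of the working prefix) which the paper's proof of this lemma leaves implicit, having discussed it only in the proof of Theorem \ref{theo:sstmt}.
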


\begin{proof} The proof is by induction. The base case is similar to the induction case. Assume now that $\trans_{j-1}$ is built, we build $\trans_{j}$ by adding states $x$ for $x \in \regs_j$, plus extra information collected in a finite memory (which corresponds to more states).

When in state $x \in \regs_{\ell}$ for $\ell < j$, $\trans_j$ behaves like $\trans_{j-1}$. When in state $x \in \regs_j$ and position $m$, $\trans_j$ first computes $q$ and $\alpha$ of Algorithm \ref{algo:sstmt}. Then it performs the "for" loop reading $\alpha$, and three cases occur depending on $\alpha[i]$:
\begin{itemize}

\item $\alpha[i] \in B$, $\trans_j$ outputs it;

\item $\alpha[i] \in \regs_j$, $\trans_j$ moves left to state $\alpha[i]$
\emph{without dropping a marble}.  Using a recursive procedure, it outputs the
value of $\alpha[i]$ at $m-1$.  Meanwhile, it maintains in its finite memory the
current register it is working on (this information can be updated), hence
$\trans_j$ knows that $\alpha[i]$ was just output.  Then it moves right and it
sees no marble, meaning that it did not switch to a lower layer in this
position.  Since the $\SST$ is $k$-layered, register $\alpha[i] \in \regs_j$
appears a most once in the whole set $\{\lambda(q,w[m])(x) \mid x \in
\regs_j\}$.  Therefore $\trans_j$ can remember that it was computing index $i$
of $x \mapsto \alpha$ and pursue the loop;

\item $\alpha[i] \in \regs_{\ell}$ for $\ell < j$, $\trans_j$ \emph{drops a
marble colored $j$ in the current position}, and stores in its finite memory
that it was working on index $i$ of $x \mapsto \alpha$.  Then it moves left to
state $\alpha[i]$ and executes $\trans_{j-1}$ on $\lmark w[1{:}m-1]$ to output
(by induction hypothesis) $\trans^{w[1{:}m-1]}(\alpha[i])$.  Once this is done,
$\trans_j$ moves right and meets marble $j$, meaning that it had switched to a
lower layer in this position.  Therefore it can recover from its state that it
was computing index $i$ of $x \mapsto \alpha$ and pursue the loop.

\end{itemize}

The definition of a $k$-layered $\SST$ ensures that our marble transducer uses 
not more than $j$ marbles, and only a finite auxiliary memory.
\end{proof}

To produce the output of the $\SST$, we first move to $\rmark$ and begin our backward computation.

\subparagraph*{Complexity of the construction.} The marble transducer can be constructed in $\PTIME$ from $\trans$. Indeed, when executing the recursive algorithm, we only need to store $q \in Q$, $\alpha \in \subst{\regs}{B}$ and the position $1 \le i \le |\alpha|$ in the state of the $k$-marble transducer. Hence the set of states and transitions can clearly be described in $\PTIME$. We already noted above that the extra states used for lookaround removal can also be described in $\PTIME$.

\clearpage
\section{Proof of Lemma \ref{lem:graph}}

Our objective is to describe the asymptotic growth of functions computed by
$\mb{N}$-automata.  The constructions below are very similar to those used in
\cite{weber1991degree} for computing the degree of ambiguity of
non-deterministic finite state automata.  The first step is to understand which
patterns make a function unbounded.

\begin{proposition}[\cite{sakarovitch2008decidability}] \label{prop:simon} 
  Let $\auto$ be a trim $\mb{N}$-automaton $(A,\staa, \inia,\weia,\fina)$
  computing $g \colon A^* \rightarrow \mb{N}$.  Then $g$ is bounded ($g(w) =
  \mc{O}(1)$) if and only if $\auto$ does not contain the following patterns:
\begin{itemize}

\item  a \emph{heavy cycle} on a state $q$: $\exists v \in A^+$ such that $\weia(v)(q,q) \ge 2$;

\item  a \emph{barbell} from $q$ to $q'\neq q$: $\exists v \in A^{+}$, $\weia(v)(q,q)  \ge 1$, $\weia(v)(q,q')  \ge 1$ and $\weia(v)(q',q')  \ge 1$.

\end{itemize}

\end{proposition}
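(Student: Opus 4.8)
The plan is to prove both implications, the forward one (a pattern forces unboundedness) being straightforward and the converse (no pattern forces boundedness) being the technical core. Throughout I read $\weia(w)(p,p')$ as the total weight of $w$-labelled paths from $p$ to $p'$, equivalently the number of runs in the nondeterministic automaton obtained by unfolding each weighted edge into parallel copies.

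\textbf{Patterns imply unboundedness.} Here I would use trimness, which makes every witnessing state both reachable and co-reachable. For a heavy cycle on $q$ with witness $v$, pick $s,t$ with $(\inia\weia(s))(q)\ge 1$ and $(\weia(t)\fina)(q)\ge 1$; since a diagonal entry of a product of nonnegative matrices dominates the product of the diagonal entries,
\[ g(s v^n t) \ge (\inia\weia(s))(q)\cdot \weia(v)^n(q,q)\cdot (\weia(t)\fina)(q) \ge \weia(v)(q,q)^n \ge 2^n, \]
so $g$ is unbounded (in fact exponential). For a barbell from $q$ to $q'$, pick $s,t$ with $(\inia\weia(s))(q)\ge 1$ and $(\weia(t)\fina)(q')\ge 1$; splitting $v^n$ according to the unique block in which the path crosses from $q$ to $q'$ yields $n$ distinct contributing index sequences, each of weight $\ge 1$, whence $\weia(v)^n(q,q')\ge n$ and $g(sv^nt)\ge n$ is unbounded.

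\textbf{No pattern implies boundedness.} For the converse I would assume neither pattern occurs and establish a uniform bound $\weia(w)(p,p')\le B$ for all $w,p,p'$; this suffices since $g(w)=\sum_{p,p'}\inia(p)\weia(w)(p,p')\fina(p')$. The main tool is the finite Boolean transition monoid $M=\{B_w\mid w\in A^*\}$ (with $B_w(q,q')=1$ iff $\weia(w)(q,q')\ge 1$) together with Simon's factorization forest theorem: every word admits a factorization forest of height at most $3|M|$ whose internal nodes are either binary or \emph{idempotent} (all children mapping to one and the same idempotent $E\in M$). I would then bound entries by induction on the forest height, which is a constant.

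\textbf{The idempotent lemma (main obstacle).} The crux, and where the two forbidden patterns really enter, is to bound the entries of a product $\weia(v_1)\cdots\weia(v_\ell)$ with all $B_{v_i}=E$ a fixed idempotent, \emph{independently of $\ell$}. Idempotence makes the off-diagonal support of $E$ a strict partial order (antisymmetry would otherwise create a barbell), so any contributing index sequence $p=r_0,\dots,r_\ell=p'$ visits its distinct states along a chain of the associated DAG, of length at most $|\staa|$. Absence of a heavy cycle gives $\weia(u)(q,q)\le 1$ for every $q$ and nonempty $u$, so each ``dwelling'' step contributes weight at most $1$; absence of a barbell forbids two distinct comparable states from both carrying a self-loop, so \emph{all} dwelling must occur at a single state of the chain. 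Consequently, once the chain and that unique dwelling state are fixed, the entire index sequence is \emph{determined by} $\ell$: there are boundedly many contributing sequences, each using at most $|\staa|-1$ genuine moves. Each move is an entry of some $\weia(v_i)$, bounded by the inductive bound for the lower children, while each dwelling factor is $\le 1$; the product is therefore bounded in terms of $|\staa|$ and the child bound only. Assembling the induction is then routine — leaves give entries $\le \max_{a,q,q'}\weia(a)(q,q')$, binary nodes square the child bound times $|\staa|$, idempotent nodes are handled by the lemma — so the constant height $3|M|$ yields a (large) constant $B$ and hence $g=\mc{O}(1)$. The decidability in $\PTIME$ follows since both patterns are detectable by inspecting the finite monoid $M$; the genuinely delicate point is the idempotent lemma and its precise use of the two-pattern combinatorics.
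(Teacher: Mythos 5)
Your proposal cannot be compared against a proof in the paper, because the paper has none: Proposition~\ref{prop:simon} is imported as a black box from \cite{sakarovitch2008decidability} (it is the classical characterization of bounded $\mb{N}$-automata, in the spirit of Mandel--Simon and of the ambiguity criteria of \cite{weber1991degree}), and is only \emph{used} in the proof of Lemma~\ref{lem:graph}. Judged on its own, your proof is correct and takes a route different from the classical ones. The forward direction is the standard pumping argument, and essentially reproduces what the paper does in the related Lemmas~\ref{lem:heavy} and~\ref{lem:lpb}. The converse is the real content: instead of proving finiteness of the generated matrix monoid or running a direct combinatorial iteration analysis (the routes taken in the cited literature), you factor the input through Simon's factorization forest theorem for the finite Boolean transition monoid and reduce everything to an idempotent lemma. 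That lemma is correctly argued: for children all mapping to the same idempotent $E$, idempotence plus absence of barbells makes the off-diagonal support of $E$ a strict partial order (if $E(q,q')=E(q',q)=1$ with $q\neq q'$ then idempotence gives $E(q,q)=E(q',q')=1$, a barbell); transitivity then forbids returning to a state once left, so a contributing index sequence follows a chain of at most $|\staa|$ pairwise comparable states; absence of barbells allows at most one chain state to carry a self-loop, so all dwelling happens there, and absence of heavy cycles bounds each dwelling factor by $1$. Hence both the number of contributing sequences and the weight of each are bounded independently of the number of factors, and the constant forest height $3|M|$ closes the induction. What your approach buys is a modular, self-contained proof (given Simon's theorem) with an explicit, if astronomical, bound $B$; the dedicated analyses in the literature yield sharper bounds but are less transparent.

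One side remark needs fixing, although it concerns a claim that is not part of the statement: pattern detection is not "$\PTIME$ by inspecting the finite monoid $M$", since $M$ may contain up to $2^{|\staa|^2}$ elements. Polynomial-time detection is done as the paper itself does it (see Remark~\ref{rem:complexbarb}): each constraint $\weia(v)(p,p')\geq 1$ (or $\geq 2$) defines a regular set of witnesses $v$ recognized by a small product automaton, and one tests emptiness of the intersection of these languages.
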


The shapes of heavy cycles and barbells are depicted in Figure \ref{fig:patterns}.

\begin{figure}[h!]
    \centering
    \begin{subfigure}[b]{0.48\textwidth}
    
    	\centering
        \begin{tikzpicture}[scale=1]
       	 	\node (I) at (-1.5,0) {};
        		\node (dep) at (0,0) {$q$};
		\node (F) at (1.5,0) {};
		\draw[->, dashed] (I) edge[]  node [above] {} (dep);
		\draw[<-, dashed] (F) edge[]  node [above] {} (dep);
        		\draw[->] (dep) edge [loop above] node {\footnotesize $v, n \ge 2$} (dep);
       \end{tikzpicture}
       
        \caption{Heavy cycle on $q$}
    \end{subfigure}
    ~
    \begin{subfigure}[b]{0.48\textwidth}
    	\centering
        \begin{tikzpicture}[scale=1]
        		\node (dep) at (0,0) {$q$};
		\node (ar) at (2,0) {$q'$};
		\node (I) at (-1.5,0) {};
		\node (F) at (3.5,0) {};
		\draw[->, dashed] (I) edge[]  node [above] {} (dep);
		\draw[<-, dashed] (F) edge[]  node [above] {} (ar);
        		\draw[->] (dep) edge [loop above] node {\footnotesize $v, n_1 \ge 1$} (dep);
		\draw[->] (dep) edge[]  node [above] {\footnotesize $v, n_2 \ge 1$} (ar);
		\draw[->] (ar) edge [loop above] node {\footnotesize $v, n_3 \ge 1$} (ar);
       \end{tikzpicture}
       
        \caption{Barbell from $q$ to $q' \neq q$}
    \end{subfigure}
    \caption{\label{fig:patterns} Patterns that create unboundedness in a trim $\mb{N}$-automaton.}
\end{figure}
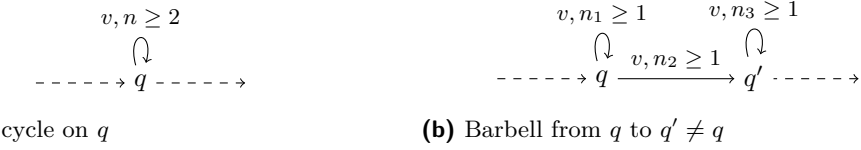

We first note that heavy cycles lead to exponential behaviors.

\begin{lemma} \label{lem:heavy} 
  Let $\auto = (A, \staa, \inia, \weia, \fina)$ be a trim $\mb{N}$-automaton
  with heavy cycles, that computes a function $g\colon A^* \rightarrow \mb{N}$.  Then
  $g$ has exponential growth.
\end{lemma}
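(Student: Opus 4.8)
The plan is to exhibit a family of inputs on which $g$ grows like $2^{\Omega(n)}$, while separately confirming the matching upper bound $g(w) = \mc{O}(2^{\mc{O}(|w|)})$ that is required by the definition of exponential growth. Suppose $\auto$ has a heavy cycle on a state $q$, witnessed by a word $v \in A^+$ with $\weia(v)(q,q) \ge 2$. Because $\auto$ is trim, there exist words $u, u' \in A^*$ with $(\inia \weia(u))(q) \ge 1$ and $(\weia(u')\fina)(q) \ge 1$; these connect the initial vector to $q$ and $q$ to the final vector. The natural candidate family is then $w_n := u\, v^n\, u'$ for $n \ge 0$.

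First I would establish the lower bound. Since all entries of the matrices $\weia(\cdot)$ and of the vectors $\inia, \fina$ are nonnegative integers, we may bound $g(w_n) = \inia\,\weia(u)\,\weia(v)^n\,\weia(u')\,\fina$ from below by restricting attention to paths that pass through state $q$ at every relevant stage. Concretely, $g(w_n) \ge (\inia\weia(u))(q) \cdot \weia(v)^n(q,q) \cdot (\weia(u')\fina)(q)$, and the middle factor satisfies $\weia(v)^n(q,q) \ge (\weia(v)(q,q))^n \ge 2^n$ by iterating the self-loop of weight at least $2$. Hence $g(w_n) \ge 2^n$ while $|w_n| = |u| + n|v| + |u'| = \mc{O}(n)$, so on the infinite set $L := \{w_n \mid n \ge 0\}$ we get $g(w) = 2^{\Omega(|w|)}$, as needed.

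For the upper bound I would argue that every $\mb{N}$-automaton computes an at-most-exponential function. Let $M := \max_{a \in A} \max_{p,p'} \weia(a)(p,p')$ be the largest single-letter weight and $N := |\staa|$. Then each entry of $\weia(w)$ is bounded by $(NM)^{|w|}$, since a matrix product over the $N$ states accumulates a factor of at most $NM$ per letter. Combining with the fixed vectors $\inia, \fina$ gives $g(w) = \inia\,\weia(w)\,\fina \le N^2 \cdot \|\inia\|_\infty \|\fina\|_\infty \cdot (NM)^{|w|} = 2^{\mc{O}(|w|)}$. Together with the lower bound, both clauses of the exponential-growth definition hold.

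The only genuinely delicate point is the lower bound step: one must be careful that isolating a single path through $q$ is legitimate, which it is precisely because $\mb{N}$-weights are nonnegative, so dropping all other contributions can only decrease the value. I expect the main obstacle to be purely bookkeeping—making sure $u$ and $u'$ are extracted correctly from trimness so that the chosen entries are genuinely positive—rather than any conceptual difficulty. No heavier machinery than nonnegativity of the weights and submultiplicativity of the self-loop weight is required.
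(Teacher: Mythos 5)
Your proof is correct and follows essentially the same route as the paper: use trimness to pick words $u,u'$ reaching and leaving $q$ with positive weight, then pump the heavy cycle to get $g(uv^{\ell}u') \ge 2^{\ell}$ on the family $uv^{\ell}u'$. The only difference is that you spell out the generic upper bound $g(w) = 2^{\mc{O}(|w|)}$, which the paper simply notes is "always true" for $\mb{N}$-automata.
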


\begin{proof} 
  The upper bound is always true.  For the lower bound, let $q \in \staa$ and $
  v \in A^+$ such that $\weia(v)(q,q) \ge 2$.  Since $\auto$ is trim $\exists u,
  w \in A^*$ such that $(\inia\weia(u))(q) \ge 1$ and $(\weia(w)\fina)(q) \ge
  1$.  Therefore $g(uv^{\ell} w) \ge 2^{\ell}$ and the result follows.
\end{proof}

The converse of Lemma \ref{lem:heavy} also holds and thus $g$ has an exponential
growth if and only its automaton contains heavy cycles.  Furthermore, this
property is decidable in $\PTIME$.  Indeed, there is a heavy cycle on a given $q
\in \staa$ if and only if the strongly connected component of $q$ contains a
weight $\weia(a)(q_1,q_2) \ge 2$, or is unambiguous (when seen as a finite
automaton with weights $0$ or $1$).

In the sequel, let us fix a $\mb{N}$-automaton $\auto := (A,\staa,\inia,\weia,
\fina)$ without heavy cycles, that computes a function $g$.  We now show that
$g$ has $k$-polynomial growth for some $k \ge 0$.  The idea is to group states
between which there are no barbells, since they describe bounded sub-automata.
For this, we first define a graph $\graph$ that describes the barbells we can
meet.

\begin{definition}

The oriented graph $\graph$ consists in:
\begin{itemize}
\item the set of vertices $\staa$;

\item an edge $(q_1,q_2)$ if and only if there exists $q, q' \in \staa$ such that:
\begin{itemize}

\item $\exists w,w' \in A^*$ such that $\weia(w)(q_1,q) \ge 1$ and $\weia(w')(q',q_2)\ge 1$;

\item there is a barbell from $q$ to $q'$.

\end{itemize}

\end{itemize}

\end{definition}

\begin{remark} \label{rem:complexbarb} 
  The presence of a barbell from $q$ to $q'$ can be checked in $\PTIME$.
  Indeed, the set of $v \in A^+$ such that $\weia(v)(q,q) \ge 1$ is a regular
  language, for which an automaton can be constructed immediately from $\auto$.
  The same holds for $\weia(v)(q,q') \ge 1$ and $\weia(v)(q',q') \ge 1$.  We
  finally check the emptiness of their intersection.
\end{remark}

A \emph{path} in $\graph$ of length $p \ge 1$ is a sequence $(q_0,q_1)(q_1,q_2)
\dots (q_{p-1},q_p)$ of edges, and a \emph{cycle} is a path where $q_0 = q_p$.
A \emph{directed acyclic graph (dag)} is a graph without cycles.  We now show
that $\graph$ is a dag, and thus cannot have arbitrarily long paths.

\begin{lemma} \label{lem:dag} $\graph$ is a dag.
\end{lemma}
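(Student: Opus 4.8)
The plan is to argue by contradiction: I assume that $\graph$ contains a cycle and derive a heavy cycle in $\auto$, which contradicts the standing assumption that $\auto$ has no heavy cycles. Concretely, suppose $(q_0,q_1)(q_1,q_2)\cdots(q_{p-1},q_p)$ is a cycle of $\graph$ with $q_0=q_p$ (the case $p=1$ being a self-loop is included). Unfolding the definition of each edge $(q_i,q_{i+1})$ provides states $r_i,r_i'\in\staa$ with $r_i\neq r_i'$, words $w_i,w_i'\in A^*$, and a word $v_i\in A^+$ such that $\weia(w_i)(q_i,r_i)\ge 1$, $\weia(w_i')(r_i',q_{i+1})\ge 1$, and $v_i$ witnesses a barbell from $r_i$ to $r_i'$, i.e.\ $\weia(v_i)(r_i,r_i)\ge 1$, $\weia(v_i)(r_i,r_i')\ge 1$ and $\weia(v_i)(r_i',r_i')\ge 1$.

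The key observation I would isolate first is a \emph{doubling lemma} for barbells: since $r_i\neq r_i'$, the two intermediate states $r_i$ and $r_i'$ contribute distinct terms to the matrix product, so $\weia(v_iv_i)(r_i,r_i')\ge\weia(v_i)(r_i,r_i)\,\weia(v_i)(r_i,r_i')+\weia(v_i)(r_i,r_i')\,\weia(v_i)(r_i',r_i')\ge 2$. In words: traversing a barbell twice yields at least two distinct paths between its endpoints, which is exactly the mechanism that manufactures weight $2$.

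With this in hand I would stitch the segments into one closed walk on $q_0$. Set $U:=(w_0v_0v_0w_0')(w_1v_1w_1')\cdots(w_{p-1}v_{p-1}w_{p-1}')$, squaring only the first barbell word. Reading along the route $q_0\to r_0\to r_0'\to q_1\to\cdots\to q_p=q_0$ and using that $\weia$ is a morphism into $\mb{N}^{\staa\times\staa}$ (hence all entries are nonnegative and $\weia(uv)(p,r)\ge\weia(u)(p,s)\,\weia(v)(s,r)$ for every intermediate $s$), I obtain $\weia(U)(q_0,q_0)\ge 2\cdot\prod_{1\le i\le p-1}1=2$. Since every $v_i\in A^+$, the word $U$ is nonempty, so this is precisely a heavy cycle on $q_0$, contradicting the hypothesis on $\auto$.

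The remaining work is purely bookkeeping (tracking the concatenation of the connecting words and barbell words along the cycle); the only genuine obstacle is the doubling lemma above, and it reduces to the elementary fact that a barbell read twice produces two distinct paths between its two \emph{distinct} endpoints. I would therefore state and prove that doubling step first, after which the contradiction follows immediately.
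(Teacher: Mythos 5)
Your proof is correct and follows essentially the same route as the paper: assume a cycle in $\graph$, square one barbell to get $\weia(v v)(q,q')\ge 2$, concatenate the witnessing words around the cycle to manufacture a heavy cycle, and contradict the standing no-heavy-cycle assumption. The only differences are cosmetic (you square the first barbell and keep every edge's witnesses explicit, while the paper squares the last one and contracts the rest of the cycle into a single word by transitivity), and you usefully make explicit the doubling step that the paper states without justification, correctly noting it relies on $q\neq q'$.
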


\begin{proof} 
  Assume there exists a cycle $(q_0,q_1)(q_1,q_2) \dots (q_{p-1},q_0)$.  By
  transitivity, $\exists u \in A^*$ such that $\weia(u)(q_0,q_{p-1}) \ge 1$.
  Furthermore $\exists q,q' \in \staa, w,w' \in A^*$ such that
  $\weia(w)(q_{p-1},q) \ge 1$ and $\weia(w')(q',q_0) \ge 1$, and there is a
  barbell from $q$ to $q'$.  Hence $\exists v \in A^*$ such that $\weia(v)(q,q)
  \ge 1$, $\weia(v)(q,q') \ge 1$ and $\weia(v)(q',q') \ge 1$. In particular
  $\weia(vv)(q,q') \ge 2$.  Putting everything together, $\weia(uwvvw')(q_0,q_0)
  \ge 2$, which forms a heavy cycle, a contradiction.
\end{proof}

A state is said to be \emph{minimal} if it is has no incoming edge in the dag
$\graph$.
Given a state $q \in \staa$, its
\emph{height} is defined as the maximal length of a path going from a minimal
state to $q$.  We denote by $k$ the maximal height over all states, we shall see later that it
is the smallest degree of a polynomial bounding $g$, the function computed by
$\auto$.

\begin{lemma} \label{lem:lpb} There exists an infinite set of words $L$ such that $g(w) = \Omega(|w|^k)$ when $w \in L$.
\end{lemma}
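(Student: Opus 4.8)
We must exhibit an infinite family of words $L$ on which $g$ grows at least like $|w|^k$, where $k$ is the maximal height of a state in the dag $\graph$. Let me think about what structure we have and how to build witnesses.

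We have a state $q_k$ of maximal height $k$, so there is a path $q^{(0)}, q^{(1)}, \dots, q^{(k)} = q_k$ in $\graph$ of length $k$, with $q^{(0)}$ minimal. Each edge $(q^{(i)}, q^{(i+1)})$ witnesses a barbell somewhere "between" these states. I want to chain these $k$ barbells into a single loopable structure whose iteration produces polynomial growth.

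Let me think more carefully about the plan.
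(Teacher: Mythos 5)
There is a genuine gap: your proposal stops at the point where the proof actually begins. You correctly identify the setup (a maximal-height state, a length-$k$ path in $\graph$, and the intention to chain the $k$ barbells into a pumpable structure), which is indeed the same strategy as the paper's proof, but you never execute any of it. Three things are missing entirely. First, you must extract the concrete barbell data: since an edge $(q^{(i)},q^{(i+1)})$ of $\graph$ only asserts the existence of a barbell \emph{somewhere between} these states (i.e.\ states $q,q'$ with a barbell from $q$ to $q'$, a path from $q^{(i)}$ to $q$, and a path from $q'$ to $q^{(i+1)}$), chaining requires choosing states $q_1,q'_1,\dots,q_k,q'_k$ and words $v_i, u_i$ such that $\weia(v_i)(q_i,q_i)\ge 1$, $\weia(v_i)(q_i,q'_i)\ge 1$, $\weia(v_i)(q'_i,q'_i)\ge 1$, and $\weia(u_i)(q'_i,q_{i+1})\ge 1$; the connecting words $u_i$ are exactly what absorbs the "somewhere between" slack in the edge definition. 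Second, you need the witness family and the counting argument: setting $w_\ell := v_1^{\ell} u_1 v_2^{\ell} \cdots u_{k-1} v_k^{\ell}$, each block $v_i^{\ell}$ contributes a factor $\ge \ell$ to $\weia(w_\ell)(q_1,q'_k)$, because the crossover from $q_i$ to $q'_i$ may happen at any one of the $\ell$ copies of $v_i$ (loop at $q_i$ before, at $q'_i$ after); multiplying along the chain gives $\weia(w_\ell)(q_1,q'_k)\ge \ell^k$. Third, the statement is about $g$, not about matrix entries, so you must invoke trimness to find $u,v\in A^*$ with $(\inia\weia(u))(q_1)\ge 1$ and $(\weia(v)\fina)(q'_k)\ge 1$, yielding $g(uw_\ell v)\ge \ell^k$, and then check that $|uw_\ell v| = \Theta(\ell)$ (the $v_i,u_i,u,v$ are fixed, so the length grows linearly in $\ell$), so that $\ell^k = \Omega(|uw_\ell v|^k)$ and $L := \{uw_\ell v \mid \ell \ge 0\}$ is the required infinite set.

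As written, the proposal is a statement of intent rather than a proof; none of the quantitative content (the pumping, the product lower bound, the trimness step) is present, so it cannot be accepted.
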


\begin{proof} 
  By definition, $k$ is also the maximal length of a path, hence there exists a
  path $\pi$ of length $k$.  Therefore, we can find a sequence 
  $q_1,q'_1,\ldots,q_k,q'_k$ of $2k$ states 
  with a barbell between each $q_i,q'_i$ and a path between each $q'_i,q_{i+1}$.
  More precisely:
  \begin{itemize}

    \item $\forall 1 \le i \le k$, $\exists v_i \in A^+$ such that $\weia(v_i)(q_i,q_i)\ge 1$ and $\weia(v_i)(q_i,q'_i)\ge 1$ and $\weia(v_i)(q'_i,q'_i)\ge 1$;

    \item $\forall 1 \le i < k$,  $\exists u_i \in A^*$ such that $\weia(u_i)(q'_i,q_{i+1})\ge 1$.

  \end{itemize}
  Now consider the words $w_{\ell} := v_1^{\ell} u_1 v_2^{\ell} \cdots u_{k-1}
  v_k^{\ell}$ for ${\ell} \ge 0$ (the intuition is that $w_{\ell}$ "loops"
  ${\ell}$ times in each barbell), then $\weia(w_{\ell})(q_1,q'_k) \ge
  {\ell}^k$.  Since $\auto$ is trim, there exists $u,v \in A^*$ such that
  $g(uw_{\ell}v) \ge {\ell}^k = \Omega({|uw_{\ell}v|}^k)$ when ${\ell}
  \rightarrow + \infty$.
\end{proof}

Finally, we consider the partition $S_0, \dots, S_k$ of $\staa$, where $S_i := \{\text{states of height } i\}$. We now show that it verifies the properties of the last point of Lemma \ref{lem:graph}.

\begin{lemma} \label{lem:prop} The following statements hold:
\item

\begin{enumerate}

\item \label{po:growth} $\forall q \in S_i$, if $\exists w \in A^*$ such that $\weia(w)(q,q') \ge 1$, then $q' \in S_j$ for some $i \le j$.

\item \label{po:prop:2} $\exists B \ge 0$ such that $\forall 0 \le i \le k$, $\forall q,q' \in S_i, \forall w \in A^*$, $\weia(w)(q,q')\le B$;

\item $\forall 0 \le i \le k$, $\exists B_i , C_i\ge 0$ such that $\forall q,q' \in \bigcup_{j \le i} S_j, w \in A^*$, $\weia(w)(q,q') \le B_i |w|^i + C_i$.

\end{enumerate}

\end{lemma}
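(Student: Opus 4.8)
The plan is to handle the three items in order, using throughout the \emph{reachability} preorder on $\staa$: write $q \rightsquigarrow q'$ when $\weia(w)(q,q') \ge 1$ for some $w \in A^*$. This relation is reflexive (as $\weia(\vide)$ is the identity matrix) and transitive (entrywise, a product of positive entries stays positive), and an edge $(q_1,q_2)$ of $\graph$ is by construction witnessed by states $q,q'$ with $q_1 \rightsquigarrow q$, a barbell from $q$ to $q'$, and $q' \rightsquigarrow q_2$. For item \ref{po:growth} I would transfer a maximal $\graph$-path. Suppose $q \in S_i$ with $i \ge 1$ and $q \rightsquigarrow q'$ (the case $i = 0$ is vacuous). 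Pick a longest path $p_0,\ldots,p_i = q$ of $\graph$ ending in $q$; its last edge $(p_{i-1},q)$ provides states $r,r'$ with $p_{i-1} \rightsquigarrow r$, a barbell from $r$ to $r'$, and $r' \rightsquigarrow q$. Since $q \rightsquigarrow q'$, transitivity gives $r' \rightsquigarrow q'$, so $(p_{i-1},q')$ is again an edge of $\graph$ (same witnesses $r,r'$). Replacing the endpoint yields a path $p_0,\ldots,p_{i-1},q'$ of length $i$, whence $\mathrm{height}(q') \ge i$, i.e. $q' \in S_j$ for some $j \ge i$.

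For item \ref{po:prop:2} I would argue by contradiction using Proposition \ref{prop:simon}. Fix $q,q' \in S_i$ and form the $\mb{N}$-automaton obtained from $\auto$ by taking as initial and final vectors the indicators of $q$ and $q'$, then trimming it; it computes $w \mapsto \weia(w)(q,q')$ and inherits from $\auto$ the absence of heavy cycles. If this entry were unbounded, Proposition \ref{prop:simon} would force a barbell from some $r$ to $r'$ inside the trim part, i.e. with $q \rightsquigarrow r$ and $r' \rightsquigarrow q'$; but then $(q,q')$ is an edge of $\graph$, so $\mathrm{height}(q') \ge \mathrm{height}(q)+1$, contradicting $q,q' \in S_i$. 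Hence each such entry is bounded, and since $\staa$ is finite a single constant $B$ works uniformly over all same-layer pairs.

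For the last item I would proceed by induction on $i$, peeling off the top layer. Split the states of height $\le i$ into $H_i$ (height exactly $i$) and $H_{<i} = \bigcup_{j<i} S_j$; by item \ref{po:growth} heights never decrease along a run, so there is no transition from $H_i$ into $H_{<i}$, and each $\weia(a)$ restricted to these states is block upper triangular. The standard formula for the top-right block of a product then expresses $\weia(w)(q,q')$, for $q' \in H_i$, as a sum over the position where the run first jumps into $H_i$ of three factors: one living entirely in $H_{<i}$, one single-letter weight, and one living entirely in $H_i$. The first factor is bounded by the induction hypothesis ($\le B_{i-1}|w|^{i-1}+C_{i-1}$), the last by $B$ through item \ref{po:prop:2}, and the middle by the maximal letter weight; summing over the at most $|w|$ jump positions and the finitely many intermediate states gives a bound $B_i|w|^i + C_i$ (the case $q' \in H_{<i}$ is immediate, the run then staying in $H_{<i}$). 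The delicate point, and the main obstacle, lies precisely here and rests entirely on item \ref{po:growth}: it is the non-decrease of heights that makes the matrices triangular and, more crucially, guarantees that a same-layer factor $\weia(v)(p,p')$ with $p,p'$ of equal height counts only paths staying at that height (any excursion higher could never return), so that the uniform bound $B$ of item \ref{po:prop:2} genuinely applies; the degree $i$ of the final polynomial then emerges from the fact that at most $i$ such height jumps can occur, and getting this counting right is where most of the care is needed.
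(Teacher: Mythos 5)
Your proof is correct and follows essentially the same route as the paper: item \ref{po:growth} by modifying the last edge of a maximal path in $\graph$ via transitivity of reachability, item \ref{po:prop:2} by combining Proposition~\ref{prop:simon} with the observation that a barbell between states reachable from $q$ and co-reachable to $q'$ yields an edge $(q,q')$ of the dag $\graph$ and hence a height increase, and the third item by the same block-triangular induction with a sum over the position of the jump into the top layer. The only cosmetic difference is in item \ref{po:prop:2}, where you apply Proposition~\ref{prop:simon} to a $(q,q')$-pointed trimmed copy of $\auto$ rather than to the sub-automaton induced on $S_i$, which lets you bypass the paper's side remark that runs between states of $S_i$ must stay inside $S_i$.
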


\begin{proof} \begin{enumerate}

\item Suppose that $\weia(w)(q,q') \ge 1$.  Then, every path in $\graph$ from a
minimal state $m$ to $q$ of length $p$ can be completed in a path from $m$ to
$q'$ (of length at least $p$).  Thus the height of $q$ is at most the height of
$q'$.

\item Suppose there is a barbell from $q$ to $q'$.  Then, every path in $\graph$
from a minimal state $m$ to $q$, of size $p$, can be extended to a path from
$m$ to $q'$, of size $p+1$.  Thus the height of $q'$ is strictly more than that
of $q$.  Hence, there are no barbells inside each $S_i$.  Using Proposition
\ref{prop:simon}, the sub-automaton "induced" on $S_i$ is necessarily bounded
(we also need point \ref{po:growth} above to show that runs from $q$ to $q'$
necessarily stay all the time in $S_i$).

\item We use matrices to shorten the notations. Given a matrix $A$, we denote by $\sup A$ the maximum of its coefficients. The result is shown by induction on $0 \le i \le k$. 

Let $U = \bigcup_{j \le i-1} S_i$.  We denote by $\mu_U\colon A^* \rightarrow
\mb{N}^{U \times U}$ the co-restriction of $\mu$ to $U$, that is the morphism $w
\mapsto (\mu(w)(q,q'))_{q,q' \in U}$.  Let $S := S_{i}$, we define $\mu_S \colon w
\mapsto (\mu(w)(q,q'))_{q,q' \in S}$ in a similar way.  Finally, we set
$\mu_{U,S}\colon A^* \rightarrow \mb{N}^{U \times S}, w \mapsto (\mu(w)(q,q'))_{q \in
U, q' \in S}$.

The induction hypothesis reformulates as $\sup \mu_{U}(w) \le B_{i-1} |w|^{i-1}
+ C_{i-1}$.  It follows from point \ref{po:growth} above that $\forall w\in A^*$:
$$
\mu(w) = \begin{pmatrix} 
\mu_U(w) & \mu_{U,S}(w) \\
0 & \mu_S(w)
\end{pmatrix}
$$
By matrix multiplication we also have for all $w \in A^*$:
$$
\mu(w) = \begin{pmatrix} 
\mu_U(w) & 
\displaystyle \sum_{1 \le m \le |w|} \mu_{U}(w[1{:}(m-1)])\mu_{U,S}(w[m]) \mu_{S}(w[(m+1){:}|w|])
\\ 0 & \mu_S(w)
\end{pmatrix}
$$
Recall that $\sup\mu_{U}(w) \le B_{i-1} |w|^{i-1} + C_{i-1}$; we also have $\sup\mu_S(w)
\le B$ by point \ref{po:prop:2}.  Hence to show the property on $\sup \mu(w)$,
we only need to consider the submatrix
$$
\mu_{U,S}(w) = \sum_{1 \le m \le |w|} \mu_{U}(w[1{:}(m-1)])\mu_{U,S}(w[m]) 
\mu_{S}(w[(m+1){:}|w|]) \,.
$$
But $\sup \mu_{U}(w[1{:}(m-1)]) \le B_{i-1} (m-1)^{i-1} + C_{i-1}$.  We also have
that $\sup \mu_{U,S}(w[m])$ and $\sup \mu_{S}(w[(m+1){:}|w|])$ are bounded.
Hence the $\sup$ of each matrix in the sum is bounded by $B' |w|^{i-1} + C'$ for
some $B', C' \ge 0$.  Finally
$$
\sup \mu_{U,S} \le |w| (B' |w|^{i-1} + C') \le B_i |w|^i + C_i \,.
$$
for well-chosen $B_i, C_i \ge 0$.
\qedhere
\end{enumerate}
\end{proof}

Finally $g(w) = \mc{O}(|w|^k)$, and it is reached asymptotically on $L$ (by
Lemma \ref{lem:lpb}).  Hence $g$ has $k$-polynomial growth.  Note that $S_0,
\dots, S_k$ can be computed in $\PTIME$.  How?  First, we compute the edges of
$\graph$, by detecting the presence of a barbells between two states in $\PTIME$
(see above).  Once $\graph$ is built, the height is computed by a
graph traversal in linear time.

\section{Proof of Lemma \ref{lem:boundedcopy}}

Let $\trans := (A,B, \regs, \iota, \lambda, F)$ be the simple $\SST$ computing a
function with $(k+1)$-polynomial growth.  Let $S_0, \dots, S_{k+1}$ be the partition
given by Lemma \ref{lem:graph} applied to
$\fauto{\trans}=(A,\regs,\inia,\weia,\fina)$, together with the constant $B$.
The idea is to remove the registers of $S_0$ since they contains only bounded
strings.  Formally, from Lemma \ref{lem:graph} we deduce that there exists $L
\ge 0$ such that $|\trans^w(x)| \le L$ for all $x \in S_0$ and $w \in A^*$.  So
the strings $\trans^w(x)$ can be hardcoded in the states of the machine.

We define the $\SST$ $\srans := (A,B,Q,\regp, q_0, \iota', \delta, \lambda', G)$ as follows:
\begin{itemize}
\item the set $Q$ is $S_0 \rightarrow A^{\leq L}$. It represents the possible valuations of the registers from $S_0$;

\item the initial state $q_0$ is given by $q_0(x)=\iota(x)$ for $x\in S_0$, which initializes the registers to their initial values;

\item the state $\delta(q,a)$ is the function $x \mapsto q(\lambda(a)(x))$ where $q$ is seen as a substitution in $\subst{S_0}{B}$, i.e. it replaces each register by its value given by the state. This definition makes sense since only registers from $S_0$ can occur in $\lambda(a)(x)$ (because $x \in S_0$);

\item the register set $\regp$ is $ \biguplus_{1 \le i \le k+1} S_i$;

\item the initial function $\iota'$ maps $y \mapsto \iota(y)$ for $y\in\regp$;

\item the update function is such that $\lambda'(q,a)(y) = q(\lambda(a)(y))$;

\item all states are final and $G(q) := q(F)$.
\end{itemize}

It is easy to see that $\delta(q_0,w)(x) = \trans^{w}(x)$ for all $x \in S_0$.
Therefore, $\srans$ and $\trans$ are equivalent.  Moreover, by definition of
$\fauto{\trans}$, for all registers $x,y\in\regs$ and all words $w\in A^{*}$,
the value $\mu(w)(y,x)$ is the number of occurrences of $y$ in $\lambda(w)(x)$.
Using Lemma~\ref{lem:graph}, we deduce that $\srans$ is $(k,B)$-bounded with the
partition $S_1, \dots, S_{k+1}$ of $\regp$.

\section{Proof of Lemma \ref{lem:kbounded}}

It is known from \cite{alur2012regular,dartois2016aperiodic} that a $(0,B)$-bounded $\SST$ can be converted in a copyless $\SST$. Our objective is to generalize their proof to move from $(k,B)$-bounded $\SST$  to $k$-layered.

Recall that each layer of a $(k,B)$-bounded (resp.  $k$-layered) $\SST$ is
somehow a $(0,B)$-bounded (resp.  copyless) $\SST$, that can also "call"
registers from the lower layers.  So, the main idea is to convert each bounded copy
layer into a locally copyless layer.  The difficulty is to take into account
copies coming from the lower layers.  This is done by using the intermediate
model of \emph{$\SST$ with external functions} ($\SSTF$), defined below.  It
corresponds to an $\SST$ with a set of functions $\oras$ that can be called in
an oracle-like style.  The output of these functions can then be used in the
substitutions along a run.

We then discuss the properties of these $\SSTF$.  This way, we shall be able to
perform an induction on a $(k,B)$-bounded $\SST$, by making the layers copyless
one after the other.

\begin{definition} An \emph{$\SST$ with external functions} ($\SSTF$) $\trans = (A,B,Q, \regs, \oras, q_0, \iota, \delta,\lambda, F)$ consists of:

\item

\begin{itemize}

\item an input alphabet $A$ and an output alphabet $B$;

\item a finite set of states $Q$ with an initial state $q_0 \in Q$;

\item a finite set $\regs$ of registers;

\item a finite set $\oras$ of total functions from $A^* \rightarrow B^*$;

\item an initial function $\iota \colon \regs \rightarrow B^*$;

\item a (partial) transition function $\delta \colon Q \times A \rightarrow Q$;

\item a (partial) register update function $\lambda\colon Q \times A \rightarrow \subst{\regs}{B \cup \oras}$ with same domain as $\delta$;

\item a (partial) output function  $F\colon Q \rightarrow (\regs \cup B)^*$.

\end{itemize}
\end{definition}

The machine $\trans$ defines a (partial) function $f \colon A^* \rightarrow B^*$ as follows. Let us fix $w \in A^*$. If there is no accepting run of the \emph{one-way} automaton $(A, Q,q_0, \delta, \dom(F))$ over $w$, then $f(w)$ is undefined. Otherwise, let $q_m:= \delta(q_0,w[1{:}m])$ be the $m$-th state of this run. We define for $0 \le m \le |w|$, $\trans^{w[1:m]} \colon \regs \rightarrow B^*$ ("the values of the registers after reading $w[1{:}m]$") as follows:
\begin{itemize}

\item $\trans^{w[1:0]} (x) = \iota(x)$ for all $x \in \regs$;

\item for $1 \le m \le |w|$, we define $\lambda_{w[1:m]} \in \subst{\regs}{B}$
to be the substitution $\lambda(q_{m-1}, w[m])$ in which the external function names
$\exte$ are replaced by the values $\exte(w[1{:}m])$.  Formally, let $r \in
\subst{\oras}{\regs \cup B}$ which maps $\exte \in \oras \mapsto
\exte(w[1{:}m])$, then $\lambda_{w[1:m]} \colon x \mapsto r(\lambda(q_{m-1}, w[m])(x))$.

Then we let $\trans^{w[1:m]} := \trans^{w[1:(m-1)]}\circ \lambda_{w[1:m]}$.
\end{itemize}
\begin{example} Assume that $\regs = \{x\}$,  $\oras = \{{\exte}\}$, $\trans^{w[1{:}(m-1)]}(x) = ab$, $\exte(w[1{:}m]) = cc$ and $\lambda(q_{m-1}, w[m])(x) = x\exte b$. Then $\trans^{w[1{:}m]}(x) = abccb$.
\end{example}

Finally we set $f(w) := \trans^{w}(F(q_{|w|})) \in B^*$.

\begin{remark}
  An $\SSTF$ such that $\oras = \varnothing$ is just an $\SST$, and the
  semantics coincide.
\end{remark}

An $\SSTF$ is said to be \emph{copyless} whenever it does not duplicate
\emph{its registers}.  Formally, it means that $\forall x \in \regs, q \in Q, a
\in A$, $x$ occurs at most once in the set $\{\lambda(q,a)(y)\mid y \in
\regs\}$.  There are no restrictions on the use of external functions $\exte \in
\oras$, since they intuitively correspond to "lower layers" of a $k$-layered
$\SST$.

An $\SSTF$ is said to be $B$-\emph{bounded} if it is $(0,B)$-bounded in the
sense of Definition \ref{def:bounded}.  Formally, we define $\lambda_{u}^{v}$ to
be the substitution applied when reading $v=a_1\cdots a_\ell \in A^*$ after having read $u \in
A^*$, that is $\lambda_{ua_1} \circ \lambda_{ua_1a_2} \circ \cdots \circ \lambda_{uv}$.
Then the machine is $B$-bounded if $\forall u,v \in A^*$ and $x \in \regs$, $x$
occurs at most $B$ times in $\{\lambda_{u}^{v}(y) \mid y \in \regs\}$.

\begin{lemma} \label{lem:mainSSTF} Given a $B$-bounded $\SSTF$, one can build an equivalent copyless $\SSTF$ that uses the same external functions.
\end{lemma}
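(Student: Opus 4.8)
The plan is to show that a $B$-bounded $\SSTF$ can be made copyless in two stages, following the roadmap already sketched for Lemma~\ref{lem:kbounded}: first pass through an intermediate nondeterministic (but unambiguous) copyless $\SSTF$, then determinize it back into a copyless deterministic $\SSTF$. The key structural insight is that the external functions $\oras$ are purely "read-only" from the perspective of copylessness: since we only care about not duplicating \emph{registers}, the values $\exte(w[1{:}m])$ can be treated exactly like output letters from $B$ during the whole argument. Thus the external functions are carried along passively, and the proof reduces to the known theory of bounded-copy $\SST$s, applied with the extended output "alphabet" $B \cup \oras$.

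First I would make the $B$-boundedness concrete as a combinatorial property of register flow. For each register $x \in \regs$, consider the directed multigraph of "copy dependencies" induced by the substitutions $\lambda_u^v$; $B$-boundedness says that the out-degree (counting how many live copies of $x$ survive into the registers) is bounded by $B$ uniformly over all $u,v \in A^*$. The standard technique from \cite{dartois2016aperiodic} is to introduce, for each register, up to $B$ \emph{named copies}, and to guess nondeterministically, at each step, which physical copy a future occurrence will be routed to. This yields a nondeterministic $\SSTF$ whose substitutions are copyless (each named copy is used at most once), and crucially the routing is deterministic enough that exactly one sequence of guesses is globally consistent with the actual flow, giving an \emph{unambiguous} machine. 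The external function names $\exte$ appearing in a substitution are simply copied verbatim into the relevant named-copy slots; they impose no copylessness constraint and need no renaming, since $\oras$ stays fixed.

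Second I would convert the unambiguous copyless nondeterministic $\SSTF$ back into a deterministic copyless $\SSTF$, using the subset-with-registers construction of \cite{alur2012regular}. The idea is to track, in the deterministic state, the set of currently live nondeterministic branches together with a copyless reindexing of their registers into a shared common register pool; because the machine is unambiguous, at most one branch survives to the end, so the output function can project onto it. The delicate bookkeeping is ensuring that when branches merge or die, the reindexing remains a bijection on live registers (hence copyless) — again the $\oras$-valued content rides along inertly inside register values and does not interfere. I would invoke these two constructions as black boxes adapted to the extended alphabet $B \cup \oras$, checking only that every place where they manipulate \emph{output letters} they can equally manipulate \emph{external-function names}, which holds because neither construction ever inspects or duplicates the symbols of $B$ beyond concatenation.

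The main obstacle, and the place needing genuine care rather than a citation, is verifying that the unambiguity property is preserved and correctly exploited across the second step while external functions are present. Specifically I must confirm that two distinct global guess-sequences cannot both yield accepting runs producing the same register flow, \emph{even though} external-function values are substituted in only at read time and are invisible to the flow automaton; since $B$-boundedness and unambiguity are statements purely about register copying (independent of the concrete values substituted for $\exte \in \oras$), the presence of external functions does not create new ambiguity. Once this is established, the composition of the two constructions produces a deterministic copyless $\SSTF$ over the same $\oras$, equivalent to the original, completing the proof of Lemma~\ref{lem:mainSSTF}.
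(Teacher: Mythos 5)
Your overall route is the same as the paper's: first convert the $B$-bounded $\SSTF$ into an unambiguous copyless nondeterministic $\SSTF$ by guessing, for each register, how many copies will actually be used in the final output (following \cite{dartois2016aperiodic}), then remove the nondeterminism (following \cite{alur2012regular}), treating external function names as inert output letters whose calls happen at the same input positions. Your first stage, and your analysis of why $\oras$ imposes no extra copylessness or ambiguity constraints, do match the paper's proof.

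The gap is in your second stage. What you describe --- tracking ``the set of currently live nondeterministic branches together with a copyless reindexing of their registers into a shared common register pool'', maintained as ``a bijection on live registers'' --- is precisely the naive subset construction that the paper singles out as the \emph{wrong proof idea}, and it fails. The difficulty is not when branches merge or die (in a trim unambiguous machine two initial runs on the same prefix never reach the same state, so branches never merge); it is when a branch \emph{splits}: if a live branch ending in state $q$ has two successors $q_1, q_2$ on letter $a$, with substitutions that both use a register $x$ (e.g.\ both are $x \mapsto x$), then both child branches need the contents computed along the run up to $q$, and no bijective reindexing of \emph{contents} can serve both without duplication. The actual construction stores no per-branch contents at all: it keeps, symbolically, the copyless \emph{substitutions} applied along the maximal non-branching (``deterministic alive'') segments of the forest of initial runs, each represented by a bounded skeleton $\ske_s$ in the state plus the words $\beg_s(x)$ and $\fol_s(x)$ in registers, and it composes these substitutions lazily --- only when a subtree dies and a path becomes non-branching again, and finally along the unique accepting path at output time. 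Deferring composition past the split points is exactly what makes copylessness achievable, and this idea is absent from your sketch; invoking \cite{alur2012regular} as a black box does not repair this, because the mechanism you attribute to that black box is the one it was designed to replace.
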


\begin{remark} 
  An $\SSTF$ may have no finite presentation, since the functions from $\oras$
  are not required to be computable (even if in practice it will not be the
  case).  In order to have an effective Lemma \ref{lem:mainSSTF}, we consider
  that an $\SSTF$ only contains the \emph{function names} in its representation
  (the functions themselves being given like some oracles).
\end{remark}

The subsections \ref{sec:boundedcopies-unambiguity} and
\ref{sec:remove-ambiguity} below are dedicated to the proof of
Lemma~\ref{lem:mainSSTF}.  To simplify the matters, we shall only reason about
total transducers, but the result is the same with domains.

Assume Lemma \ref{lem:mainSSTF} holds, we  now prove Lemma \ref{lem:kbounded}, that is given a  $(k,B)$-bounded $\SST$, one can build an equivalent $k$-layered $\SST$.

\begin{proof}[Proof of Lemma \ref{lem:kbounded}.]
  The proof is done by induction on $k \ge 0$. 
  Let $\trans$ be a $(k,B)$-bounded $\SST$ computing a function $f$. By 
  induction we assume that $(k',B')$-bounded $\SSTs$ with $k'<k$ can be 
  converted to $k'$-layered $\SSTs$.
  \begin{enumerate}

    \item We build a bounded $\SSTF$ $\srans$ that computes $f$, whose external
    functions are computable by $(k-1)$-layered $\SSTs$.  Let $\regs_0, \dots,
    \regs_{k}$ be the layers of registers of $\trans$, and let $\regu :=
    \bigcup_{0 \le i < k} \regs_i$.  For all $x \in \regu$ (empty if $k=0$),
    define the function ${\exte}_x\colon w \mapsto \trans^w(x)$ that describes "the
    value of $x$ after reading $w$"; it is computed by a $(k-1,B)$-bounded $\SST$
    derived from $\trans$.  We then transform the layer $\regs_{k}$ in a
    $B$-bounded $\SSTF$ $\srans$ whose external functions are $\{ {\exte}_x \mid
    x \in\regu\}$.

    \item By Lemma \ref{lem:mainSSTF}, we can transform $\srans$ in a copyless
    $\SSTF$ $\srans'$ that also uses the $\{ {\exte}_x \mid x \in \regu \}$.

    \item By induction hypothesis, the functions ${\exte}_x$ are computable by
    $(k-1)$-layered $\SSTs$.

    \item Finally we build the $k$-layered $\SST$ for $f$.  Using a product
    construction, we compute "in parallel" all the functions ${\exte}_x$ for
    $x\in\regu$ in a $(k-1)$-layered $\SST$, and use it as layers $0, \dots,
    k-1$ of $\srans'$.  \qedhere
\end{enumerate}
\end{proof}

\subsection{From bounded copies to 
unambiguity}\label{sec:boundedcopies-unambiguity}

In order to move from a bounded $\SSTF$ to a copyless $\SSTF$, the natural idea is to use copies of each register. However, we cannot maintain $B$ copies of each variable all the time: suppose that $x$ is used both in $y$ and $z$. If we have $B$ copies of $x$, we cannot produce in a copyless way $B$ copies of $y$ and $B$ copies of $z$.

To solve this issue, we shall follow the ideas of \cite{dartois2016aperiodic} (which have no external functions). We shall maintain $n_x$ copies of $x$ \emph{if this register is involved exactly $n_x$ times in the final output}. We will thus have enough copies to produce the output in a copyless fashion. However, this number $n_x$ cannot be computed before reading the whole input. Our transducer will have to guess it, what motivates the introduction of \emph{nondeterminism} below. In fact, we shall do better than nondeterminism and obtain an \emph{unambiguous} machine.

\subparagraph*{Nondeterministic $\SSTs$.} A \emph{non-deterministic $\SST$ with
external functions} ($\NSSTF$ for short) $\mc{N} = (A,B,Q, \regs, \oras, I,
\Delta, \Lambda, F)$ is defined similarity as $\SSTF$, except that the
underlying automaton is non-deterministic.  The reader can refer to
\cite{alur2012regular} for examples of non-deterministic $\SSTs$ without external
functions.  Formally, the changes are the following:
\begin{itemize}
 \item the initial state $q_0$ and initial function $\iota$ are replaced by a 
 partial initial function $I\colon Q \rightarrow (\regs \rightarrow B^*)$. A state $q$ is said to be  \emph{initial} when $q \in \dom(I)$. Intuitively, $I(q)$ describes how to initialize the registers if we start from state $q$;
 \item the transition function $\delta$ is replaced by a transition relation $\Delta \subseteq Q \times A \times Q$;
 \item the update function $\lambda$ is replaced by a mapping $\Lambda \colon \Delta \rightarrow \subst{\regs}{\oras \cup B }$ which maps every transition to a substitution.
 \end{itemize}
A run of $\utrans$ is a run of the non-deterministic automaton $\auto := (A,Q,\dom(I), \Delta,\dom(F))$; it is said \emph{initial} if it starts in an initial state, and \emph{accepting} if it also ends in a final state. The transducer describes a relation $R \subseteq A^* \times B^*$. We have $(w,v) \in R$ when $v$ is produced on some accepting run labelled by $w$ (updates along a fixed run are defined as for an $\SSTF$).

The $\NSSTF$ is said to be \emph{copyless} if $\Lambda$ maps to copyless
substitution.  It is said to be \emph{unambiguous} if $\auto$ is so, i.e. there
is at most one accepting run for each input word.  In that case, we can
consider that it computes a partial function.

\begin{lemma} Given a bounded $\SSTF$, one can build an equivalent unambiguous copyless $\NSSTF$ that uses the same external functions.
\end{lemma}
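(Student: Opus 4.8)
The plan is to follow the copy-counting construction of \cite{dartois2016aperiodic}, maintaining for each register several copies whose number equals that register's \emph{future multiplicity}. Fix the given $B$-bounded $\SSTF$ $\trans=(A,B,Q,\regs,\oras,q_0,\iota,\delta,\lambda,F)$ and a word $w$ with (deterministic) state run $q_0,\dots,q_{|w|}$. For $0\le m\le|w|$ and $x\in\regs$ I would define the \emph{demand} $d^m_x$ as the number of occurrences of $x$ in $\lambda_{w[1{:}m]}^{w[(m+1){:}|w|]}(F(q_{|w|}))$, i.e.\ the number of times the value $\trans^{w[1{:}m]}(x)$ will eventually be used to build the output $\trans^w(F(q_{|w|}))=f(w)$. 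Writing $n_{x,y}$ for the number of occurrences of the register $x$ in $\lambda(q_{m-1},w[m])(y)$, this quantity satisfies $d^{|w|}_x=$ (occurrences of $x$ in $F(q_{|w|})$) and the backward recurrence $d^{m-1}_x=\sum_{y\in\regs} n_{x,y}\,d^m_y$. The first step is to extract a uniform bound from $B$-boundedness: since $x$ occurs at most $B$ times over the whole window substitution $\{\lambda_{w[1{:}m]}^{w[(m+1){:}|w|]}(y)\mid y\in\regs\}$, one gets $d^m_x\le |F|\cdot B=:N$, a constant of the input machine. External-function symbols play no role here, as they are replaced by constant words in $\lambda_u^v$ and hence contribute no register occurrences.

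The $\NSSTF$ $\mc{N}$ I would build has states $Q\times\{0,\dots,N\}^{\regs}$, registers $\{x^{(j)}\mid x\in\regs,\ 1\le j\le N\}$, and the same external functions $\oras$. The intended invariant is: in state $(q_m,d^m)$ after reading $w[1{:}m]$, the register $x^{(j)}$ holds $\trans^{w[1{:}m]}(x)$ for $j\le d^m_x$ and $\vide$ otherwise. The initial states are the $(q_0,d)$ with $x^{(j)}$ initialised to $\iota(x)$ for $j\le d_x$ (and $\vide$ otherwise). A transition on $a$ from $(q,d)$ to $(\delta(q,a),d')$ is allowed exactly when $d_x=\sum_{y}n_{x,y}\,d'_y$ for all $x$, i.e.\ when $d,d'$ satisfy the recurrence. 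On such a transition, each $y^{(j)}$ with $j\le d'_y$ receives a fresh copy of $\lambda(q,a)(y)$ in which every occurrence of a register $x$ is replaced by a distinct old copy $x^{(i)}$, the external-function symbols being kept verbatim, and the remaining registers are set to $\vide$. The crucial point is that the transition condition is an \emph{equality}: the total number of slots consuming the old register $x$, taken over all pairs $(y,j)$, equals $\sum_y d'_y\, n_{x,y}=d_x$, which is exactly the number of available copies $x^{(1)},\dots,x^{(d_x)}$. Fixing any bijection between slots and copies yields a substitution in which each old copy appears at most once, hence a copyless update. The final states are the $(q,d)$ with $q\in\dom(F)$ and $d_x$ equal to the number of occurrences of $x$ in $F(q)$; the output replaces the $i$-th occurrence of $x$ in $F(q)$ by $x^{(i)}$.

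It remains to verify equivalence and unambiguity. Equivalence follows from the invariant: on the accepting run each $x^{(i)}$ holds $\trans^w(x)$, so the output reconstructs $\trans^w(F(q_{|w|}))=f(w)$; duplicating external-function symbols is harmless since evaluating $\exte$ several times at the same position always returns $\exte(w[1{:}m])$, and copylessness is only required of registers. Unambiguity is where the nondeterministic guessing is tamed: the one-way automaton underlying $\trans$ is deterministic, so the state sequence is forced; the final condition forces $d^{|w|}$; and the transition constraint $d^{m-1}=M_m\,d^m$ then determines each $d^m$ by backward induction. Hence at most one run can be accepting, and it carries exactly the true demands $d^m\le N$, so capping the number of copies at $N$ loses nothing. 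The step I expect to be most delicate is turning the equality $d_x=\sum_y n_{x,y}\,d'_y$ into an \emph{explicit} copyless substitution — the bookkeeping of the bijection between old copies and consuming slots when several new registers are produced per step and external-function symbols are present — together with pinning down the constant $N$ precisely from the $B$-boundedness hypothesis.
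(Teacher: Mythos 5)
Your proposal is correct and follows essentially the same route as the paper: you guess the future multiplicity (the paper's $N_u^v(x)$, your demand $d^m_x$) nondeterministically in the state, maintain exactly that many copies of each register, enforce the backward recurrence $d^{m-1}_x=\sum_y n_{x,y}\,d^m_y$ on transitions so that updates stay copyless, and derive unambiguity by backward induction from the final condition. The only cosmetic differences are that you pin down the explicit bound $N=|F|\cdot B$ where the paper simply asserts boundedness, and you make the slot-to-copy bijection explicit.
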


Let $\trans = (A,B,Q, \regs, \oras, q_0, \iota, \delta, \lambda, F)$ be a
bounded total $\SSTF$, we show how to build an equivalent unambiguous copyless
$\NSSTF$ $\utrans = (A,B,P, \regp, \oras, I, \Delta, \Lambda, G)$.  Before
detailing the construction, we fix some notations concerning the occurrences
of the registers.

\subparagraph*{Occurrences in the final output.} After reading some prefix $u \in A^*$, we want to compute the number of times $x \in \regs$ is used in the final output after reading the suffix $v \in A^*$. Thus we define the $N_u^v(x)$ as the number of occurrences of $x$ in $(\lambda_u^v)(F_{\delta(q_0,uv)})$.

\begin{remark} 
  Contrary to $\lambda_u^v(x)$ which may refer to external functions, $N_u^v(x)$
  does not depend on external functions.  Indeed, they do not change how the
  registers flow in each other during the substitutions.  Also, the only
  useful information about $u$ is $\delta(q_0,u)$.
\end{remark}

Since the $\SSTF$ is bounded, the $N_{u}^v(x)$ are bounded by some $B \ge 0$. We can even be more precise and note that they are somehow preserved along a run.

\begin{example} \label{ex:pos} 
  Assume that $\regs = \{x,y\}$ and $\oras = \{{\exte}\}$.  Let $w = uav \in
  A^*$ with $a \in A$.  Assume that the substitution applied by $\trans$ in
  state $\delta(q_0,u)$ when reading $a$ is $ x \mapsto x, y \mapsto xy{\exte}$.
  If $N_{u}^{av}(x) = n$, then $N_{ua}^{v}(x) + N_{ua}^{v}(y) = n$.  Indeed, the
  $n$ occurrences of $x$ in the final output are "transformed" in occurrences of
  either $x$ xor $y$ after reading $a$.
\end{example}

The previous example can easily be generalized to obtain Claim \ref{claim:eq}.

\begin{claim} \label{claim:eq} Forall $u,v \in A^*, a \in A, x \in \regs$, we have:

$$N_u^{av}(x) = \sum_{y \in  \regs} c^y_x \times N_{ua}^v (y)$$

where $c_x^y$ is the number of occurrences of $x$ in $\lambda(\delta(q_0,u),a)(y)$.
\end{claim}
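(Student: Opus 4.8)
The plan is to peel off the first letter $a$ of the suffix $av$ and reduce the statement to how a single substitution redistributes register occurrences. First I would record the decomposition
\[
\lambda_u^{av} = \lambda_{ua} \circ \lambda_{ua}^v,
\]
which follows directly from the definition of $\lambda_u^v$ as an ordered composition of the per-letter substitutions: reading $av$ after $u$ amounts to first applying (as the outermost substitution) $\lambda_{ua}$, the substitution read on letter $a$ from state $\delta(q_0,u)$, and then the substitution $\lambda_{ua}^v$ obtained by reading $v$ after the longer prefix $ua$. Since $\delta(q_0,uav)=\delta(q_0,(ua)v)$, the output word $F_{\delta(q_0,uav)}$ is the same on both sides, so
\[
\lambda_u^{av}(F_{\delta(q_0,uav)}) = \lambda_{ua}\bigl(\lambda_{ua}^v(F_{\delta(q_0,uav)})\bigr).
\]

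Next I would isolate the only combinatorial fact needed, namely that counting occurrences of a fixed register is \emph{linear} under a substitution. For any substitution $\sigma \in \subst{\regs}{B \cup \oras}$ and any word $W \in (\regs \cup B)^*$, each occurrence of a register $y$ in $W$ is replaced independently by $\sigma(y)$, so the number of occurrences of $x$ in $\sigma(W)$ equals $\sum_{y \in \regs} (\text{number of } y \text{ in } W)\cdot(\text{number of } x \text{ in } \sigma(y))$. Crucially, replacing the external function names in $\sigma$ by their $B$-valued images does not affect this count, since neither letters of $B$ nor values of functions in $\oras$ contain registers; hence one may equally use the raw substitution $\lambda(\delta(q_0,u),a)$, whose register statistics are exactly the $c_x^y$.

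Finally I would instantiate this with $\sigma = \lambda_{ua}$ and $W = \lambda_{ua}^v(F_{\delta(q_0,uav)})$. By definition of $N$, the number of occurrences of $y$ in $W$ is precisely $N_{ua}^v(y)$, while the number of occurrences of $x$ in $\sigma(y)=\lambda_{ua}(y)$ is $c_x^y$. Substituting into the linearity identity yields $N_u^{av}(x) = \sum_{y \in \regs} c_x^y\, N_{ua}^v(y)$, as claimed. There is no genuinely hard step here; the only care required is bookkeeping---getting the composition order of the per-letter substitutions right, and checking that external functions are transparent to register counting (they contribute only $B$-letters). I would also note in passing that this identity is nothing but the morphism property of the flow automaton read contravariantly: the matrix $(c_x^y)_{x,y}$ is the one-letter transition weight, and the claim expresses that the vector counting occurrences in the final output is transported by these matrices.
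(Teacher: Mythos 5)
Your proof is correct and takes essentially the same route as the paper: the paper merely asserts that Example~\ref{ex:pos} ``easily generalizes'', and that generalization is precisely your argument --- decompose $\lambda_u^{av} = \lambda_{ua} \circ \lambda_{ua}^{v}$, then count occurrences of $x$ linearly under the morphic application of $\lambda_{ua}$, noting that $B$-letters and external-function values contribute no register occurrences. Your closing observation that the identity is the flow-automaton morphism property read contravariantly is a nice extra, but not part of (or needed for) the paper's argument.
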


\subparagraph*{States and registers of $\utrans$. } 
The states of $\utrans$ are $P := Q \times (\regs \rightarrow \{0, \dots, B\})$.
The registers are $\regp := \regs \times \{1, \dots, B\}$.  Consider the $m$-th
configuration of the unique accepting run of $\utrans$ labelled by $w \in A^*$.
Let $u = w[1{:}m]$ and $v = w[(m+1){:}|w|]$.  We want to keep track of:
\begin{enumerate}
\item \label{inv:1} in the first component of $P$: the state $\delta(q_0,u)$ of $\trans$;
\item \label{inv:2} in second component of $P$: the function $N_{u}^{v} := x \mapsto N_{u}^{v} (x)$;
\item \label{inv:3} in the registers $(x,1), \dots, (x,{N_{u}^{v} (x)})$: the same word $\trans^u(x)$ (it corresponds to the copies);
\item \label{inv:4} in the registers $(x,{N_{u}^{v} (x)+1}), \dots, (x,{B})$: the word $\vide$ (they are not used).
\end{enumerate}
We refer to these points as the \emph{invariants} maintained along the unique accepting run of $\utrans$.

\subparagraph*{Initial states of $\utrans$.} A state of $p = (q,g)$ is initial if and only $q = q_0$. Indeed, we start from the initial state of $\trans$ on the first component. For the second component, we shall use nondeterminism to guess the $N_{\vide}^w(x)$ in the beginning of a run. We initialize the registers according to Invariants \ref{inv:3} and \ref{inv:4}, that is $I(p)(x,1) = \cdots = I(p)(x,{g(x)}) := \iota(x)$ and $I(p)(x,{g(x) + 1}) =  \cdots = I(p)(x,{B}) := \vide$.

\subparagraph*{Final states and output of $\utrans$.} A state $p = (q,g)$ is final if and only if:
\begin{itemize}
\item $q \in \dom(F)$, i.e. the computation of $\trans$ is accepting;
\item $\forall x \in \regs$, $g(x)$ is the number of occurrences of $x$ in $F(q)$, i.e. $g$ describes correctly the occurrences of the registers in $F(q)$.
\end{itemize}
The output $G(q,g)$ is defined as $\mu_g(F(q))$ where $\mu_g$ replaces the $i$-th occurence of $x \in F(q)$ by $(x,i) \in \regp$. By definition of final states, these registers exist.

\subparagraph*{Updates of $\utrans$.} 
We want to define $\Delta$ in order to maintain Invariants \ref{inv:1} -
\ref{inv:4} along the accepting runs of $\utrans$.  Intuitively, $\utrans$
behaves as $\trans$ on the first component of the state (without using
nondeterminism), and guesses the correct number of copies in the second
component.

Formally, we have $((q,g),a,(q',g')) \in \Delta$ if and only if:
\begin{itemize}
\item $q' = \delta(q,a)$;
\item $\forall x \in \regs, g(x) =  \sum_{y \in  \regs} c^y_x \times g' (y)$ where $c_x^y$ is the number of occurrences of $x$ in $\lambda(q,a)(y)$. This way we make a guess that respects Claim \ref{claim:eq}.
\end{itemize}

\begin{example} \label{ex:guess}  Assume that $\regs = \{x,y\}$, $\oras = \{{\exte}\}$, $\delta(q,a)=q'$ and $\lambda(q,a)=x \mapsto x, y \mapsto xy{\exte}$ in $\trans$. Let $p := (q, x \mapsto 2, y \mapsto 1)$, $p' := (q', x \mapsto 1, y \mapsto 1)$ and $p'' := (q', x \mapsto 2, y \mapsto 0)$  be states of $\utrans$. We have  $(p,a,p') \in \Delta$ but $(p,a,p'') \not \in \Delta$.
\end{example}

The function  $\Lambda$ is defined accordingly to the guesses we do in $\Delta$, in order to maintain Invariants \ref{inv:3} and \ref{inv:4} along the accepting run. This can be performed in a copyless fashion with respect to $\regp$, since by respecting Claim \ref{claim:eq} we do not create more copies than existing before. The external functions are used as they were in $\trans$.

\begin{example} Following Example \ref{ex:guess}, we assume that $ \regp =: \{(x,1), (x,2), (y,1), (y,2)\}$. Then  $\Lambda(p,a,p') = (x,1) \mapsto (x,1), (x,2) \mapsto \vide, (y,1) \mapsto (x,2) (y,1) \exte, (y,2) \mapsto \vide$.
\end{example}

\subparagraph*{Properties of $\utrans$.} We already noted that $\utrans$ is copyless, it remains to show  that it is unambiguous and describes the  same function as $\trans$. Intuitively, if a wrong guess is done in a run, it will propagate until the end and lead to a non-accepting state.

\begin{claim} For $w \in A^*$, there is a unique accepting run $\rho = p_0 \rightarrow \cdots \rightarrow p_{|w|}$ labelled by $w$. More precisely, for all $0 \le m \le |w|$, we have $p_m = \left(\delta(q_0,w[1{:}m]), N_{w[1:m]}^{w[(m+1):|w|]}\right)$.
\end{claim}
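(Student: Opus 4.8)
The plan is to establish both existence and uniqueness by pinning down the two components of each state $p_m$ separately. The first component is entirely forced: every initial state of $\utrans$ has first component $q_0$, and every transition of $\Delta$ requires $q' = \delta(q,a)$, so a straightforward forward induction shows that along any run labelled by $w$ the first component of $p_m$ equals $\delta(q_0, w[1{:}m])$. The real content is to show that the second component $g_m$ must equal $N_{w[1:m]}^{w[(m+1):|w|]}$ along any \emph{accepting} run, and that the run carrying these values indeed exists.

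For existence, I would exhibit the candidate run $p_m := \bigl(\delta(q_0,w[1{:}m]),\, N_{w[1:m]}^{w[(m+1):|w|]}\bigr)$ and check that it is a valid accepting run of $\auto$. Its second components lie in $\{0,\dots,B\}$ because the $\SSTF$ is bounded, so the states are well-defined. The first state $p_0$ is initial since $\delta(q_0,w[1{:}0]) = q_0$. Each step $p_{m-1} \to p_m$ belongs to $\Delta$: the first-component condition is immediate, and the second-component condition is exactly Claim \ref{claim:eq} instantiated with $u = w[1{:}(m-1)]$, $a = w[m]$ and $v = w[(m+1){:}|w|]$. Finally $p_{|w|}$ is final: since $\lambda_w^{\vide}$ is the identity substitution, $N_w^{\vide}(x)$ counts exactly the occurrences of $x$ in $F(\delta(q_0,w))$, which is the defining condition for a final state (and $\delta(q_0,w) \in \dom(F) = Q$ since $\trans$ is total).

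For uniqueness, I would argue backwards from the end. Along any accepting run the final-state condition forces $g_{|w|}(x)$ to be the number of occurrences of $x$ in $F(\delta(q_0,w))$, i.e. $g_{|w|} = N_w^{\vide}$ by the same endpoint identity. Now suppose inductively that $g_m = N_{w[1:m]}^{w[(m+1):|w|]}$. The transition $(p_{m-1},w[m],p_m) \in \Delta$ imposes $g_{m-1}(x) = \sum_{y \in \regs} c^y_x\, g_m(y)$, where the coefficients $c^y_x$ depend only on the already-determined state $\delta(q_0,w[1{:}(m-1)])$ and the letter $w[m]$. This formula \emph{computes} $g_{m-1}$ from $g_m$, so the backward step is deterministic; and by Claim \ref{claim:eq} the value obtained is precisely $N_{w[1:(m-1)]}^{w[m:|w|]}$. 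Hence $g_{m-1}$ is uniquely determined and equals the claimed value, closing the induction.

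The main obstacle, and the place where all the setup pays off, is the interaction between the forward-nondeterministic but backward-deterministic nature of $\Delta$ and the boundary condition at the final state: read forward, the constraint $g(x) = \sum_y c^y_x\, g'(y)$ is an underdetermined linear system in the guessed values $g'$, which is exactly why $\utrans$ is genuinely nondeterministic; it is only by anchoring $g_{|w|}$ through the final-state condition and propagating backwards via Claim \ref{claim:eq} that a single consistent sequence of guesses survives. Establishing this asymmetry cleanly, and checking that the identity $\lambda_w^{\vide} = \mathrm{id}$ matches the final-state condition to $N_w^{\vide}$, is the crux; everything else is bookkeeping on the two components.
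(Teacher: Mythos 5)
Your proof is correct and follows essentially the same route as the paper: the first components are forced, the final-state condition anchors $g_{|w|} = N_w^{\vide}$, and Claim \ref{claim:eq} propagates the second components backwards (your backward induction is just a rephrasing of the paper's largest-disagreement-index contradiction). You are in fact more explicit than the paper on existence, which it dismisses as ``clearly accepting''; your checks (components bounded by $B$, initial/final conditions, transitions via Claim \ref{claim:eq}) are exactly the right ones.
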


\begin{proof} The run $\rho$ is clearly accepting. Let $\rho' =  p'_0 \rightarrow \cdots \rightarrow p'_{|w|}$ be another accepting run labelled by $w$, and $0 \le m \le |w|$ be the largest index such that $p_m \neq p'_m$. Let $u = w[1{:}m]$ and $v = w[(m+1){:}|w|]$. Then $p'_m = (\delta(q_0,u), g)$ by construction of the transitions, thus $g \neq N_{u}^{v}$.

Necessarily $m < |w|$ since $p'_m$ is not final. Therefore $v = av'$ for some $a \in A$ and $p'_{m+1} = (\delta(q_0,ua),N_{ua}^{v'})$. By construction of the transitions, for all $x \in \regs$ we have $g(x) = \sum_{y \in \regs} c_{x}^y N_{ua}^{v'}(y)$ where $c_x^y$ is the number of occurrences of $x$ in $\lambda(\delta(q_0,u),a)(y)$. By Claim \ref{claim:eq} it means that $g = N_{u}^{av'}$, a contradiction.
\end{proof}

\begin{claim} Invariants \ref{inv:3} and \ref{inv:4} are preserved along the run $\rho$.
\end{claim}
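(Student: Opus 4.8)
The plan is to prove the claim by induction on the position $m$ along the unique accepting run $\rho = p_0 \rightarrow \cdots \rightarrow p_{|w|}$, writing $u = w[1{:}m]$ and $v = w[(m+1){:}|w|]$. By the previous claim we already know $p_m = (\delta(q_0,u), N_u^v)$, so invariants \ref{inv:1} and \ref{inv:2} hold; it remains to track the register contents and check that $(x,i)$ stores $\trans^u(x)$ for $1 \le i \le N_u^v(x)$ and $\vide$ otherwise.

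For the base case $m = 0$ we have $u = \vide$ and $\trans^{\vide}(x) = \iota(x)$. Since $p_0 = (q_0, N_{\vide}^w)$ is initial, the initial function was defined to set $(x,1),\dots,(x,N_{\vide}^w(x))$ to $\iota(x)$ and the remaining $(x,j)$ to $\vide$, which is precisely invariants \ref{inv:3} and \ref{inv:4}.

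For the inductive step, let $a = w[m+1]$ and write $v = av'$, so that the run takes the transition $(p_m,a,p_{m+1}) \in \Delta$ with $p_{m+1} = (\delta(q_0,ua), N_{ua}^{v'})$. The value to be stored in each copy of $y$ is $\trans^{ua}(y) = \trans^u(\lambda_{ua}(y))$, where $\lambda_{ua}$ is $\lambda(\delta(q_0,u),a)$ with the external function names resolved at $ua$. I would define $\Lambda(p_m,a,p_{m+1})$ so that each register $(y,j)$ with $1 \le j \le N_{ua}^{v'}(y)$ receives a copy of $\lambda(\delta(q_0,u),a)(y)$ in which every occurrence of a register $x$ is replaced by a distinct available copy $(x,i)$, while external function names and output letters are left unchanged; the registers $(y,j)$ with $j > N_{ua}^{v'}(y)$ are set to $\vide$. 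By the induction hypothesis each $(x,i)$ for $1 \le i \le N_u^v(x)$ holds $\trans^u(x)$, so any such assignment produces exactly $\trans^{ua}(y)$ in the intended registers, maintaining invariant \ref{inv:3}, and $\vide$ in the others.

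The only real obstacle is to verify that this replacement can be carried out \emph{copylessly}, i.e.\ that the total number of occurrences of each $x$ does not exceed the number $N_u^v(x)$ of copies actually available at position $m$. Producing the $N_{ua}^{v'}(y)$ copies of $\trans^{ua}(y)$ consumes $c_x^y \cdot N_{ua}^{v'}(y)$ copies of $x$, where $c_x^y$ is the number of occurrences of $x$ in $\lambda(\delta(q_0,u),a)(y)$; summing over $y$ gives a demand of $\sum_{y \in \regs} c_x^y \, N_{ua}^{v'}(y)$, which by the defining condition of $\Delta$ --- exactly the flow equation of Claim \ref{claim:eq} --- equals $N_u^{av'}(x) = N_u^v(x)$. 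Thus supply equals demand exactly, so the occurrences of $x$ can be matched bijectively to the copies $(x,1),\dots,(x,N_u^v(x))$; assigning a distinct source copy to each occurrence makes $\Lambda$ copyless, and since no copy is left over the unused target registers are consistently set to $\vide$, establishing invariant \ref{inv:4} as well.
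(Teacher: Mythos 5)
Your proof is correct and is essentially the argument the paper intends: the paper leaves this claim unproven, justifying it only by the remark that $\Lambda$ is defined "by respecting Claim \ref{claim:eq}" so that no more copies are created than exist, and your induction with the supply-equals-demand accounting ($\sum_{y} c_x^y\, N_{ua}^{v'}(y) = N_u^v(x)$ via the definition of $\Delta$) is precisely the rigorous version of that sketch.
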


It follows from the construction of the output that $\utrans$ and $\trans$ are equivalent.

\subsection{Removing unambiguity}\label{sec:remove-ambiguity}

\begin{lemma} 
  Given an unambiguous copyless $\NSSTF$, one can build an equivalent copyless
  $\SSTF$ that uses the same external functions.
\end{lemma}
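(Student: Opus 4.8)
The plan is to determinize the machine by a subset construction, using unambiguity to make the bookkeeping canonical, and then to eliminate the copies this construction introduces. First I would \emph{trim} the $\NSSTF$, keeping only the states that are both reachable from an initial state and co-reachable to a final state; erasing the other states changes neither the set of accepting runs nor the computed function. The crucial structural consequence of unambiguity is then the following \emph{unique-run property}: for every input prefix $u \in A^*$ and every trimmed state $q$, there is at most one initial run reaching $q$ while reading $u$. Indeed, two distinct such runs could be prolonged by a common run from $q$ to a final state (which exists by co-reachability), yielding two distinct accepting runs on a common word and contradicting unambiguity. Applying this to $q$ final shows that a reachable set of states contains at most one final state, and applying it at position $|u|{+}1$ shows that each reachable state has a unique incoming transition from the current set.

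Using this, I would build a deterministic $\SSTF$ $\srans$ whose states are the \emph{subsets} $R \subseteq Q$ of states reachable after the prefix read so far, with the usual subset update $R \mapsto \{q' \mid \exists q \in R,\ (q,a,q') \in \Delta\}$. Its registers are indexed by $Q \times \regs$, the register $(q,x)$ being meant to hold the value of $x$ along the unique run reaching $q$. To update $(q',x)$, the machine reads off the unique transition $(q,a,q')$ feeding $q'$ and applies $\Lambda(q,a,q')(x)$, substituting $(q,y)$ for each $y \in \regs$ and leaving the external-function symbols of $\oras$ untouched. A set $R$ is final iff it contains a (necessarily unique) final state $q_f$, in which case the output is $G(q_f)$ read over the registers $(q_f,\cdot)$. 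Correctness is then a direct induction maintaining the invariant that $(q,x)$ stores the value of $x$ in the unique run to $q$ whenever $q \in R$; it uses the $\NSSTF$ semantics and the unique-run property, and it carries the external functions over verbatim since they only appear as atomic symbols in the substitutions.

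The obstacle is that this machine is not per-transition copyless: a register $(q,y)$ may simultaneously feed several live successors of $q$, so the branching duplicates it. The remaining work is to show that $\srans$ is nevertheless only \emph{bounded-copy}, and then to make it copyless. For the bound, I would use that runs never \emph{merge} (the unique-run property forbids two runs reaching the same state), so the flow of any single register into the at most $|Q|$ simultaneously live states is a tree; combined with the copylessness of each $\Lambda(q,a,q')$, every register is duplicated at most $B := |Q|\cdot|\regs|$ times over any read, i.e. $\srans$ is $(0,B)$-bounded in the sense of Definition \ref{def:bounded}. Finally I would invoke the known transformation of a $(0,B)$-bounded $\SST$ into a copyless one \cite{alur2012regular,dartois2016aperiodic}: since that transformation only reasons about how registers flow and is oblivious to the atomic symbols being concatenated, the symbols of $\oras$ can be treated exactly like output letters, so it applies to $\SSTF$ and yields an equivalent copyless $\SSTF$ with the same external functions. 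I expect this amortization of the branching copies into a copyless lock-step maintenance of parallel copies to be the technical heart of the argument.
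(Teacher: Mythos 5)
Your preliminary steps are sound: trimming, the unique-run property, and the observation that the subset construction (registers $(q,x)$ holding the value of $x$ along the unique run to $q$) yields a machine that is not copyless but is $(0,B)$-bounded in the sense of Definition~\ref{def:bounded}. That last point is a genuinely nice observation; the paper presents this very subset construction as its ``wrong proof idea'' and discards it without noting that its copying is bounded. The fatal gap is in your final step. The conversion you invoke there --- $(0,B)$-bounded machines into copyless ones --- is, for machines with external functions, exactly Lemma~\ref{lem:mainSSTF} of the paper, and that lemma is proved by composing two steps: bounded $\SSTF$ to unambiguous copyless $\NSSTF$, and then unambiguous copyless $\NSSTF$ to copyless $\SSTF$ --- the second being precisely the statement you are asked to prove. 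So your argument reduces the statement to a result whose only available proof runs through that statement: the reduction makes no progress, and the proof of Lemma~\ref{lem:mainSSTF} would become circular. The citations \cite{alur2012regular,dartois2016aperiodic} do not break the circle, because they concern $\SST$s \emph{without} external functions (and their proofs have the same two-step structure).

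Nor can the external-function issue be dismissed by declaring the symbols of $\oras$ atomic letters. A letter $b \in B$ is a constant, whereas the value substituted for $\exte \in \oras$ is $\exte(w[1{:}m])$, which depends on the position $m$ at which the substitution containing it is applied. Hence two $\SSTF$s whose letter-abstractions (over the alphabet $B \cup \oras$) are equivalent as $\SST$s need not be equivalent as $\SSTF$s: for instance, the one-register updates $x \mapsto x\exte$ and $x \mapsto \exte x$ both compute $w \mapsto \exte^{|w|}$ when $\exte$ is read as a letter, but as $\SSTF$s they produce $\exte(w[1{:}1])\cdots\exte(w)$ and $\exte(w)\cdots\exte(w[1{:}1])$ respectively. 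What you actually need from the black box is preservation of the \emph{positions} at which each occurrence of $\exte$ enters a register (an ``origin'' property), and this is not part of the statement of the cited results; verifying it amounts to redoing their constructions, which is the content of the paper's appendix. The paper's own proof avoids your subset construction altogether: it stores in registers the copyless substitutions applied along the deterministic alive branches of the forest of initial runs (a skeleton kept in the state plus boundedly many words kept in registers), composes them copylessly when branches die, and calls each external function at the same input position as the original $\NSSTF$ does --- which is exactly what makes the treatment of $\oras$ legitimate.
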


Let $\utrans = (A,B,Q, \regs, \oras, I, \Delta, \Lambda, F)$ be an unambiguous
transducer.  We follow the proof of \cite{alur2012regular}, which works for
$\SSTs$ (without external functions) over infinite strings.  Due to the
similarity with an existing proof, we adopt here a more informal style.

Without loss of generality, we assume that $\utrans$ is \emph{trim}, in the
sense that every state of its underlying automaton $(A,Q,I,\Delta,F)$ is both
reachable from $I$ and co-reachable from $F$.

\subparagraph*{Wrong proof idea.} A first idea is to determinize the transducer
with a subset construction, as we do for finite automata.  After reading a
prefix $u \in A^*$, the $\SSTF$ will keep track of the set $Q_u \subseteq Q$ of
states that can be reached after reading $u$.  Formally, we have $p \in Q_u$
if and only if there exists an initial run labelled by $u$ that ends in $p$.

Given an initial run $\rho$, we denote by $\trans^{\rho}(x)$ "the value of $x$
after following run $\rho$", defined for deterministic $\SSTF$ by composing the
substitutions along the transitions of the run.

Since $\utrans$ is unambiguous, we deduce that for all $p \in Q_u$, there exists
\emph{exactly one initial run labelled by $u$ that ends in $p$}, that we denote
$\rho_p$.  
In order to keep the whole behavior of $\utrans$, we also store
$\trans^{\rho_p}(x)$ for all $x \in \regs$.  However, it is not possible to
update this information in a copyless manner, as shown in the example below.

\begin{example} 
  Assume that $Q_u = \{q\}$, the next letter is $a$, and we have $(q,a,q_1)
  \in \Delta$ and $(q,a,q_2) \in \Delta$.  Also suppose that $\regs = \{x\}$ and
  $\Lambda((q,a,q_1)) = \Lambda((q,a,q_2)) = x \mapsto x $.  Then our procedure
  is supposed to make two copies of $x$ for $\rho_{q_1}$ and $\rho_{q_2}$.
\end{example}

\subparagraph*{Storing substitutions along partial runs.} The wrong proof presented above gives a general idea for our construction. We now build a truly copyless $\SSTF$ $\srans$ that is equivalent to $\utrans$. Instead of storing the contents of the registers, $\srans$ shall keep in memory \emph{the copyless substitutions applied along partial runs of $\utrans$}. When a run branches (i.e. when a state has several successors) we create one new substitution for each new branch that appears. This way, we shall avoid the copy issue that appeared in the former wrong proof. Due to non-ambiguity, we only need to store a finite number of substitutions.

\begin{figure}[h!]

\begin{center}
\begin{tikzpicture}[scale=1]

        \fill[red!20] (0.2,-0.6) -- (0.2,-0.4) -- (3.8,-0.4) -- (3.8,-0.6) -- cycle;
        		\fill[red!20] (4.2,-0.6) -- (4.2,-0.4) -- (5.8,-0.4) -- (5.8,-0.6) -- cycle;
		\fill[red!20]  (4.2,-0.65) -- (4.2,-0.85) -- (4.8,-1.1) -- (5.8,-1.1) -- (5.8,-0.9) -- (4.8,-0.9) -- cycle;
        
        \fill[red!20] (0.2,-1.9) -- (0.2,-2.1) -- (0.8,-2.1) -- (0.8,-1.9) -- cycle;
        		\fill[red!20] (1.2,-1.9) -- (1.2,-2.1) -- (5.8,-2.1) -- (5.8,-1.9) -- cycle;
		\fill[red!20]  (1.2,-2.15) -- (1.2,-2.35) -- (1.8,-2.6) -- (5.8,-2.6) -- (5.8,-2.4) -- (1.8,-2.4) -- cycle;
		
	\node (in) at (-2.2,-0.3) [above]{ Input word};

	\node (in) at (-2.2,-1.5) [above]{Forest of};
	\node (in) at (-2.2,-1.9) [above]{initial runs};

	\node (in) at (0,-0.5) []{ $\bullet$};
	\draw[->,thick](0.2,-0.5) to (0.8,-0.5);	
	\node (in) at (1,-0.5) []{ $\bullet$};
	\draw[->,thick](1.2,-0.5) to (1.8,-0.5);	
	\node (in) at (2,-0.5) []{ $\bullet$};
	\draw[->,thick](2.2,-0.5) to (2.8,-0.5);		
	\node (in) at (3,-0.5) []{ $\bullet$};
	\draw[->,thick](3.2,-0.5) to (3.8,-0.5);		
	\node (in) at (4,-0.5) []{ $\bullet$};
	\draw[->,thick](4.2,-0.5) to (4.8,-0.5);			
	\node (in) at (5,-0.5) []{ $\bullet$};
	\draw[->,thick](5.2,-0.5) to (5.8,-0.5);	
	\node (in) at (6,-0.5) []{ $\bullet$};
	
		\draw[->,thick,gray](1.2,-0.75) to (1.8,-1);		
		\node[gray] (in) at (2,-1) []{  $\bullet$};
		\draw[->,thick,gray](2.2,-1) to (2.8,-1);		
		\node[gray] (in) at (3,-1) []{  $\bullet$};
		
		\draw[->,thick](4.2,-0.75) to (4.8,-1);		
		\node (in) at (5,-1) []{  $\bullet$};
		\draw[->,thick](5.2,-1) to (5.8,-1);		
		\node (in) at (6,-1) []{  $\bullet$};

	\node (in) at (0,-2) []{ $\bullet$};
	\draw[->,thick](0.2,-2) to (0.8,-2);	
	\node (in) at (1,-2) []{ $\bullet$};
	\draw[->,thick](1.2,-2) to (1.8,-2);	
	\node (in) at (2,-2) []{ $\bullet$};
	\draw[->,thick](2.2,-2) to (2.8,-2);		
	\node (in) at (3,-2) []{ $\bullet$};
	\draw[->,thick](3.2,-2) to (3.8,-2);		
	\node (in) at (4,-2) []{ $\bullet$};
	\draw[->,thick](4.2,-2) to (4.8,-2);			
	\node (in) at (5,-2) []{ $\bullet$};
	\draw[->,thick](5.2,-2) to (5.8,-2);	
	\node (in) at (6,-2) []{ $\bullet$};
	
		\draw[->,thick,gray](1.2,-1.75) to (1.8,-1.5);		
		\node[gray] (in) at (2,-1.5) []{  $\bullet$};
		\draw[->,thick,gray](2.2,-1.5) to (2.8,-1.5);		
		\node[gray] (in) at (3,-1.5) []{  $\bullet$};
		
		\draw[->,thick,gray](4.2,-1.75) to (4.8,-1.5);		
		\node[gray] (in) at (5,-1.5) []{  $\bullet$};

		\draw[->,thick](1.2,-2.25) to (1.8,-2.5);		
		\node (in) at (2,-2.5) []{  $\bullet$};
		\draw[->,thick](2.2,-2.5) to (2.8,-2.5);		
		\node (in) at (3,-2.5) []{  $\bullet$};
		\draw[->,thick](3.2,-2.5) to (3.8,-2.5);		
		\node (in) at (4,-2.5) []{  $\bullet$};
		\draw[->,thick](4.2,-2.5) to (4.8,-2.5);	
		\node (in) at (5,-2.5) []{  $\bullet$};
		\draw[->,thick](5.2,-2.5) to (5.8,-2.5);		
		\node (in) at (6,-2.5) []{  $\bullet$};	
		
			\draw[->,thick,gray](2.2,-2.75) to (2.8,-3);		
			\node[gray] (in) at (3,-3) []{  $\bullet$};
			\draw[->,thick,gray](3.2,-3) to (3.8,-3);		
			\node[gray] (in) at (4,-3) []{  $\bullet$};
			\draw[->,thick,gray](4.2,-3) to (4.8,-3);		
			\node[gray] (in) at (5,-3) []{  $\bullet$};

	\node[above] (in) at (1,-0.2) []{  $a$};
	\node[above] (in) at (2,-0.2) []{  $b$};
	\node[above] (in) at (3,-0.2) []{  $a$};
	\node[above] (in) at (4,-0.2) []{  $b$};
	\node[above] (in) at (5,-0.2) []{  $b$};
	\node[above] (in) at (6,-0.2) []{  $b$};

       \end{tikzpicture}
\end{center}
       \caption{\label{fig:branches} Example of forest  on input $ababbb$.}
\end{figure}

This idea is better explained with a picture.  In Figure \ref{fig:branches}, we
have drawn the \emph{forest of initial runs} of $\utrans$ on $u = ababbb$.
This forest describes all the initial runs labelled by $u$.  States are depicted
with $\bullet$, and the transition relation by $\rightarrow$.  The leftmost
states correspond to the initial states (here $|I| = 2$), and the rightmost
describe the set $Q_u$ (here $|Q_u| = 4$).

Several subtrees of the forest are drawn in gray, we call them the \emph{dead
subtrees}.  They correspond to the partial runs that got blocked at some point
("wrong guesses")
and they are no longer useful.
When removing the dead subtrees from the forest of initial runs, we obtain the \emph{alive forest} drawn in black. The \emph{deterministic alive branches} of the alive forest are overlined in red. They correspond to the longest portions of runs that do not split in several branches. In our example, there are $6$ such deterministic branches. Due to unambiguity, the alive forest has only a bounded number of deterministic alive branches.

Thus $\srans$ will keep track of the following information:
\begin{itemize}
\item the substitutions applied along the deterministic alive branches (in the sense of the above $\lambda_{u}^{v}$ for deterministic machines). These substitutions correspond to compositions of copyless substitutions, hence they are copyless;
\item the general shape of the alive forest, i.e. how the deterministic alive branches are connected to each other, and what are the accessible states $Q_u$. This is a finite information that can be stored in a state.
\end{itemize}

\subparagraph*{Storing copyless substitutions.} We now explain how $\srans$ will "keep in memory" copyless substitutions, while being itself copyless. Let $s \in \subst{\regs}{B}$ be copyless, it can be described as:
\begin{itemize}
\item a function $\ske_s \colon \regs \rightarrow \regs^*$ describing where each
register is used in $s$.  Formally, $\ske_s \colon x \mapsto \mu(s(x))$ where
$\mu\colon (\regs \cup B)^* \rightarrow \regs^*$ is the morphism erasing the
letters.  Note that there is only a finite number of possible skeletons for
copyless substitutions;

\item a function $\beg_s \colon \regs \rightarrow B^*$ which maps $x$ to the 
word added "at the beginning" of the new $x$. Formally, it corresponds to the longest prefix of $s(x)$ that is in $B^*$;

\item a function $\fol_s \colon \regs \rightarrow B^*$ which maps $x$ to the
word added "after" the old $x$.  Indeed, given $x \in \regs$, there exists at
most one $y$ such that $x$ occurs in $s(y)$.  Then we define $\fol_s(x)$ as the
longest word in $B^*$ such that $s(y)= u x \fol_s(y) v$ for $u,v \in
(B \cup \regs)^*$.
\end{itemize}

\begin{example} \label{ex:subi} If $\regs = \{x,y\}$, let $s_1 = x \mapsto a, y \mapsto bxyc$ and $s_2:= x \mapsto yd, y \mapsto x$. We have:
\begin{itemize}
\item  $\ske_{s_1} = x \mapsto \vide, y \mapsto xy$ and $\beg_{s_1} = x \mapsto a, y \mapsto b$ and $\fol_{s_1} = x \mapsto \vide, y \mapsto c $;
\item $\ske_{s_2} = x \mapsto y, y \mapsto x$ and $\beg_{s_2} = x \mapsto \vide, y \mapsto \vide$ and $\fol_{s_2} = x \mapsto \vide, y \mapsto d $.
\end{itemize}
\end{example}

In other words, a copyless substitution can always be described by a bounded information ($\ske_s$) plus a finite number of words ($\beg_s(x)$ and $\fol_s(x)$ for $x \in \regs$). We shall thus keep $\ske_s$ in the finite memory (states) of $\srans$ and store the $\beg_s(x)$ and $\fol_s(x)$ for $x \in \regs$ in several registers of $\srans$.

With this representation, $\srans$ is able to perform a "virtual composition" of substitutions, by generalizing the example below. Furthermore, we claim that these virtual compositions can be performed \emph{in a copyless way} (with respect to the registers of $\srans$).

\begin{example} \label{ex:comp:ske} Following Example \ref{ex:subi}, we have $s_1 \circ s_2 = x \mapsto bxycd, y \mapsto a$ and:
\begin{itemize}
\item $\beg_{s_1 \circ s_2}(x) = \beg_{s_2}(x)  \beg_{s_1}(y) = \vide b = b$;
\item $\beg_{s_1 \circ s_2}(y) = \beg_{s_2}(y) \beg_{s_1}(x) \fol_{s_2}(x)= \vide a \vide = a$;
\item $\fol_{s_1 \circ s_2}(x) =\fol_{s_1}(x) = \vide$;
\item $\fol_{s_1 \circ s_2}(y) = \fol_{s_1}(y) \fol_{s_2}(y)= cd $;
\item $\ske_{s_1 \circ s_2}(x) = xy$ and $\ske_{s_1 \circ s_2}(y) = \vide$.
\end{itemize}

\end{example}

\subparagraph*{Updates of $\srans$.} 
Now that the representation of our abstraction by $\srans$ is understood, we 
explain how it is updated when reading a letter.  We shall follow the example given
by Figure \ref{fig:composition} to detail the evolution of the alive forest (the
notations are the same as in Figure \ref{fig:branches}).

Subfigure \ref{sub:a} presents the forest of initial runs after reading some
input $u \in A^*$.  When reading a new letter $a$, $\srans$ first computes the
successors of $Q_u$ in $\utrans$, and the substitutions applied along these
transitions.  These informations are depicted in Subfigure \ref{sub:b}: the upper
state had one successor but not the lower one.  Then $\srans$ notes that the
lower subtree is now dead: it can discard all the informations that concern it,
as shown in Subfigure \ref{sub:c}.  Now the alive forest only has $1$
deterministic branch, but $\srans$ still has $3$ substitutions.  In Subfigure
\ref{sub:d}, $\srans$ has composed these substitutions in order to keep a single
one in memory.  This "virtual" composition is implemented by updating the
registers that contain the $\beg$ and $\fol$ information, and the $\ske$, as
shown in Example \ref{ex:comp:ske}.

\begin{figure}[h!]
\begin{center}

	\begin{subfigure}[b]{0.4\textwidth}
	
	\begin{tikzpicture}[scale=1]

        \fill[red!20] (3.2,-0.6) -- (3.2,-0.4) -- (3.8,-0.4) -- (3.8,-0.6) -- cycle;
        		\fill[red!20] (4.2,-0.6) -- (4.2,-0.4) -- (5.8,-0.4) -- (5.8,-0.6) -- cycle;
		\fill[red!20]  (4.2,-0.65) -- (4.2,-0.85) -- (4.8,-1.1) -- (5.8,-1.1) -- (5.8,-0.9) -- (4.8,-0.9) -- cycle;
    	
	\node (in) at (3,-0.5) []{ $\bullet$};
	\draw[->,thick](3.2,-0.5) to (3.8,-0.5);		
	\node (in) at (4,-0.5) []{ $\bullet$};
	\draw[->,thick](4.2,-0.5) to (4.8,-0.5);			
	\node (in) at (5,-0.5) []{ $\bullet$};
	\draw[->,thick](5.2,-0.5) to (5.8,-0.5);	
	\node (in) at (6,-0.5) []{ $\bullet$};

		\draw[->,thick](4.2,-0.75) to (4.8,-1);		
		\node (in) at (5,-1) []{  $\bullet$};
		\draw[->,thick](5.2,-1) to (5.8,-1);		
		\node (in) at (6,-1) []{  $\bullet$};

       \end{tikzpicture}
			
	\caption{\label{sub:a} Forest after reading $u$}   
	
	\end{subfigure}
	~~~
	\begin{subfigure}[b]{0.4\textwidth}
	
	\begin{tikzpicture}[scale=1]

        \fill[red!20] (3.2,-0.6) -- (3.2,-0.4) -- (3.8,-0.4) -- (3.8,-0.6) -- cycle;
        		\fill[red!20] (4.2,-0.6) -- (4.2,-0.4) -- (5.8,-0.4) -- (5.8,-0.6) -- cycle;
		\fill[red!20] (6.2,-0.6) -- (6.2,-0.4) -- (6.8,-0.4) -- (6.8,-0.6) -- cycle;
		\fill[red!20]  (4.2,-0.65) -- (4.2,-0.85) -- (4.8,-1.1) -- (5.8,-1.1) -- (5.8,-0.9) -- (4.8,-0.9) -- cycle;
    	
	\node (in) at (3,-0.5) []{ $\bullet$};
	\draw[->,thick](3.2,-0.5) to (3.8,-0.5);		
	\node (in) at (4,-0.5) []{ $\bullet$};
	\draw[->,thick](4.2,-0.5) to (4.8,-0.5);			
	\node (in) at (5,-0.5) []{ $\bullet$};
	\draw[->,thick](5.2,-0.5) to (5.8,-0.5);	
	\node (in) at (6,-0.5) []{ $\bullet$};
	\draw[->,thick](6.2,-0.5) to (6.8,-0.5);	
	\node (in) at (7,-0.5) []{ $\bullet$};

		\draw[->,thick](4.2,-0.75) to (4.8,-1);		
		\node (in) at (5,-1) []{  $\bullet$};
		\draw[->,thick](5.2,-1) to (5.8,-1);		
		\node (in) at (6,-1) []{  $\bullet$};

       \end{tikzpicture}
			
	\caption{\label{sub:b} The successors of $Q_u$ are computed}   
	
	\end{subfigure}
	
	\vspace*{0.6cm}

	\begin{subfigure}[b]{0.4\textwidth}
	
	\begin{tikzpicture}[scale=1]

        \fill[red!20] (3.2,-0.6) -- (3.2,-0.4) -- (3.8,-0.4) -- (3.8,-0.6) -- cycle;
        		\fill[red!20] (4.2,-0.6) -- (4.2,-0.4) -- (5.8,-0.4) -- (5.8,-0.6) -- cycle;
		\fill[red!20] (6.2,-0.6) -- (6.2,-0.4) -- (6.8,-0.4) -- (6.8,-0.6) -- cycle;

	\node (in) at (3,-0.5) []{ $\bullet$};
	\draw[->,thick](3.2,-0.5) to (3.8,-0.5);		
	\node (in) at (4,-0.5) []{ $\bullet$};
	\draw[->,thick](4.2,-0.5) to (4.8,-0.5);			
	\node (in) at (5,-0.5) []{ $\bullet$};
	\draw[->,thick](5.2,-0.5) to (5.8,-0.5);	
	\node (in) at (6,-0.5) []{ $\bullet$};
	\draw[->,thick](6.2,-0.5) to (6.8,-0.5);	
	\node (in) at (7,-0.5) []{ $\bullet$};

		\draw[->,thick,gray](4.2,-0.75) to (4.8,-1);		
		\node[gray] (in) at (5,-1) []{  $\bullet$};
		\draw[->,thick,gray](5.2,-1) to (5.8,-1);		
		\node[gray] (in) at (6,-1) []{  $\bullet$};

       \end{tikzpicture}
			
	\caption{\label{sub:c} Dead subtrees are discarded}   
	
	\end{subfigure}
	~~~
	\begin{subfigure}[b]{0.4\textwidth}
	
	\begin{tikzpicture}[scale=1]

        \fill[red!20] (3.2,-0.6) -- (3.2,-0.4) -- (6.8,-0.4) -- (6.8,-0.6) -- cycle;
       
	\node (in) at (3,-0.5) []{ $\bullet$};
	\draw[->,thick](3.2,-0.5) to (3.8,-0.5);		
	\node (in) at (4,-0.5) []{ $\bullet$};
	\draw[->,thick](4.2,-0.5) to (4.8,-0.5);			
	\node (in) at (5,-0.5) []{ $\bullet$};
	\draw[->,thick](5.2,-0.5) to (5.8,-0.5);	
	\node (in) at (6,-0.5) []{ $\bullet$};
	\draw[->,thick](6.2,-0.5) to (6.8,-0.5);	
	\node (in) at (7,-0.5) []{ $\bullet$};

		\draw[->,thick,gray](4.2,-0.75) to (4.8,-1);		
		\node[gray] (in) at (5,-1) []{  $\bullet$};
		\draw[->,thick,gray](5.2,-1) to (5.8,-1);		
		\node[gray] (in) at (6,-1) []{  $\bullet$};

       \end{tikzpicture}
			
	\caption{\label{sub:d} Substitutions along alive deterministic branches are composed}   
	
	\end{subfigure}

\end{center}
       \caption{\label{fig:composition} Computing the alive deterministic branches after reading $a$ of input $ua$.}
\end{figure}
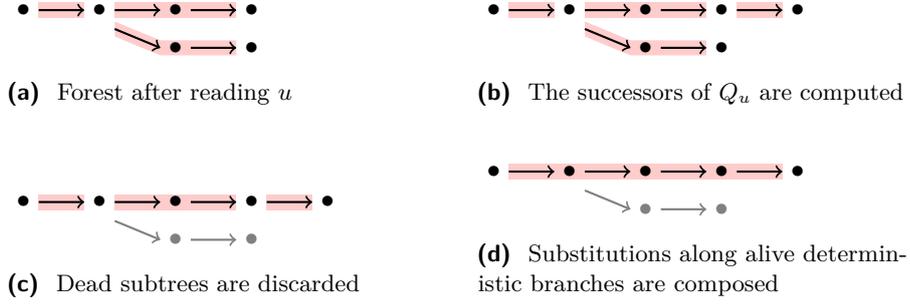

\subparagraph*{Output of $\srans$.}  
Once $\srans$ has read its whole input $w \in A^*$, there is exactly one state
$q \in Q_{w}$ which is final, due to unambiguity.  In its output function,
$\srans$ computes the contents of the registers after the single accepting run
of $\utrans$ (this is done by composing the substitutions of the deterministic
branches) and follows the output function of $\utrans$.

\subparagraph*{Use of external functions.} 
We did not discuss how external functions are used in the previous
constructions.  In fact, we treat a function name $\exte \in \oras$ as any
letter letter $b \in B$.  The machine $\srans$ calls it when it is used in the
new substitution that appears in the end of an alive branch.  The calls are done
at the same position in $\utrans$ and in $\srans$ (in some sense, we preserve
the "origin semantics" of the calls), hence they return the same values.

This idea
is detailed in the example below.

\begin{example} Assume that $\regs = \{x\}$ and $\oras = \{\exte, \exteg\}$. After reading $u \in A^*$, we suppose that only one deterministic branch is alive, which goes from an initial state to $\{q\} = Q_u$. The substitution applied along this branch is $s = x \mapsto \beg_s(x)x\fol_s(x)$. The machine $\srans$ keeps it in two registers $\beg$ and $\fol$ (together with the information $\ske_s$).

Let $a \in A$ be the next letter of the input, and suppose that $(q,a,q') \in
\Delta$ is the only outgoing transition from $q$ labelled by $a$.  Then after
reading $a$ we have $Q_{ua} = \{q'\}$, and the alive forest still contains only
one deterministic branch.  The substitution applied along this branch (that
$\srans$ has to compute) is now $s' := s \circ \lambda_{ua}$.

Assume that $\Lambda(q,a,q') = x \mapsto \exte x b\exteg$ with $b \in B$. Then $s' =  x \mapsto \exte(ua) \beg_{s}(x) x \fol_s(x) b \exteg(ua)$. Therefore $\srans$ performs the copyless updates $\beg \mapsto \exte \beg$ and $\fol \mapsto \fol  b \exteg$. The updates would be the same if $\exte$ and $\exteg$ were letters from $B$ instead of function names from $\oras$.
\end{example}

\end{document}